%%%%%%%%%%%%%%%%%%%%%%%%%%%%%%%%%%%%%%%%%%%%%%%%%%%%%%%%%
% %%%%%%%%%%%%%%% ARXIV MAIN FILE BELOW %%%%%%%%%%%%%%% %
%%%%%%%%%%%%%%%%%%%%%%%%%%%%%%%%%%%%%%%%%%%%%%%%%%%%%%%%%

\documentclass[11pt]{article}

\usepackage[margin=1in,footskip=0.25in]{geometry}

\usepackage[colorlinks=true, linkcolor=red, urlcolor=blue, citecolor=gray]{hyperref}
\usepackage[dvipsnames,svgnames,table]{xcolor}

\usepackage{algorithm}
\usepackage{algorithmic}
\usepackage{amsfonts,amssymb,amsthm,amsmath}
\usepackage{bbm}
\usepackage{booktabs}
\usepackage{caption}
\usepackage{color}
\usepackage[capitalize, nameinlink, noabbrev]{cleveref}
\usepackage{commath}
\usepackage{enumitem}
\usepackage{float}
\usepackage{graphicx}
\usepackage{mathtools}
\usepackage{nicefrac}       % compact symbols for 1/2, etc.
\usepackage{subcaption}
\usepackage{xspace}
\usepackage{url}
\usepackage{breakcites}

\usepackage{microtype}

% Allow \Cref to work with \texorpdfstring, from tex.stackexchange.com/questions/504968
\usepackage{crossreftools}
\pdfstringdefDisableCommands{%
    \let\Cref\crtCref
    \let\cref\crtcref
}
% 

% Nice underlines from https://alexwlchan.net/2017/10/latex-underlines/
\usepackage{contour}
\usepackage[normalem]{ulem}

\contourlength{0.8pt}

% Packages that want to be loaded last
\usepackage{todonotes}

\makeatletter
\def\hlinewd#1{%
	\noalign{\ifnum0=`}\fi\hrule \@height #1 \futurelet
	\reserved@a\@xhline}
\makeatother

\newtheorem{theorem}{Theorem}

\newtheorem{conjecture}[theorem]{Conjecture}
\newtheorem{corollary}[theorem]{Corollary}
\newtheorem{lemma}[theorem]{Lemma}

\newtheorem{definition}[theorem]{Definition}
\newtheorem{setting}[theorem]{Setting}

\newtheorem{problem}[theorem]{Problem}

\newtheorem{importedtheorem}[theorem]{Imported Theorem}
\newtheorem{importedlemma}[theorem]{Imported Lemma}

\crefname{setting}{setting}{settings}
\Crefname{setting}{Setting}{Settings}
\crefname{problem}{problem}{problems}
\Crefname{problem}{Problem}{Problems}

% Repeated Theorems / Lemmas
\makeatletter
\newtheorem*{rep@theorem}{\rep@title}
\newcommand{\newreptheorem}[2]{%
\newenvironment{rep#1}[1]{%
 \def\rep@title{\Cref{##1} Restated}%
 \begin{rep@theorem}}%
 {\end{rep@theorem}}}
\makeatother
\makeatletter
\newtheorem*{rep@lemma}{\rep@title}
\newcommand{\newreplemma}[2]{%
\newenvironment{rep#1}[1]{%
 \def\rep@title{\Cref{##1} Restated}%
 \begin{rep@lemma}}%
 {\end{rep@lemma}}}
\makeatother
\newreptheorem{theorem}{Theorem}
\newreplemma{lemma}{Lemma}

%% Use \numberthis in align* environments to have numbered equations.

%% Musco^2 default latex definitions

% \newcommand{\comment}[1]{\text{\phantom{(#1)}} \tag{#1}}

% \newcommand{\eqdef}{\mathbin{\stackrel{\rm def}{=}}}
\newcommand{\defeq}[0]{\ensuremath{\;{\vcentcolon=}\;}\xspace}

\newcommand{\R}{\mathbb{R}}
\newcommand{\C}{\mathbb{C}}

\let\norm\relax
\newcommand{\norm}[1]{\enVert[0]{#1}}

\DeclareMathOperator{\poly}{poly}

\DeclareMathOperator*{\E}{\mathbb{E}}
\DeclareMathOperator{\tr}{tr}

%% Will`s latex definitions
\newcommand{\inner}[2]{\left\langle #1, #2 \right\rangle}

%% Annotations
\newcommand{\Raph}[1]{\textcolor{blue}{[Raphael: #1]}}

%% English

%% Rename \eps to have \varepsilon
\newcommand{\eps}[0]{\ensuremath{\varepsilon}}
\let\epsilon\eps

% Small fraction for use in big equations.
\newcommand{\tsfrac}[2]{{\textstyle\frac{#1}{#2}}}

% Indicator Function

% Local functions
\newcommand{\herm}{\textsf{\upshape H}} %

%%
%% LINEAR ALGEBRA NOTATION
%%

%% Basics
\let\abs\relax
\newcommand{\abs}[1]{\ensuremath{\vert{#1}\vert}\xspace} % Absolute Value
\newcommand{\mat}[1]{\mathbf{#1}} % Matrix
\renewcommand{\vec}[1]{\boldsymbol{\mathrm{#1}}} % Vector
\newcommand{\vecalt}[1]{\boldsymbol{#1}} % Vector, but better for greek letters
\newcommand{\tensor}[1]{\boldsymbol{\mathcal{#1}}} % Vector, but better for greek letters

%% Block Matrices
 % ( Parentheses Matrix )
\newcommand{\bmat}[1]{\begin{bmatrix} #1 \end{bmatrix}} % [   Bracket Matrix   ]
 % Small parenetheses matrix
\newcommand{\sbmat}[1]{\left[\begin{smallmatrix} #1 \end{smallmatrix}\right]} % Small bracket matrix

%%
%%
%%
%  JUST DEFINING LETTERS AND ALPHABETS BELOW
%%
%%
%%

%% Matrices
\newcommand{\mA}{\ensuremath{\mat{A}}\xspace}
\newcommand{\mB}{\ensuremath{\mat{B}}\xspace}
\newcommand{\mC}{\ensuremath{\mat{C}}\xspace}

\newcommand{\mG}{\ensuremath{\mat{G}}\xspace}

\newcommand{\mI}{\ensuremath{\mat{I}}\xspace}

\newcommand{\mP}{\ensuremath{\mat{P}}\xspace}

\newcommand{\mR}{\ensuremath{\mat{R}}\xspace}

\newcommand{\mU}{\ensuremath{\mat{U}}\xspace}
\newcommand{\mV}{\ensuremath{\mat{V}}\xspace}
\newcommand{\mW}{\ensuremath{\mat{W}}\xspace}
\newcommand{\mX}{\ensuremath{\mat{X}}\xspace}

\newcommand{\mZ}{\ensuremath{\mat{Z}}\xspace}

\newcommand{\mSigma}{\ensuremath{\mat{\Sigma}}\xspace}

% Vectors
\newcommand{\va}{\ensuremath{\vec{a}}\xspace}
\newcommand{\vb}{\ensuremath{\vec{b}}\xspace}
\newcommand{\vc}{\ensuremath{\vec{c}}\xspace}

\newcommand{\ve}{\ensuremath{\vec{e}}\xspace}

\newcommand{\vg}{\ensuremath{\vec{g}}\xspace}
\newcommand{\vh}{\ensuremath{\vec{h}}\xspace}

\newcommand{\vu}{\ensuremath{\vec{u}}\xspace}
\newcommand{\vv}{\ensuremath{\vec{v}}\xspace}
\newcommand{\vw}{\ensuremath{\vec{w}}\xspace}
\newcommand{\vx}{\ensuremath{\vec{x}}\xspace}

\newcommand{\vz}{\ensuremath{\vec{z}}\xspace}

\newcommand{\vtheta}{\ensuremath{\vecalt{\theta}}\xspace}

%% Cal Alphabets
\newcommand{\cA}{\ensuremath{{\mathcal A}}\xspace}

\newcommand{\cD}{\ensuremath{{\mathcal D}}\xspace}

\newcommand{\cF}{\ensuremath{{\mathcal F}}\xspace}

\newcommand{\cL}{\ensuremath{{\mathcal L}}\xspace}

\newcommand{\cN}{\ensuremath{{\mathcal N}}\xspace}

\newcommand{\cP}{\ensuremath{{\mathcal P}}\xspace}

\newcommand{\cV}{\ensuremath{{\mathcal V}}\xspace}

\newcommand{\cX}{\ensuremath{{\mathcal X}}\xspace}

\newcommand{\cZ}{\ensuremath{{\mathcal Z}}\xspace}

%% Black-board Bold Alphabets

\newcommand{\bbC}{\ensuremath{{\mathbb C}}\xspace}

\newcommand{\bbF}{\ensuremath{{\mathbb F}}\xspace}

\newcommand{\bbN}{\ensuremath{{\mathbb N}}\xspace}

\newcommand{\bbP}{\ensuremath{{\mathbb P}}\xspace}
\newcommand{\bbQ}{\ensuremath{{\mathbb Q}}\xspace}
\newcommand{\bbR}{\ensuremath{{\mathbb R}}\xspace}
\newcommand{\bbS}{\ensuremath{{\mathbb S}}\xspace}

%% Tensors
\newcommand{\tA}{\ensuremath{\tensor{A}}\xspace}

\newcommand{\alphabet}{\cL}
\newcommand{\algo}{\(\text{Algo}\)\xspace}

\title{
    Understanding the Kronecker Matrix-Vector Complexity of Linear Algebra
}

\author{
    Raphael A. Meyer \\ Caltech \\ \texttt{ram900@caltech.edu}
    \and
    William Swartworth \\ CMU \\ \texttt{wswartwo@andrew.cmu.edu}
    \and
    David P. Woodruff \\ CMU \\ \texttt{dwoodruf@cs.cmu.edu}
}

\setcounter{secnumdepth}{3}

\begin{document}
\maketitle

% Good
\begin{abstract}

    We study the computational model where we can access a matrix $\mathbf{A}$ only by computing matrix-vector products $\mathbf{A}\mathrm{x}$ for vectors of the form $\mathrm{x} = \mathrm{x}_1 \otimes \cdots \otimes \mathrm{x}_q$.
    We prove exponential lower bounds on the number of queries needed to estimate various properties, including the trace and the top eigenvalue of $\mathbf{A}$.
    Our proofs hold for all adaptive algorithms, modulo a mild conditioning assumption on the algorithm's queries.
    We further prove that algorithms whose queries come from a small alphabet (e.g., $\mathrm{x}_i \in \{\pm1\}^n$) cannot test if $\mathbf{A}$ is identically zero with polynomial complexity, despite the fact that a single query using Gaussian vectors solves the problem with probability 1.
    In steep contrast to the non-Kronecker case, this shows that sketching $\mathbf{A}$ with different distributions of the same subguassian norm can yield exponentially different query complexities.
    Our proofs follow from the observation that random vectors with Kronecker structure have exponentially smaller inner products than their non-Kronecker counterparts.

\end{abstract}

% Good

\section{Introduction}

Tensors have emerged as a canonical way to represent multi-modal or very high-dimensional datasets in areas ranging from quantum information science \cite{biamonte2019lectures} to medical imaging \cite{selvan2020tensor, sedighin2024tensor}.
Such applications often result in compact representations of tensors.
For instance, applications in quantum information theory use the so-called PEPS network or other compact tensor networks, while applications in partial differential equations often use tucker or tensor train decompositions. These applications overcome the curse of dimensionality by representing an underlying high dimensional linear operator as a network of a series of low dimensional tensors.
Abstractly, in these applications we are given an order \(2q\) tensor \(\tA \in (\bbR^{n})^{\otimes 2q}\) that represents a linear operator from \((\bbR^{n})^{\otimes q}\) to \((\bbR^n)^{\otimes q}\), and we often want to approximately compute some properties of this linear operator, such as its trace or spectral norm.

By appropriately reordering the entries of \tA, we can explicitly write down a matrix \(\mA \in \bbR^{n^q \times n^q}\) that describes this linear operator.
Our goal then becomes to estimate the trace, spectral sum, operator norm, or some other property of \mA.
However, since we may not know the structure of the underlying compact representation, we would like to estimate properties of \mA without explicitly forming \mA, as doing so would break our compact representation of \tA.
Instead we take advantage of our compact representation to efficiently and implicitly access \mA through linear measurements, such as the Kronecker matrix vector product:

\begin{definition}
	Let \(\mA\in\bbR^{n^q \times n^q}\).
	Then Kronecker Matrix-Vector Product Oracle is an oracle that, given \(\vx_1,\ldots,\vx_q\in\bbR^n\), returns \(\mA\vx\in\bbR^{n^q}\) where \(\vx = \otimes_{i=1}^q \vx_i\).
	Here, \(\otimes\) denotes the Kronecker product.
\end{definition}

For many different compact representations of \tA, it is possible to compute some compact representation of the Kronecker matrix-vector product \(\mA\vx\) efficiently \cite{lee2014fundamental,feldman2022entanglement}.
This is done in many algorithms and can go by different names, such as Khatri-Rao sketching or rank-one measurements.
However, these algorithms tend to make strong assumptions about the structure of \mA in order to achieve a polynomial runtime \cite{al2023randomized,li2017near,grasedyck2004existence} or obtain a worst-case runtime that is exponential in \(q\) \cite{meyer2023hutchinson,avron2014subspace,song2019relative}.
It has been unclear whether this exponential cost is unavoidable and what structure in \mA leads to this expensive runtime.
In this paper, we address this question by demonstrating explicit constructions of \mA that elicit these lower bounds.

Algorithms for fast tensor computations are well studied.
There is a large number of randomized algorithms that provide strong approximation guarantees to a tensor and are very efficient. 
Although not all in the Kronecker matrix-vector model, many such applications involve making linear measurements with a rank-one tensor, for which our techniques may apply.
Further, there is a body of work on lower bounds for tensor algorithms.
This work often focuses on complexity classes, for instance showing that computing the spectral norm of \tA is NP-Hard.
However, it is not clear how this relates to the number of Kronecker matrix-vector products it takes to estimate a property of \mA, which is the focus of our paper.
Relatively little research focuses on query complexity lower bounds for tensor computations.

In this paper, we leverage a novel observation about the orthogonality of random Kronecker-structured vectors in order to prove exponential lower bounds on the number of Kronecker matrix-vector products needed to approximately compute properties of \mA.
We show that any algorithm which can estimate the trace or spectral norm of \mA to even low accuracy must use a number of Kronecker matrix-vector products that is exponential in \(q\), modulo a mild assumption on the conditioning of the algorithm:
\begin{theorem}[Informal version of \Cref{thm:top-eig-lower-bound}]
    \label{thm:top-eig-lower-bound-informal}
    Any ``well-conditioned'' algorithm must compute \(t \geq C^q\) Kronecker matrix-vector products with \mA to return an estimate \(z \in (1\pm\frac12)\lambda_1(\mA)\) with probability at least \(\frac23\).
\end{theorem}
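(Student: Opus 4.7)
The plan is to prove the lower bound via a two-point hypothesis testing argument, constructing a pair of distributions over $\mA$ whose top eigenvalues are separated by more than a constant factor and then showing that any well-conditioned algorithm needs $\Omega(C^q)$ Kronecker queries to distinguish them with constant advantage. Let the null distribution place $\mA = 0$ (so $\lambda_1 = 0$), and let the alternative be $\mA = \vv\vv^\top$, where $\vv = \otimes_{i=1}^q \vv_i$ with each $\vv_i$ drawn independently and uniformly from $\sphere{n-1}$ (so $\lambda_1(\mA) = \|\vv\|^2 = 1$). Any algorithm whose output lies in $(1 \pm \tfrac12)\lambda_1(\mA)$ with probability $\geq 2/3$ must then distinguish the two worlds with advantage at least $1/3$, since the accepted regions $\{0\}$ and $[\tfrac12,\tfrac32]$ are disjoint.

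The heart of the argument is a concentration bound for Kronecker inner products. For any fixed query $\vx = \otimes_{i=1}^q \vx_i$ with unit factors, the response under the alternative is $\mA \vx = \langle \vv, \vx \rangle \vv = \bigl(\prod_{i=1}^q \langle \vv_i, \vx_i \rangle\bigr)\vv$. Since each $\vv_i$ is uniform on $\sphere{n-1}$, the factor $\langle \vv_i, \vx_i \rangle$ has typical magnitude $\Theta(1/\sqrt n)$ with sub-gaussian tails of parameter $O(1/\sqrt n)$. Concentration of the sum of independent log-magnitudes then yields $|\langle \vv, \vx \rangle| \leq n^{-\Omega(q)}$ with probability $1 - \exp(-\Omega(q))$, so the response vector has norm exponentially small in $q$ with overwhelming probability; this is the quantitative form of the ``Kronecker inner products are exponentially small'' phenomenon emphasized in the introduction.

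To lift this single-query estimate to an adaptive lower bound for $T$ queries, I would run a coupling argument between the algorithm's executions under the two distributions. Under the null, all responses are identically $0$, so the algorithm's $t$-th query is a deterministic function of its internal randomness. Invoking the well-conditioning assumption -- informally, that responses of norm below some small threshold $\tau$ are indistinguishable from $0$ under the algorithm's numerical operations -- the two executions remain coupled at step $t$ provided $\|\mA\vx^{(t)}\| \leq \tau$. Conditioned on the coupling having held, $\vx^{(t)}$ is fixed given the algorithm's randomness, so the concentration bound from the preceding paragraph applies to a fresh $\vx^{(t)}$ and the coupling survives step $t$ with probability $1 - \exp(-\Omega(q))$. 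A union bound over $T = C^q$ steps, with $C$ slightly smaller than the base of the exponential in the concentration tail, keeps the coupling intact with probability $\geq 2/3$, contradicting the assumed $2/3$ distinguishing success.

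The principal obstacle is defining ``well-conditioned'' in a way that both makes the coupling step rigorous and captures the algorithms one actually cares about. Without some such assumption, a single nonzero response $\mA \vx = \alpha \vv$ reveals the direction $\vv$ -- and hence the entire operator -- regardless of how tiny $\alpha$ is, so any honest lower bound must rule out this infinite-precision loophole, most likely by constraining either the condition number of the matrix of query vectors or the dynamic range the algorithm is allowed to exploit. Once the conditioning hypothesis is nailed down, the sub-gaussian product tail bound and the union bound over adaptive steps are fairly standard, and the novelty resides entirely in choosing $\vv$ itself to have Kronecker structure so that inner products factor into $q$ independent spherical marginals.
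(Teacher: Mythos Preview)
Your proposal has the right core observation---Kronecker inner products are exponentially small, and this is indeed the engine of the lower bound---but it diverges from the paper at the choice of null hypothesis, and this leads to a genuine gap.

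The paper does \emph{not} take $\mA_0 = \mat0$. Instead it takes $\mA_0 = \frac{1}{\sqrt D}\mW$ for a Wigner matrix $\mW$, and $\mA_1 = \frac{1}{\sqrt D}\mW + \lambda\vu\vu^\intercal$ with $\vu$ Kronecker-structured. The Gaussian noise is essential: it is what hides the exponentially small signal $\langle \vu,\vx\rangle$ \emph{information-theoretically}, so that the lower bound holds against exact, infinite-precision algorithms. With noise present, the paper's ``well-conditioned'' assumption means exactly that the matrix $\mV = [\vv^{(1)}\cdots\vv^{(t)}]$ of query vectors has bounded condition number---nothing about numerical thresholds. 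This assumption enters only because the information-theoretic machinery (adapted from Simchowitz et al.) requires orthonormalizing the queries, and the condition number controls how much that orthonormalization can inflate inner products with $\vu$ (see \Cref{lem:conditioning-to-ortho-inner-prod}).

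Your null $\mA_0 = \mat0$ forces you to reinterpret ``well-conditioned'' as a numerical-precision constraint, because otherwise a single query trivially distinguishes zero from nonzero. But that is not the paper's assumption, and under the paper's actual definition your argument collapses: an algorithm with perfectly conditioned (even orthonormal) Kronecker queries still sees an exactly-zero response under your null and an exactly-nonzero response under your alternative, so one query suffices. The fix is not to change the conditioning hypothesis but to add noise to the null, after which the argument becomes a chi-squared/total-variation calculation rather than a coupling, and the conditioning bound on $\mV$ is what lets you transfer the Kronecker concentration (\Cref{lem:kron-unit-vec-conentration}) from the raw queries to their orthonormalization.
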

\begin{theorem}[Informal version of \Cref{thm:trace-lower-bound}]
    \label{thm:trace-lower-bound-informal}
    Any ``well-conditioned'' algorithm must compute \(t \geq C^q\) Kronecker vector-matrix-vector products with \mA to return an estimate \(z \in (1\pm\frac12)\tr(\mA)\) with probability at least \(\frac23\).
\end{theorem}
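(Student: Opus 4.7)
The plan is a standard two-point lower bound. I would construct two distributions $\cD_0, \cD_1$ on $\mA \in \bbR^{n^q \times n^q}$ whose traces concentrate at values $T_0, T_1$ satisfying $T_1 > 3 T_0$, so that the factor-$\tfrac{1}{2}$ output intervals are disjoint and any correct estimator must discriminate between the two distributions. I then bound the KL divergence between the query transcripts under the two hypotheses, and use the chain rule together with Pinsker (or Le Cam's two-point method) to turn this into a query lower bound.

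For the hard distribution I take $\mA = t \mI_{n^q} + \mW$, where $t \in \{1, 5\}$ encodes the hypothesis and $\mW$ is a symmetric random matrix with independent $\pm \sigma$ Rademacher entries. Then $\tr(\mA) = t \cdot n^q + \tr(\mW)$, with $\tr(\mW)$ a Rademacher sum of standard deviation $O(\sigma \sqrt{n^q})$; choosing $\sigma^2 = c \cdot n^q$ for a small enough absolute constant $c$ keeps the trace within $o(n^q)$ of $t \cdot n^q$ with high probability, preserving the factor-$3$ trace gap. For a Kronecker vMv query $(\vy, \vx)$ that the well-conditioning hypothesis lets me assume satisfies $\|\vy\|, \|\vx\| \le 1$, the response splits as $t (\vy^\top \vx) + \vy^\top \mW \vx$. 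The signal separating the hypotheses, $(t_1 - t_0) \vy^\top \vx$, has magnitude $O(1)$ because the Kronecker factorization forces $|\vy^\top \vx| = \prod_s |\vy_s^\top \vx_s| \le 1$; the noise $\vy^\top \mW \vx$ is a Rademacher sum of variance $O(\sigma^2 \|\vy \vx^\top\|_F^2) = O(\sigma^2)$. A Gaussian-type comparison then gives per-query KL of order $O((t_1 - t_0)^2 / \sigma^2) = O(1/n^q)$, and the chain rule over $T$ adaptive queries yields total KL at most $O(T/n^q)$. Distinguishing with probability at least $\tfrac{2}{3}$ therefore forces $T = \Omega(n^q) = \Omega(C^q)$ with $C = n \ge 2$.

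The main technical obstacle is ensuring that the per-query KL bound actually holds uniformly over adaptive queries. After $T$ observations the conditional law of $\mW$ is no longer product-Rademacher, and in principle the noise variance for a future query could shrink as the algorithm ``learns'' $\mW$. I would address this by noting that each Kronecker vMv query reveals only a single scalar linear measurement of $\mW$, while $\mW$ carries $\Theta(n^{2q})$ independent coordinates; hence after $T \ll n^{2q}$ queries the posterior variance of $\vy^\top \mW \vx$ on a generic new Kronecker probe remains $\Omega(\sigma^2)$, which is more than enough slack because the target bound $C^q$ is far below $n^{2q}$. A cleaner alternative I would try in parallel is to marginalize $\mW$ out at the start and apply Le Cam's two-point method directly to the joint transcript distribution under $\cD_0$ versus $\cD_1$, sidestepping the adaptive KL bookkeeping entirely at the cost of a slightly different second-moment calculation.
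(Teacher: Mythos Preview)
Your construction has a concrete bug, and even once patched it proves the wrong statement.

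\textbf{Rademacher noise fails immediately.} Query with the Kronecker vector $\vy=\vx=\ve_1\otimes\cdots\otimes\ve_1$. The response is $t+W_{11}\in\{t-\sigma,\,t+\sigma\}$; across the two hypotheses $t\in\{1,5\}$ these four values are distinct (your $\sigma=\sqrt{c\,n^q}\neq 2$), so a single query identifies $t$ exactly. A ``Gaussian-type comparison'' does not rescue this: the KL between shifted discrete distributions with disjoint support is infinite.

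\textbf{Even with Gaussian $\mW$, your argument never uses the Kronecker constraint and the instance is not PSD.} The only place you invoke Kronecker structure is the bound $|\vy^\intercal\vx|\le 1$, which holds for arbitrary unit vectors; every other step goes through verbatim for unrestricted queries. So what you are actually proving is a non-Kronecker lower bound for \emph{indefinite} matrices --- a known phenomenon, since relative-error trace estimation is hard without PSDness regardless of query structure. The formal statement (\Cref{thm:trace-lower-bound}) is about PSD $\mA$, and there the Kronecker restriction is essential: Hutchinson with unrestricted queries solves constant-factor PSD trace estimation in $O(1)$ products, so any valid PSD lower bound must exploit the Kronecker constraint in a way yours does not. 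Nor can your instance be made PSD: with $\sigma^2=c\,n^q$ one has $\norm{\mW}_2=\Theta(n^q)\gg t$, and any shift large enough to restore PSDness inflates the trace to order $n^{3q/2}$, erasing the factor-$3$ gap between the hypotheses.

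The paper takes a genuinely different route. It sets $\mA=\va\va^\intercal$, which is rank-one PSD, so that a vector-matrix-vector query returns $\langle\vv,\va\rangle^2$ and trace estimation reduces to $\ell_2$ estimation of $\va$. The two hypotheses are $\va_0=\vg$ versus $\va_1=\vg+\lambda\vu$ with $\vu=\vu_1\otimes\cdots\otimes\vu_q$ a product of independent random unit vectors. The decisive structural fact (\Cref{lem:kron-unit-vec-conentration}) is that any fixed Kronecker unit vector $\vv$ satisfies $\langle\vv,\vu\rangle^2\le C_\tau^{-q}/n^q$ with probability at least $1-C_0^{-q}$: Kronecker queries are \emph{exponentially} more orthogonal to a random Kronecker direction than the generic $1/n^q$ one would expect. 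This is exactly where the Kronecker restriction enters, and it is what drives the exponential lower bound. Adaptivity and the $\kappa$-conditioning hypothesis (which controls the passage from the raw queries to their orthonormalization, not the query norms as you use it) are handled through the truncated second-moment framework of Simchowitz et~al.
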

This explains why methods such as Kronecker JL and Kronecker Hutchinson require exponential sketching dimension as observed by several prior works \cite{meyer2023hutchinson, ahle2019almost}.
Phrased another way, this analysis explains why the Kronecker matrix-vector complexity of linear algebra problems is exponentially higher than the classical (non-Kronecker) matrix-vector complexity.

Our core orthogonality observation also implies another gap between Kronecker and non-Kronecker matrix-vector complexities.
We show that for the zero testing problem, there is an exponential gap between sketching with the Kronecker product of Gaussian vectors versus Rademacher vectors.
It suffices to using a single query with the Kronecker of Gaussian vectors to test if \mA is zero, but it takes \(\Theta(2^q)\) queries with Rademacher vectors.
\begin{theorem}[Special case of \Cref{thm:alphabet_lower_bounds}]
    For any Kronecker matrix-vector algorithm whose query vectors \(\vv = \otimes_{i=1}^q \vv_i\) are built from the Rademacher alphabet \(\vv_i \in \{\pm1\}^n\), it is necessary and sufficient to use \(\Theta(2^q)\) to test if \(\mA = \mat0\) or if \(\mA \neq \mat0\).
\end{theorem}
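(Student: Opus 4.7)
We obtain matching upper and lower bounds of $2^q$, specializing to $n = 2$ where the two coincide.

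\textbf{Upper bound (sufficiency).} For $n = 2$, a deterministic algorithm using exactly $2^q$ queries suffices. Take $\vv_i^{(0)} = (1, 1)^T$ and $\vv_i^{(1)} = (1, -1)^T$, and query $\vv^{(S)} = \otimes_{i=1}^q \vv_i^{(S_i)}$ for each $S \in \{0, 1\}^q$. Arranging these $2^q$ vectors as the columns of $V \in \bbR^{2^q \times 2^q}$ produces the Sylvester--Hadamard matrix, and $V V^T = 2^q \mI$. Hence
\[
    \sum_{S \in \{0,1\}^q} \norm{\mA \vv^{(S)}}^2 \;=\; \tr(\mA V V^T \mA^T) \;=\; 2^q \norm{\mA}_F^2,
\]
which vanishes iff $\mA = \vzero$. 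The algorithm returns ``$\mA \neq \vzero$'' whenever some response is nonzero, and is always correct.

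\textbf{Lower bound (necessity).} Take the prior $\cD$ that draws $\mA = \vr \vs^T$ for independent uniform Kronecker-Rademacher vectors $\vr, \vs$; under $\cD$, $\mA \neq \vzero$ almost surely. For any Kronecker-Rademacher query $\vv = \otimes_i \vv_i$,
\[
    \mA \vv \;=\; \vr \cdot \prod_{i=1}^q \vs_i^T \vv_i,
\]
which vanishes iff some $\vs_i^T \vv_i = 0$. For uniform $\vs_i \in \{\pm 1\}^2$ and any fixed $\vv_i$, $\Pr[\vs_i^T \vv_i = 0] = \tfrac{1}{2}$, so $\Pr[\mA \vv \neq \vzero] = 2^{-q}$ per query. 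To accommodate adaptivity, I couple the $\cD$-world with the null world ($\mA = \vzero$) under the same random tape $R$. Let $\vu^{(k)}(R)$ denote the $k$-th query assuming all previous responses were $\vzero$---a deterministic function of $R$ alone. The two worlds coincide until the first nonzero response in the $\cD$-run, so by induction the event ``$\mA \vv^{(k)} \neq \vzero$ for some $k \leq t$'' equals ``$\vs^T \vu^{(k)}(R) \neq 0$ for some $k \leq t$''. A union bound gives
\[
    \Pr_{R, \vr, \vs}\!\left[\exists k \leq t : \mA \vv^{(k)} \neq \vzero\right] \;\leq\; t \cdot 2^{-q},
\]
so for $t \leq 2^q / 6$ the algorithm sees only zero responses with probability at least $5/6$, and cannot distinguish $\cD$ from $\mA = \vzero$ with advantage exceeding $2/3$.

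\textbf{Main obstacle.} The upper bound reduces to the Hadamard orthogonality identity, which is short. The subtlety in the lower bound is adaptivity: $\vu^{(k)}(R)$ is deterministic in $R$ only under the hypothetical ``all responses zero so far'' history, but the coupling makes this exactly the branch of the execution we need to track. The per-factor probability $\tfrac{1}{2}$ is crucially the $n = 2$ value; for larger $n$, $\Pr[\vs_i^T \vv_i = 0] = \binom{n}{n/2}/2^n \sim 1/\sqrt{n}$, so this particular hard instance gives only a subexponential-in-$q$ lower bound, and matching the $2^q$ rate for general $n$ would require a different construction.
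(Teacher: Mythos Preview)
Your proof is correct and shares the paper's core idea: a rank-one Kronecker-structured hard instance so that each query is orthogonal with probability $1/2$ per mode, then a union bound. Two differences are worth noting. For the upper bound, you give a deterministic Hadamard construction (exact count, always correct), while the paper uses i.i.d.\ random Rademacher queries, each nonzero with probability at least $2^{-q}$; the randomized version extends to arbitrary $n$ at the cost of only a probabilistic guarantee, whereas your Hadamard argument is tied to $n=2$. For the lower bound, your hard distribution takes $\vs_i$ uniform over all of $\{\pm1\}^2$, while the paper uses the two-point set $\{(1,1),(1,-1)\}$; these coincide in effect for $n=2$. Your coupling treatment of adaptivity is more explicit than the paper's one-line union bound.

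One correction to your closing remark: for larger $n$ your Rademacher instance still gives an exponential-in-$q$ lower bound, just with base $(1-\Theta(1/\sqrt n))^{-1}$ rather than $2$; calling it ``subexponential'' is inaccurate. More to the point, the paper does supply a construction for general $n$ (item~2 of \Cref{thm:alphabet_lower_bounds}): the hard vector has support size two with entries $+1,-1$, which forces $\Pr[\vv_i^\intercal\vu=0]\geq \frac{1}{2}\cdot\frac{n-2}{n-1}$ for any $\vu\in\{\pm1\}^n$ and hence a lower bound of roughly $(2-2/n)^q$, approaching $2^q$ as $n$ grows. So ``a different construction'' is indeed needed, and the paper provides one.
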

The difference between using ``small alphabets'' (e.g., Rademacher vectors) and ``large alphabets'' (e.g., Gaussian vectors) almost never asymptotically matters in the non-Kronecker case, where we expect that all algorithms that use subgaussian variables achieve the same asymptotic performance.
In contrast, we demonstrate that \emph{having subgaussianity does not suffice to understand the complexity of Kronecker matrix-vector algorithms}.
Analogously, we show that there can also be a gap between using complex-valued and real-valued queries, which again does not typically matter in the non-Kronecker case.
As a byproduct of our analysis, we prove that an algorithm of \cite{meyer2023hutchinson} has a near-optimal sample complexity for trace estimation.

Broadly, our analysis reveals new insights on the fundamental complexity of linear algebra in the Kronecker matrix-vector model.
We show that basic linear algebra tasks must incur an exponential sample complexity in the worst case.
So, if we wish to have faster algorithms then we need to assume that \mA has some structure that avoids the worst-case structure imposed by these lower bounds.
Further, we show that when designing randomized algorithms for the Kronecker matrix-vector model, it is important to examine our base random variables more closely that we may in the non-Kronecker case, as the choice of two similar variables like Rademachers and Gaussians may incur an additional exponential cost.

The rest of the paper is structured as follows:
we first discuss related work.
In \Cref{sec:prelims} we introduce notation.
In \Cref{sec:overview} we explain our theorem statements in more detail.
In \Cref{sec:conditioned} we prove our lower bounds on trace estimation and spectral norm approximation against all Kronecker matrix-vector algorithms.
In \Cref{sec:zero-testing} we prove our lower bounds against small alphabet algorithms for the zero testing problem.

\subsection{Related Work}

Tensors have a long history of study in the sketching literature, particularly for the problem of $\ell_2$ norm estimation \cite{ahle2020oblivious, ahle2019almost, pham2013fast}.
Previously \cite{ahle2019almost} observed that typical Kronecker-structured $\ell_2$ embeddings cannot work with fewer than exponential measurements in the number of modes.
This is why \cite{ahle2019almost} require a more complicated sketch to construct their embeddings for high-mode tensors.
However it does not appear to be known whether a  Kronecker-structured sketch could work for $\ell_2$ estimation, using a subexponential number of measurements, if one drops the requirement that the sketch be an embedding.
Our work partially resolves this by showing that any such sketching matrix must be extremely poorly conditioned.
There is also a large body of work on sketching Tucker, tensor train, tree networks, and general tensor networks, see, e.g., \cite{grasedyck2013literature,SWZ19,MWZ24} and the references therein.
Additionally, there is a large body of work on algorithms that (perhaps impicitly) operate in the Kroncker matrix-vector model \cite{bujanovic2021norm,bujanovic2024subspace,feldman2022entanglement,ahle2019almost,meyer2023hutchinson,avron2014subspace}.

% Good

\section{Preliminaries}
\label{sec:prelims}

We use capital bold letters (\mA,\mB,\mC,...) to denote matrices, lowercase bold letters to denote vectors (\va,\vb,\vc,...), and lowercase non-bold letters to denote scalars (a,b,c,...).
\bbR is the set of reals, \bbC is the set of complex numbers, and \bbN is the set of natural numbers.
We will let \cA denote an algorithm.
\(\vx^\intercal\) denotes the transpose and \(\vx^\herm\) denotes the conjugate transpose.
We use bracket notation \([\va]_{i}\) to denote the \(i^{th}\) entry of \va and \([\mA]_{i,j}\) to denote the \((i,j)\) entry of \mA.
\(\norm{\va}_2\) denotes the L2 norm of a vector.
\(\otimes\) denotes the Kronecker product.
\(\tr(\mA)\) is the trace of a matrix.
We let \([n] = \{1,\ldots,n\}\) be the set of integers from 1 to \(n\).
For probability distributions \(\bbP\) and \(\bbQ\) on space \((\Omega,\cF)\), \(D_{TV}(\bbP,\bbQ)\) is the total variation distance between \bbP and \bbQ, and \(D_{KL}(\bbP\,\|\,\bbQ)\) is the Kullback-Liebler divergence.
We will let \(\alphabet \subseteq \bbC\) denote an alphabet.

% Good

\section{Technical Overview}
\label{sec:overview}

In this section, we state our core technical results more precisely and discuss their context while delaying their proof details to later in the paper.
We start with the discussion of lower bounds against not ill-conditioned algorithms for trace estimation and spectral norm computation.
Then, we go into the discussion of zero testing and the insufficiency of subgaussianity to understand Kronecker matrix-vector complexity.

% Good

\subsection{Lower Bounds on Trace and Spectral Norm Estimation}
\label{app:adaptive-lower-bound-overview}

In this section we formally state \Cref{thm:top-eig-lower-bound-informal} and \Cref{thm:trace-lower-bound-informal}, our lower bounds against approximating the trace and the spectral norm of a matrix.
Both lower bounds hold against algorithms that are not ill-conditioned.
So, we first take a moment to formalize this conditioning:

\begin{definition}
    Fix a matrix-vector algorithm \algo.
    For any input matrix \mA, let \(\vv^{(1)},\ldots,\vv^{(t)}\) be the matrix-vector queries computed by \algo, and let \(\mV \defeq [\vv^{(1)} ~ \cdots ~ \vv^{(t)}] \in \bbR^{n^q \times t}\) be the matrix formed by concatenating these vectors.
    If we know that for all inputs \mA we have that the condition number of \mV is at most \(\kappa\), then we say that \algo is \(\kappa-\)conditioned.
\end{definition}

We prove lower bounds against Kronecker matrix-vector algorithms that are \(\kappa-\)conditioned.
We will momentarily discuss how mild this conditioning assumption is.
First, we state our formal results:

\begin{theorem}
    \label{thm:top-eig-lower-bound}
    Any \(\kappa-\)conditioned Kronecker matrix-vector algorithm that can estimate the spectral norm of any symmetric matrix \(\mA\) to multiplicative less than error \(C_\tau^{q/2}\) with probability at least \(\frac23\) must use at least \(t = \Omega(\min\{C_0^{q/2}, \frac{C_\tau^{q/2}}{\kappa^2}\})\) Kronecker matrix-vector products.
\end{theorem}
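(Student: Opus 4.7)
The plan is to prove the lower bound via a standard estimation-to-testing reduction, combined with a Kronecker orthogonality phenomenon. I construct two distributions $\cD_0$ and $\cD_1$ over symmetric matrices in $\bbR^{n^q \times n^q}$ whose spectral norms differ by more than a factor of $C_\tau^q$, so that any $\kappa$-conditioned algorithm achieving multiplicative error strictly less than $C_\tau^{q/2}$ with probability at least $\tfrac{2}{3}$ yields a constant-advantage distinguisher between the two. The task then reduces to bounding the total variation distance between the transcripts produced by the algorithm under the two hypotheses.

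For the hard instance I would use a spiked construction: $\cD_0$ is a baseline symmetric matrix $\mat{M}$ (for example a rescaled identity or a Wigner-type random matrix), and $\cD_1$ is $\mA = \mat{M} + R \vu \vu^\intercal$, where $\vu = \otimes_{i=1}^q \vu_i$ is a random Kronecker-structured unit vector with each $\vu_i$ uniform on $\sphere{n-1}$, and $R$ is chosen large enough that $\lambda_1$ jumps by a factor greater than $C_\tau^q$. The key technical observation is the Kronecker orthogonality estimate: for any fixed Kronecker unit query $\vv = \otimes_i \vv_i$, the squared inner product $|\langle \vv, \vu\rangle|^2 = \prod_i |\langle \vv_i, \vu_i\rangle|^2$ is a product of $q$ independent variables each concentrated near $1/n$ on $\sphere{n-1}$, so a product-tail argument yields $\Pr[|\langle \vv, \vu\rangle|^2 > c^q] \leq e^{-\Omega(q)}$ for some $c \in (0,1)$ depending only on $n$.

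The heart of the proof is lifting this single-query orthogonality into a transcript-level bound. Stacking the algorithm's queries into $\mV = [\vv^{(1)} \mid \cdots \mid \vv^{(t)}]$, the signal distinguishing $\cD_1$ from $\cD_0$ is exactly $R \vu (\mV^\intercal \vu)^\intercal$, and on the good event where every coordinate satisfies the orthogonality estimate, $\|\mV^\intercal \vu\|_2^2 = \sum_i |\langle \vv^{(i)}, \vu\rangle|^2 \leq c^q \|\mV\|_F^2$. The $\kappa$-conditioning hypothesis enters because it bounds $\|\mV\|_F^2 \leq \kappa^2 t \cdot \sigma_{\min}(\mV)^2$, which limits how much the algorithm can amplify the exponentially-small per-query signal through the relative scaling of its columns. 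Converting this signal bound into a total-variation (or KL) estimate on the transcripts yields $t = \Omega(C_\tau^{q/2}/\kappa^2)$, while the parallel bound $t = \Omega(C_0^{q/2})$ follows from a $\kappa$-free union bound using only the moment estimate on $|\langle \vv^{(i)}, \vu\rangle|$, independent of the column-norm structure of $\mV$.

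The main obstacle I foresee is handling adaptivity: once $\vv^{(i)}$ depends on the prior responses $\mA\vv^{(1)}, \ldots, \mA\vv^{(i-1)}$, the factors $\vu_1, \ldots, \vu_q$ are no longer conditionally independent given the transcript, and the product-tail argument must be redone. I plan to handle this by induction on $i$, conditioning on a good event $\cE_i = \bigcap_{j \leq i} \{|\langle \vv^{(j)}, \vu\rangle|^2 \leq c^q\}$ and showing that the posterior distribution of each $\vu_j$ given the transcript through step $i$ remains sufficiently spread on its spherical factor for the next inner product to concentrate in the same way. The conditioning assumption is essential at this step: without control on $\sigma_{\min}(\mV)$, an adaptive algorithm could rescale its queries to compensate for the exponential smallness of each inner product and engineer Kronecker factors that align with whichever directions the posterior has leaked into, thereby extracting the spike. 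A union bound over the $t$ inductive steps completes the argument.
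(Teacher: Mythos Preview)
Your high-level plan matches the paper's: the hard instance is a spiked Wigner model $\mA_0 = \tfrac{1}{\sqrt{D}}\mW$ versus $\mA_1 = \tfrac{1}{\sqrt{D}}\mW + \lambda\vu\vu^\intercal$ with $\vu$ a Kronecker product of uniform unit vectors, and the engine is the same orthogonality lemma (\Cref{lem:kron-unit-vec-conentration}). The reduction from spectral-norm estimation to testing is also the same.

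There are two places where your sketch diverges from what actually makes the proof go through.

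\textbf{Where conditioning actually enters.} You locate the role of $\kappa$ in the inequality $\|\mV\|_F^2 \leq \kappa^2 t\,\sigma_{\min}(\mV)^2$ and in preventing the algorithm from ``rescaling queries to compensate.'' But rescaling is not the issue: one can assume unit-norm columns without loss of generality, so $\|\mV\|_F^2 = t$ for free. The real reason $\kappa$ appears is that the likelihood-ratio calculations for Gaussian matrix-vector responses are tractable only after Gram--Schmidt orthonormalizing the queries to $\vx^{(1)},\ldots,\vx^{(t)}$, and the resulting divergence bound involves $\langle \vx^{(i)}, \vu\rangle^2$ rather than $\langle \vv^{(i)}, \vu\rangle^2$. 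Orthonormalization destroys Kronecker structure, so \Cref{lem:kron-unit-vec-conentration} no longer applies to the $\vx^{(i)}$ directly. The conditioning assumption is what lets you transfer the bound back: $\langle \vx^{(i)}, \vu\rangle^2 \leq \kappa^2 \|\mV^\intercal\vu\|_2^2$ (this is \Cref{lem:conditioning-to-ortho-inner-prod} in the paper). Without this bridge the argument stalls.

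\textbf{Adaptivity.} Your proposed handling of adaptivity --- arguing by induction that the posterior of each factor $\vu_j$ ``remains sufficiently spread'' so that the orthogonality lemma still applies at step $i+1$ --- is the genuine gap. Controlling the posterior of $\vu$ after conditioning on adaptive Gaussian observations is essentially as hard as the theorem itself, and your sketch gives no mechanism for it. The paper avoids posterior analysis entirely by importing the truncated chi-squared framework of Simchowitz et al.\ (see \Cref{impthm:simchowitz-core-chi-squared-divergence} and \Cref{thm:abstract-simchowitz-requirements}): one defines a truncation event $A_{\vu}^i = \{\langle \vv^{(j)}, \vu\rangle^2 \leq \tau_j \text{ for all } j \leq i\}$ and bounds the probability that the algorithm ever escapes it, not via the posterior, but via the \emph{prior} blind-guess probability $f(\tau)$ multiplied by a conditional chi-squared divergence that can be unrolled step by step. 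This is the key technical device, and it is what turns the single-query orthogonality bound into an adaptive lower bound without ever touching the posterior of $\vu$.
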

\begin{theorem}
    \label{thm:trace-lower-bound}
    Any \(\kappa-\)conditioned Kronecker vector-matrix-vector algorithm that can estimate the trace of any PSD matrix \(\mA\) to relative error \((1\pm\eps)\) with probability at least \(\frac23\) must use at least \(t = \Omega(\min\{C_0^{q/2}, \frac{C_\tau^{q/2}}{\kappa^2 \sqrt\eps}\})\) Kronecker matrix-vector products.
\end{theorem}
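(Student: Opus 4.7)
The plan is to reduce trace estimation to a two-point hypothesis testing problem and upper-bound the KL divergence between the resulting query transcripts using the Kronecker orthogonality phenomenon. Define a null distribution $\cD_0$: $\mA = c\mI + \sigma\mG$, and a signal distribution $\cD_1$: $\mA = c\mI + \sigma\mG + \beta\,\vu\vu^\intercal$, where $\mG$ is a symmetric Gaussian matrix providing per-query measurement noise, $\vu = \otimes_{i=1}^q\vu_i$ is a random Kronecker-structured unit vector with each $\vu_i$ Haar on $\sphere{n-1}$, and $\beta = \eps\,\tr(c\mI)$ so that $\bbE[\tr(\mA)]$ differs by a $(1+\eps)$ factor. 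Taking $c$ large compared to $\sigma\sqrt{n^q}$ keeps both instances PSD with high probability, and any algorithm achieving $(1\pm\eps)$ relative error on the trace must distinguish these distributions with constant advantage.

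The core analytic step is showing that the differential signal in each VMV query is exponentially smaller than its measurement noise. For a Kronecker query $(\vx,\vy)$, the signal factorizes as $\beta\langle\vu,\vx\rangle\langle\vu,\vy\rangle = \beta\prod_i\langle\vu_i,\vx_i\rangle\prod_i\langle\vu_i,\vy_i\rangle$. Here I would invoke the paper's central observation: for Haar $\vu_i$ and any fixed $\vx_i$, $\bbE[\log|\langle\vu_i,\vx_i\rangle|]$ is strictly smaller than $-\tfrac12\log n + \tfrac12\log\|\vx_i\|^2$ by a constant margin (arising from the entropy of a chi-square), so multiplying $q$ independent such factors gives $|\langle\vu,\vx\rangle|\leq e^{-\Omega(q)}\|\vx\|/\sqrt{n^q}$ with high probability---an exponential shrinkage over the non-Kronecker baseline. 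The noise $\sigma\,\vx^\intercal\mG\vy$ has standard deviation $\Theta(\sigma\|\vx\|\|\vy\|)$, and a standard chain-rule bound on the KL divergence of the adaptive length-$t$ transcript, combined with Pinsker's inequality, forces $t = \Omega(C_\tau^{q/2}/(\kappa^2\sqrt\eps))$ after substituting $\beta = \eps c n^q$; the $\sqrt\eps$ (rather than $\eps^2$) arises because the relevant noise here is the randomness of $\vu$ itself rather than external Gaussian noise, and the signal magnitude is quadratic in the inner product $\langle\vu,\vx\rangle$ so the dependence on $\eps$ is halved.

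The $\kappa$-conditioning assumption is used when normalizing the queries: without a spectrum bound on $\mV$, an algorithm could in principle amplify exponentially small signals via ill-conditioned combinations of responses. Restricting $\mV$ to have condition number at most $\kappa$ lets us reduce to the canonical form of approximately orthonormal queries standard in the matrix-vector lower bound literature, with the remaining slack absorbed into the $\kappa^2$ factor. The main obstacle will be handling adaptivity together with the product-form Kronecker tail bound: an adaptive algorithm could use early responses to steer a later query toward $\vu$. I would address this by a net-and-union-bound argument over the Kronecker sphere, whose metric entropy is only $O(qn)$ rather than $O(n^q)$, so that the tail bound on $|\langle\vu,\vx\rangle|$ holds uniformly over all Kronecker probes an adaptive algorithm could select; the competing $C_0^{q/2}$ term in the claimed bound arises exactly at the threshold where $t$ becomes large enough that this uniform bound begins to saturate.
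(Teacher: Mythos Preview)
Your hard instance is not the one the paper uses, and the difference is exactly where the argument breaks. The paper reduces trace estimation to $\ell_2$ estimation of a \emph{vector}: it takes $\mA=\va\va^\intercal$ with $\va\in\{\vg,\ \vg+\lambda\vu\}$, $\vg\sim\cN(\vec0,\mI)$, $\lambda=\Theta(\sqrt\eps)$. This is automatically PSD, and each query $\vv^\intercal\mA\vv=\langle\vv,\va\rangle^2$ is a function of a single scalar linear measurement of $\va$. The relative trace gap is $\norm{\vg+\lambda\vu}_2^2/\norm{\vg}_2^2-1\approx\lambda^2$, which is why $\lambda\sim\sqrt\eps$; the KL between $\mV^\intercal\vg\sim\cN(\vec0,\mV^\intercal\mV)$ and $\mV^\intercal(\vg+\lambda\vu)\sim\cN(\lambda\mV^\intercal\vu,\mV^\intercal\mV)$ is $\tfrac{\lambda^2}{2}\vu^\intercal\mV(\mV^\intercal\mV)^{-1}\mV^\intercal\vu\le\tfrac{\lambda^2}{2}\kappa^2\, t\,C_\tau^{-q}$ via \Cref{lem:kron-unit-vec-conentration} and the conditioning assumption. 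This already gives the non-adaptive bound $t=\Omega(C_\tau^q/(\kappa^2\eps))$; the $\sqrt\eps$ and $C_\tau^{q/2}$ in the stated theorem are the square-root loss from the adaptive truncated-likelihood analysis, not from any ``quadratic signal'' mechanism as you suggest.

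Your construction $c\mI+\sigma\mG+\beta\vu\vu^\intercal$ fails on two concrete counts. First, PSD forces $c\gtrsim\sigma\sqrt{n^q}$, and with $\beta=\eps\, c\, n^q$ the per-query signal $\beta\langle\vu,\vv\rangle^2\le\beta\, C_\tau^{-q}/n^q=\eps\, c\,C_\tau^{-q}$ against noise $\Theta(\sigma)$ yields KL $\lesssim t\,\eps^2 n^q C_\tau^{-2q}$, hence only $t=\Omega\bigl((C_\tau^2/n)^q/\eps^2\bigr)$. Since the paper's $C_\tau$ is a universal constant just above $1$, $(C_\tau^2/n)^q\to 0$ for any $n\ge 2$ and the bound is vacuous; the rank-one trick $\mA=\va\va^\intercal$ is precisely what removes this PSD-driven $n^q$. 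Second, your net-and-union-bound plan for adaptivity cannot close: a $\delta$-net of the Kronecker sphere has size $e^{\Theta(qn\log(1/\delta))}$, while the tail in \Cref{lem:kron-unit-vec-conentration} is only $C_0^{-q}$ for a fixed constant $C_0$, so the union bound is swamped. The paper instead handles adaptivity with the truncated-likelihood machinery of Simchowitz et al., bounding at each step the conditional probability that the new query suddenly aligns with $\vu$ given that all previous ones did not; the $C_0^{q/2}$ term arises from union-bounding over the $t$ actual queries in that argument, not over a net of all possible queries.
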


Here, \(C_\tau\) and \(C_0\) are constants greater than 1 which are specified in \Cref{lem:kron-unit-vec-conentration}.
Notice that the first result is against \emph{matrix-vector} methods where we can compute \(\mA\vx\), while the second is against \emph{vector-matrix-vector} methods where we can compute \(\vx^\intercal\mA\vx\).
The proofs for these results both follow from strong orthogonality between random Kronecker structured vectors.
Formally, we rely on the following observation:
\begin{lemma}
    \label{lem:kron-unit-vec-conentration}
    Let \(\vu = \vu_1 \otimes \cdots \otimes \vu_q\) where \(\vu_i\) is a uniformly random unit vector in \(\bbR^n\).
    Then, for any \(\vv = \vv_1 \otimes \cdots \vv_q\) where each \(\vv_i\) is an arbitrary unit vector in \(\bbR^n\), we have that
    \[
    \Pr\left[\langle \vu, \vv\rangle^2 \geq \tsfrac{C_\tau^{-q}}{n^q}\right]
    \leq C_0^{-q}
    \]
    For some universal constants \(C_\tau,C_0 > 1\).
\end{lemma}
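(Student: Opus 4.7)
The plan is to exploit the fact that the Kronecker structure turns an inner product into a \emph{product} of inner products. By the mixed-product property of the Kronecker product,
\[
\langle \vu, \vv\rangle \;=\; \prod_{i=1}^{q} \langle \vu_i, \vv_i\rangle,
\qquad \text{so} \qquad
\langle \vu, \vv\rangle^2 \;=\; \prod_{i=1}^{q} Y_i,
\]
where $Y_i \defeq \langle \vu_i, \vv_i\rangle^2$. Since $\vu_i$ is uniform on $\sphere{n-1}$ and $\vv_i$ is a unit vector, rotational invariance lets us take $\vv_i = \vec{e}_1$ without loss of generality, so $Y_i$ is the squared first coordinate of a uniform point on the sphere, which has distribution $\mathrm{Beta}(1/2,(n-1)/2)$ with mean $1/n$. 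The $Y_i$ are independent; setting $X_i \defeq n Y_i$, the lemma reduces to showing
\[
\Pr\!\left[\textstyle\prod_{i=1}^q X_i \;\geq\; C_\tau^{-q}\right] \;\leq\; C_0^{-q}
\qquad \text{where the $X_i$ are i.i.d.\ with } \E[X_i] = 1.
\]

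First I would apply Markov's inequality to the fractional moment $(\prod_i X_i)^\theta$ for a fixed $\theta \in (0,1)$. Independence gives
\[
\Pr\!\left[\textstyle\prod_i X_i \geq C_\tau^{-q}\right]
\;\leq\; C_\tau^{q\theta}\,\E\!\left[\textstyle\prod_i X_i^{\theta}\right]
\;=\; \bigl(C_\tau^{\theta}\,\E[X_1^{\theta}]\bigr)^{q}.
\]
Since $X_1$ is not almost-surely constant and $x \mapsto x^\theta$ is strictly concave on $(0,\infty)$ for $\theta \in (0,1)$, Jensen's inequality yields $\E[X_1^{\theta}] < (\E[X_1])^{\theta} = 1$ strictly. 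Thus, if $C_\tau > 1$ is taken close enough to $1$ that $C_\tau^{\theta}\,\E[X_1^{\theta}] < 1$, setting $C_0 \defeq 1/(C_\tau^{\theta}\,\E[X_1^{\theta}])$ completes the argument.

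The main obstacle is making the gap $1 - \E[X_1^\theta]$ uniform in $n$; without this the base of the exponential could be pinched against $1$ for large $n$. For this I would use the closed-form Beta moment
\[
\E[X_1^{\theta}] \;=\; n^{\theta}\,\frac{\Gamma(\tfrac12+\theta)\,\Gamma(n/2)}{\Gamma(\tfrac12)\,\Gamma(n/2+\theta)},
\]
at the convenient choice $\theta = \tfrac12$, where the ratio of Gamma functions simplifies. Stirling's formula shows that this quantity tends to $\sqrt{2/\pi}$ as $n \to \infty$, and the finitely many small values of $n$ can be checked by hand (alternatively, Gautschi-type inequalities for Gamma ratios give the bound in closed form). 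This yields a universal upper bound $\E[X_1^{1/2}] \leq c < 1$ valid for every $n \geq 2$. Plugging this into the Markov estimate and taking $C_\tau > 1$ small enough that $C_\tau^{1/2}\, c < 1$ produces explicit universal constants $C_\tau, C_0 > 1$ as required.
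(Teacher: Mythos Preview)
Your argument is correct and takes a genuinely different route from the paper. The paper passes to logarithms, writing $\log\langle\vu,\vv\rangle^2 = \sum_i \log Y_i$, proves a separate lemma that $\log|Z|$ (with $Z$ the first coordinate of a uniform point on $\sqrt{n}\,\bbS^{n-1}$) is subexponential with norm bounded independently of $n$, and then applies Bernstein's inequality to the centered sum; the mean of $\log Y_i$ is computed via the digamma function, $\psi(1/2)-\psi(n/2)$, and the threshold $t$ is tuned to balance the exponent. Your approach bypasses all of this by applying Markov's inequality to the fractional moment $\bigl(\prod_i X_i\bigr)^{1/2}$, reducing the whole lemma to the single scalar inequality $\E[X_1^{1/2}]\le c<1$, which you obtain from the closed-form Beta moment and a Stirling/Gautschi estimate. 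This is more elementary---no subexponential tails, no digamma identities---and it makes the uniformity in $n$ explicit: one checks directly that $\E[X_1^{1/2}] = \sqrt{n}\,\Gamma(n/2)/\bigl(\sqrt{\pi}\,\Gamma((n+1)/2)\bigr)$ is maximized at $n=2$ with value $2\sqrt{2}/\pi\approx 0.9003$ and decreases monotonically to $\sqrt{2/\pi}$. The paper's Bernstein route could in principle be pushed to give sharper constants and yields a full two-sided deviation bound on $\sum_i \log Y_i$, which might be useful elsewhere; but for the lemma as stated your method is the cleaner of the two.
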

We prove \Cref{lem:kron-unit-vec-conentration} in \Cref{sec:kron-unit-vec-conentration}.
What makes \Cref{lem:kron-unit-vec-conentration} unique is the rate of \(C_\tau^{-q}\) inside the probability.
This is because for a uniformly random unit vector \(\va \in \bbR^{n^q}\) and arbitrary unit vector \(\vb\in\bbR^{n^q}\), we instead expect that \(\langle \va, \vb\rangle^2 \approx \frac{1}{n^q}\).
So, in contrast, \Cref{lem:kron-unit-vec-conentration} shows an exponentially smaller inner product between random Kronecker-structured vectors.
We will momentarily explain how \Cref{lem:kron-unit-vec-conentration} translates into the lower bounds of \Cref{thm:top-eig-lower-bound,thm:trace-lower-bound}, but first we take a moment to discuss the strength of the conditioning assumption.

To understand the weight of this conditioning assumption, we take a moment to examine some of the most common Kronecker matrix-vector algorithms: Khatri-Rao Sketches.
A Khatri-Rao Sketch is a non-adaptive Kronecker matrix-vector product and typically takes each query vector to be the Kronecker product of \(q\) iid copies of some subgaussian vector.
That is, \(\vv^{(i)} = \otimes_{j=1}^q \vv^{(i)}_j\) where \(\vv^{(i)}_j \sim \cD\) for some isotropic distribution \cD.
For instance, Kronecker JL and Kronecker Hutchinson use Khatri-Rao Sketching \cite{jin2021faster,sun2021tensor,feldman2022entanglement,bujanovic2021norm,meyer2023hutchinson,lam2024randomized}.
In these cases, we should expect \(\mV = [\vv^{(1)} ~ \cdots ~ \vv^{(t)}]\) to be extremely well conditioned.
This is because the inner products between the query vectors tensorizes as in \Cref{lem:kron-unit-vec-conentration}:
\begin{equation*}
    \langle \vv^{(1)}, \vv^{(2)} \rangle
    = \prod_{i=1}^q \langle \vv^{(1)}_i, \vv^{(2)}_i\rangle
\end{equation*}
If the sketching vectors come from a continuous random distribution, then \Cref{lem:kron-unit-vec-conentration} tells us that these vectors have exponentially small inner product.
If the sketching vectors instead come from a discrete random distribution, then \Cref{thm:alphabet_lower_bounds} in \Cref{sec:zero-testing} shows that the inner product will be exactly zero with very high probability.
Either way, the matrix \mV has nearly orthogonal columns with very high probability, in turn implying that the condition number of \mV is at most \(O(1)\) with very high probability.
So, any Khatri-Rao sketching method must incur the exponential lower bounds in \Cref{thm:top-eig-lower-bound,thm:trace-lower-bound}.
More broadly, we are not aware of any Kronecker matrix-vector algorithm that whose condition number is exponential in \(q\), which is the degree of ill-conditioning required to avoid incurring the exponential lower bound.

% Good

\subsection{Zero Testing}

Consider the following very simple problem:

\begin{problem}
	\label{prob:zero-testing}
	Let \(\mA\in\bbR^{n^q \times n^q}\) be a matrix.
	Using only matrix-vector products with \mA, decide if \(\mA = \mat0\) or if \(\mA \neq \mat0\).
	Be correct with probability at least \(\frac23\).
\end{problem}

When we are allowed to use classical (non-Kronecker) matrix-vector products, then we can solve \Cref{prob:zero-testing} with a single Gaussian query with probability 1.
The same holds in the Kronecker matrix-vector case:
if we let \(\vv = \vg_1 \otimes \cdots \otimes \vg_q\) where \(\vg_i \sim \cN(\vec0,\mI)\) then \(\mA\vv \neq \vec0\) with probability one if \(\mA \neq 0\).
This is a direct consequence of the kernel of any nonzero matrix being a set of measure zero.

This low query complexity remains true in the classical (non-Kronecker) case if we restrict ourselves to use Rademacher vectors.
More formally, if we only allow ourselves to compute \(\mA\vx\) for vectors \(\vx\in\{\pm1\}^{n^q}\), then using \(O(1)\) matrix-vector products suffices to solve \Cref{prob:zero-testing}.
This follows from many possible results, including applying Hutchinson's estimator to \(\mA^\intercal\mA\) to estimate \(\norm{\mA}_F^2\) to constant factor accuracy with \(O(1)\) queries \cite{meyer2023hutchinson}.

This story is ubiquitous in matrix-vector complexity -- changing the base distribution we sample with from any subgaussian distribution to any other subgaussian distribution (e.g. from Gaussian to Rademacher) does not change this asymptotic complexity of solving linear algebra problems \cite{saibaba2025randomized,meyer21hutchpp}.

We now show that this story fails to hold true in the Kronecker matrix-vector setting:
\begin{theorem}
	\label{thm:zero-testing-rademacher}
	Consider any Kronecker matrix-vector algorithm that only computes product using query vectors of the form \(\vv = \vv_1 \otimes \cdots \otimes \vv_q\) where \(\vv_i\in\{\pm1\}^n\).
	Then, this algorithm needs \(\Theta(2^q)\) queries to solve \Cref{prob:zero-testing}.
\end{theorem}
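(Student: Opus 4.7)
Both directions of $\Theta(2^q)$ hinge on controlling $\Pr[\vr^\intercal \vv = 0]$ for a Rademacher Kronecker vector $\vv$ and a fixed nonzero $\vr$: the minimum of this probability over $\vr$ governs the upper bound, while a near-extremal $\vr$ powers the lower bound. I would prove the two bounds in parallel using this unified view.

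For the $O(2^q)$ upper bound, the claim is that $\Pr[\vr^\intercal \vv \ne 0] \ge 2^{-q}$ for every nonzero $\vr \in \bbR^{n^q}$, proved by induction on $q$. Decomposing $\vv = \vv_1 \otimes \vw$ with $\vw = \vv_2 \otimes \cdots \otimes \vv_q$ and writing $\vr$ in slices $\vr_i \in \bbR^{n^{q-1}}$, we get $\vr^\intercal \vv = \sum_i \vv_1[i]\, y_i$ where $y_i = \vr_i^\intercal \vw$. Picking any nonzero slice $\vr_{i_0}$, the induction hypothesis yields $\Pr[\vy \ne \vec 0] \ge \Pr[y_{i_0} \ne 0] \ge 2^{-(q-1)}$, and conditional on $\vy \ne \vec 0$ the Rademacher sum $\vv_1^\intercal \vy$ is nonzero with probability at least $1/2$ by the Littlewood--Offord inequality, giving $2^{-(q-1)} \cdot (1/2) = 2^{-q}$. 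The algorithm simply issues $t = O(2^q)$ iid Kronecker Rademacher queries and outputs ``nonzero'' whenever any response is nonzero, suffering failure probability at most $(1 - 2^{-q})^t < 1/3$.

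For the $\Omega(2^q)$ lower bound, I would apply Yao's principle to the two-point prior with $\mA \in \{\mat 0, \vu \vu^\intercal\}$ equiprobably, where $\vu = \vu_1 \otimes \cdots \otimes \vu_q$ is a random Kronecker vector designed so that $\Pr[\vu_i^\intercal \vv_i = 0] = 1/2$ for every $\vv_i \in \{\pm 1\}^n$. The construction is $\vu_i = \ve_j + c\, \ve_k$ with $j \ne k$ uniform in $[n]$ and an independent uniform sign $c \in \{\pm 1\}$: then $\vu_i^\intercal \vv_i = \vv_i[j] + c\, \vv_i[k]$ vanishes iff $c = -\vv_i[j]\vv_i[k]$, which occurs with probability exactly $1/2$ by symmetry of $c$. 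Tensorizing gives $\Pr[\vu^\intercal \vv = 0] = 1 - 2^{-q}$ for every Kronecker Rademacher query $\vv$. Since $\mA \vv^{(i)} = (\vu^\intercal \vv^{(i)})\,\vu$ is zero exactly when $\vu^\intercal \vv^{(i)} = 0$, a union bound along the algorithm's all-zeros decision-tree path yields $\Pr[\text{all responses zero} \mid \mA \ne \mat 0] \ge 1 - t \cdot 2^{-q}$; on the all-zeros transcript the algorithm's output is independent of $\mA$, so its total error is at least $1/2 - t \cdot 2^{-q-1}$, forcing $t = \Omega(2^q)$.

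The main obstacle is designing the distribution on $\vu_i$. The naive choice of uniform Rademacher $\vu_i \in \{\pm 1\}^n$ only gives $\Pr[\vu_i^\intercal \vv_i = 0] = \Theta(1/\sqrt n)$, which yields a much weaker lower bound with unwanted $n$-dependence (and in fact going the wrong direction in $n$). The sparse two-support construction $\vu_i = \ve_j + c\, \ve_k$ uses the auxiliary sign $c$ to realize the \emph{exact} Littlewood--Offord extremal probability of $1/2$ uniformly over all $\vv_i \in \{\pm 1\}^n$; aligning this $1/2$ with the $1/2$ appearing in the upper bound's induction step is what pins both bases at $2$ and produces the tight $\Theta(2^q)$.
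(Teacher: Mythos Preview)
Your proof is correct and follows essentially the same strategy as the paper: both directions tensorize a per-mode probability that a Rademacher inner product vanishes, then use iid queries for the upper bound and Yao's principle with a Kronecker-structured hard instance for the lower bound. Your upper bound induction is the same as the paper's modal-product argument (the paper's $Q(\{\pm1\},n)\geq 1/2$ is exactly your ``condition on one coordinate'' step, which you phrase via Littlewood--Offord).

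The one noteworthy difference is your lower-bound distribution. The paper uses $\vu_i$ uniform on $\{(1,1),(1,-1)\}$ for $n=2$, and for general $n$ uses $\vu_i=\ve_j-\ve_k$ with random support, which only yields $\Pr[\vu_i^\intercal\vv_i=0]\geq \frac{1}{2}\cdot\frac{n-2}{n-1}$ and hence a base strictly below $2$ unless one pads the $n=2$ instance. Your construction $\vu_i=\ve_j+c\,\ve_k$ with an independent random sign $c$ hits $\Pr[\vu_i^\intercal\vv_i=0]=1/2$ exactly for every $n\geq 2$ and every $\vv_i\in\{\pm1\}^n$, so the $\Omega(2^q)$ bound holds uniformly in $n$ without any embedding step. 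This is a clean sharpening of the paper's general-$n$ argument.
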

Not only does building queries with the Kronecker product of Rademacher entries not suffice, but no algorithm that uses \(\{\pm1\}\) entries can efficiently test if a matrix is zero.
This steeply violates the story we would expect to hold from the classical (non-Kronecker) case.
We prove a generalization of this results in \Cref{thm:alphabet_lower_bounds} of \Cref{sec:zero-testing}, where we only allow the entries of vectors to belong to a fixed alphabet \(\alphabet\subset\bbR\).
For large enough \(n\), we show that \((1 - \Theta(\frac{1}{|\alphabet|}))^q\) queries are necessary and sufficient to solve the zero-testing problem.
In other words, we must pay a cost that is exponential in \(q\) unless \(|\alphabet| = \Omega(q)\).
Broadly this tell us the following:

\begin{center}
	\emph{Knowing that a random vector is subgaussian does not suffice to tightly bound the query complexity of the Kronecker matrix-vector algorithm using that variable.}
\end{center}

We also note that any algorithm that can estimate the trace of a PSD matrix to relative error \(O(1)\) can be used to solve \Cref{prob:zero-testing}.
In particular, it is worth comparing \Cref{thm:zero-testing-rademacher} to Table 1 from \cite{meyer2023hutchinson}.
\cite{meyer2023hutchinson} show an algorithm that uses the Kronecker product of Rademacher vectors to estimate the trace of a PSD matrix to constant factor error using \(O(2^q)\) queries when \(n=2\).
They also show that the same algorithm run with uniformly random unit vectors instead of Rademacher vectors achieves the same result in just \(O(1.5^q)\) queries.

We can therefore conclude from \Cref{thm:zero-testing-rademacher} that the optimal query complexity of any algorithm that solves trace estimation while using the \(\{\pm1\}\) alphabet is therefore \(\Theta(2^q)\) when \(n=2\).
Since we know that \(O(1.5^q)\) is possible with continuous variables, we prove for the first time that the task of \emph{trace estimation cannot be solved with optimal query complexity using Rademacher vectors}.

Again, this reinforces how the choice of base subgaussian distribution can exponentially change our final sample complexity.
The core of the proofs here also rely on orthogonality.
We show that with overwhelming probability, random Kronecker-structured vectors built from a small alphabet are almost surely perfectly orthogonal:
\begin{lemma}
	There exists a distribution over random vectors \(\vu\in\bbR^{n^q}\) such that every fixed vector \(\vv = \vv_1 \otimes \cdots \otimes \vv_q\) with \(\vv_i \in \{\pm1\}^n\) has \(\langle \vu, \vv \rangle=0\) with probability at least \(1-\frac1{2^q}\).
\end{lemma}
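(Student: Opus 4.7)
The plan is to take $\vu$ itself to have Kronecker structure: $\vu = \vu_1 \otimes \cdots \otimes \vu_q$ with the $\vu_i$'s mutually independent draws from some cleverly chosen distribution on $\bbR^n$. The inner product then factorizes as
$$\langle \vu, \vv \rangle \;=\; \prod_{i=1}^q \langle \vu_i, \vv_i \rangle,$$
and the event $\{\langle \vu, \vv\rangle \ne 0\}$ becomes the intersection of $q$ independent events $\{\langle \vu_i, \vv_i\rangle \ne 0\}$. Consequently the proof reduces to exhibiting a single distribution on $\vu_i \in \bbR^n$ whose marginal satisfies $\Pr[\langle \vu_i, \vv_i \rangle = 0] \ge \frac12$ \emph{simultaneously for every} $\vv_i \in \{\pm 1\}^n$.

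For this marginal I propose the following sparse construction. Sample two distinct coordinates $j_i \ne k_i$ uniformly from $[n]$, independently sample a sign $\eta_i \in \{\pm 1\}$ uniformly, and set $\vu_i = \ve_{j_i} + \eta_i \ve_{k_i}$. Then
$$\langle \vu_i, \vv_i \rangle \;=\; [\vv_i]_{j_i} + \eta_i\, [\vv_i]_{k_i}.$$
Condition on $(j_i, k_i)$. Because $[\vv_i]_{j_i}, [\vv_i]_{k_i} \in \{\pm 1\}$, the two possible values of the inner product (as $\eta_i$ ranges over $\pm 1$) are $[\vv_i]_{j_i} + [\vv_i]_{k_i}$ and $[\vv_i]_{j_i} - [\vv_i]_{k_i}$; exactly one of these is $0$ and the other equals $\pm 2$. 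Hence $\Pr[\langle \vu_i, \vv_i\rangle = 0 \mid j_i, k_i] = \tfrac12$, and marginalizing over $(j_i, k_i)$ preserves this probability for \emph{every} choice of $\vv_i$.

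Combining the $q$ independent factors,
$$\Pr[\langle \vu, \vv\rangle \ne 0] \;=\; \prod_{i=1}^q \Pr[\langle \vu_i, \vv_i\rangle \ne 0] \;=\; 2^{-q},$$
which rearranges to the claim. (The degenerate case $n = 1$ is handled trivially by taking $\vu \equiv 0$.)

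The main ``obstacle'' is really just finding the right per-coordinate distribution: once the Kronecker-product ansatz reduces the problem to the $n$-dimensional marginal, one must design a distribution orthogonal to every $\{\pm 1\}^n$ vector with probability $\ge \tfrac12$, which rules out obvious candidates like Rademacher coordinates (they fail for odd $n$). The two-spike-with-random-sign construction above succeeds uniformly in $n \ge 2$ because the entries of $\vv_i$ at any two fixed positions already take values in a two-element group under $\pm$, so one random sign flip always kills the sum with exact probability $\tfrac12$.
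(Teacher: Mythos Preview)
Your proof is correct and follows essentially the same strategy as the paper: take $\vu$ itself Kronecker-structured so that the inner product factorizes, then design a per-mode distribution that hits zero with probability $\tfrac12$ against every $\{\pm1\}^n$ vector. The paper establishes this via \Cref{thm:alphabet_lower_bounds} together with the tensorization argument of \Cref{thm:zero_testing_from_general_to_kronecker}.

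The one genuine (minor) difference is your random sign $\eta_i$. The paper's $n=2$ construction (part~1 of \Cref{thm:alphabet_lower_bounds}) is exactly your construction specialized to $n=2$, but for general $n$ the paper's construction (part~2) uses the fixed pattern $\ve_j - \ve_k$, which only guarantees $\Pr[\langle \vu_i,\vv_i\rangle=0]\ge \tfrac12\cdot\tfrac{n-2}{n-1}$ in the worst case over $\vv_i\in\{\pm1\}^n$. Your random-sign twist makes the probability \emph{exactly} $\tfrac12$ uniformly in $n\ge 2$, so you actually recover the stated $1-2^{-q}$ bound for all $n$, not just $n=2$. This is a small but clean improvement over what the paper's general-alphabet argument delivers for this specific alphabet.
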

Again, we prove this result in broader generality in \Cref{sec:zero-testing}, with respect to an arbitrary alphabet.

We also take a moment to reflect further on another observation from \cite{meyer2023hutchinson}: the complex Kronecker matrix-vector oracle is different from the real Kronecker matrix-vector oracle.
That is, if we allow \(\vv = \otimes_{i=1}^q \vv_i\) where \(\vv_i \in \bbC^n\), this model is more expressive than the real-valued Kronecker matrix-vector model.
In particular, it takes up to \(2^q\) real-valued Kronecker matrix-vector products to simulate computing a single complex Kronecker matrix-vector product.
We also analyze the complex case for the zero-testing problem, and show that zero testing with the \(\{\pm1,\pm i\}\) alphabet requires \(\Omega(1.25^q)\) queries, establishing that this is easier than the zero testing in the real \(\{\pm1\}\) alphabet.
However, this difference in base of exponent between \(2^q\) and \(1.25^q\) may also be attributed to the difference in the size of the alphabet, and so it remains unclear how to make an apples-to-apples comparison of the real and complex models and show that the complex model is fundamentally more query efficient.

% Good

\section{Proving \Cref{thm:top-eig-lower-bound} and \Cref{thm:trace-lower-bound}}
\label{sec:conditioned}

In this section, we outline the proof techniques for \Cref{thm:top-eig-lower-bound} and \Cref{thm:trace-lower-bound}.
Both lower bounds rely on \Cref{lem:kron-unit-vec-conentration} as a starting point, as we can plant a very large random Kronecker-structured vector on some Gaussian data.
Since the inner product between our queries \(\vv\in\bbR^{n^q}\) and the planted vector \(\vu\in\bbR^{n^q}\) is tiny, our queries cannot reliably identify the planted structure \vu.
More specifically, the noise from the Gaussian data hides the impact of the inner product between \vv and \vu on our queries.
In the following subsections, we formalize this idea.

More broadly, our proofs hold against adaptive algorithms.
That is, the algorithm is allowed to use previous responses from the oracle to decide what query to compute next.
We handle adaptivity by generalizing the proof techniques in \cite{simchowitz2017gap}, who propose an information-theoretic structure to lower bound the number of matrix-vector products needed to solve certain linear algebra problems.
In \Cref{app:explain-simchowitz}, we generalize their techniques in order to give lower bounds against \emph{arbitrary constrained matrix-vector models}.
For instance, while we constrain ourselves to use Kronecker-structured matrix-vector products, we could instead analyze sparse query vectors instead though this model.
We leave the broader implications of this generalized lower bound as future work.

\subsection{Proof Sketch of Trace Estimation Lower Bound}
\label{sec:trace-est-proof-sketch}

We now outline the proof of \Cref{thm:trace-lower-bound}.
We start by invoking a related but different query complexity problem in a related but different computational model.

\begin{definition}
    Fix a vector \(\va \in \bbR^{n^q}\).
    The Kronecker-Structured Linear Measurement Oracle for \va is the oracle that, given any vectors \(\vv_1,\ldots,\vv_q \in \bbR^n\), returns the inner product \(\langle \va, \vx_1 \otimes \cdots \otimes \vx_q\rangle \in \bbR\).
    \[
        (\vx_1, \cdots, \vx_k)
		\hspace{0.1cm} \xRightarrow{input} \hspace{0.1cm}
		\textsc{oracle}
		\hspace{0.1cm} \xRightarrow{output} \hspace{0.1cm}
		\langle \va, \vx_1 \otimes \cdots \otimes \vx_q \rangle
    \]
\end{definition}
\begin{theorem}
    \label{thm:l2-estimation-lower-bound}
    Any \(\kappa-\)conditioned Kronecker-Structured Linear Measurement algorithm that can estimate the squared L2 norm \(\norm{\va}_2^2\) to relative error \((1\pm\eps)\) with probability \(\frac23\) must use at least \(t = \Omega(\min\{C_0^{q/2}, \frac{C_\tau^{q/2}}{\kappa^2 \sqrt\eps}\})\) queries.
\end{theorem}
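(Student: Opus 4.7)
The plan is a planted-versus-null hypothesis testing argument in the style of Simchowitz--Recht--Jordan, with \Cref{lem:kron-unit-vec-conentration} providing the Kronecker orthogonality ingredient that drives the exponential lower bound. I would set the null instance to $\va^{(0)}\sim\cN(\vzero,\mI_{n^q})$ and the planted instance to $\va^{(1)} = \vg + R\vu$, where $\vg\sim\cN(\vzero,\mI_{n^q})$ is independent of a uniformly random Kronecker unit vector $\vu = \vu_1\otimes\cdots\otimes\vu_q$ and $R^2 = \Theta(\eps\,n^q)$. Standard Gaussian concentration gives $\|\va^{(0)}\|_2^2 = (1\pm o(1))\,n^q$ and $\|\va^{(1)}\|_2^2 = (1\pm o(1))(1+\Theta(\eps))\,n^q$ with high probability, so any $(1\pm\eps/4)$-relative estimator succeeding with probability $\tfrac{2}{3}$ must distinguish the two worlds, reducing the theorem to an information-theoretic lower bound on the number of queries needed to separate them.

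The heart of the proof is a Gaussian KL calculation. Stacking a non-adaptive sequence of queries into $\mV = [\vv^{(1)}\;\cdots\;\vv^{(t)}]$, the transcript is $\cN(\vzero,\mV^\intercal\mV)$ in the null world and $\cN(R\,\mV^\intercal\vu,\mV^\intercal\mV)$ in the planted world (conditioned on $\vu$), so
\[
    D_{KL} \;=\; \tfrac{R^2}{2}\,\vu^\intercal\mV(\mV^\intercal\mV)^{-1}\mV^\intercal\vu \;\leq\; \tfrac{R^2}{2\,\sigma_{\min}(\mV)^2}\,\|\mV^\intercal\vu\|_2^2.
\]
Because each column $\vv^{(i)}$ is Kronecker-structured, \Cref{lem:kron-unit-vec-conentration} applied to $\vv^{(i)}/\|\vv^{(i)}\|_2$ together with a union bound yields $\|\mV^\intercal\vu\|_2^2 \leq \|\mV\|_F^2 \cdot C_\tau^{-q}/n^q$ with probability at least $1 - t\,C_0^{-q}$, which is meaningful whenever $t \ll C_0^{q/2}$. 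The conditioning hypothesis $\sigma_{\max}(\mV)/\sigma_{\min}(\mV)\leq \kappa$ bounds $\|\mV\|_F^2/\sigma_{\min}(\mV)^2 \leq t\kappa^2$, and substituting $R^2 = \Theta(\eps\,n^q)$ yields $D_{KL} = O(\eps\,t\,\kappa^2/C_\tau^q)$. Requiring $D_{KL} = O(1)$ so that Pinsker/Le Cam block the test then produces the target lower bound $t = \Omega(\min\{C_0^{q/2},\;C_\tau^{q/2}/(\kappa^2\sqrt\eps)\})$, up to absorbing constants into $C_\tau$ and $C_0$.

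The main obstacle is promoting this non-adaptive bound to all adaptive algorithms, because an adaptive query $\vv^{(i)}$ may depend on previous responses and is therefore statistically coupled with $\vu$ under the planted world, invalidating the clean union bound above. I would address this by importing the conditional information-theoretic framework of \cite{simchowitz2017gap}, extended to constrained matrix-vector models as sketched in \Cref{app:explain-simchowitz}, and applying the KL chain rule along the filtration generated by the transcript. The key subroutine is a posterior-stability step: if the accumulated KL so far is small, then the posterior of $\vu$ given the observed responses is close in total variation to the uniform Kronecker prior, so \Cref{lem:kron-unit-vec-conentration} still controls $\langle\vv^{(i)},\vu\rangle^2$ up to an $o(1)$ correction even when $\vv^{(i)}$ is chosen adaptively. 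Making this posterior-stability step quantitative---and in particular verifying that the Kronecker concentration lemma, stated for the exactly uniform prior, survives the passage to approximately uniform distributions and composes cleanly along the filtration---is the most delicate piece of the argument.
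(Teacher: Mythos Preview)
Your proposal is correct and matches the paper's approach closely: the same planted-versus-null reduction with $R^2=\Theta(\eps\,n^q)$, the same non-adaptive Gaussian KL calculation leveraging \Cref{lem:kron-unit-vec-conentration} together with the $\kappa$-conditioning hypothesis, and the same appeal to the \cite{simchowitz2017gap} framework for the adaptive extension. One small clarification: the paper carries out the adaptive step via the \emph{truncated $\chi^2$-divergence} machinery (truncate to the event that every query has small inner product with $\vu$, then bound both the truncation probability and the conditional $\chi^2$ divergence using an unrolling lemma) rather than a direct posterior-stability/KL-chain-rule argument, and it is precisely this machinery that produces the square-root loss $C_\tau^{q/2}/(\kappa^2\sqrt\eps)$ relative to the non-adaptive $C_\tau^{q}/(\kappa^2\eps)$---not merely an absorption of constants.
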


First note that \Cref{thm:l2-estimation-lower-bound} suffices to prove \Cref{thm:trace-lower-bound}.
This is because any vector-matrix-vector trace estimation method can be used to construct a linear measurement algorithm.
That is, suppose that some vector-matrix-vector algorithm can estimate the trace of any PSD matrix \(\mA\in\bbR^{n^q \times n^q}\) with probability \(\frac23\) using \(t\) queries.
We can then fix the input matrix \(\mA = \va\va^\intercal\), where \(\va\) is the vector as in \Cref{thm:l2-estimation-lower-bound}.
Then, a vector-matrix-vector product with \mA is \(\vv^\intercal\mA\vv = \langle \vv, \va\rangle^2\), which is the square of a linear measurement with \va.
Further, \(\tr(\mA) = \tr(\va\va^\intercal) = \norm{\va}_2^2\).
So, we must have that the number of vector-matrix-vector queries made by \mA cannot violate the linear measurement lower bound in \Cref{thm:l2-estimation-lower-bound}.

So, our goal is now to prove \Cref{thm:l2-estimation-lower-bound}.
The crux of the proof is to use \Cref{lem:kron-unit-vec-conentration} to show that no Kronecker matrix-vector method can distinguish between linear measurements between two vectors:
\begin{problem}
    \label[problem]{prob:l2-testing}
    Fix \(n,q\in\bbN\) and \(\lambda = 6\sqrt\eps\).
    Let \(\vg\in\bbR^{n^q}\) be a \(\cN(\vec0,\mI)\) vector, and let \(\vu = \vu_1 \otimes \cdots \otimes \vu_q\) where each \(\vu_i \in \bbR^n\) is distributed uniformly on the set of vectors with \(\norm{\vu_i}_2^2 = n\).
    Further, let
    \vspace{-0.5em}
    \[
    \va_0 \defeq \vg
    \hspace{1cm} \text{and} \hspace{1cm}
    \va_1 \defeq \vg + \lambda\vu.
    \]
    \vspace{-0.1em}
    Suppose that nature samples \(i\in\{0,1\}\) uniformly at random.
    An algorithm then computes \(t\) linear measurements with \(\va \defeq \va_i\) and has to guess if \(\va=\va_0\) or \(\va=\va_1\).
\end{problem}
In \Cref{app:l2-estimation-formal-lower} we formally prove the exponential lower bound against \Cref{prob:l2-testing} as stated in \Cref{thm:l2-estimation-lower-bound}.
We do take a moment to sketch the proof here though.

Consider a non-adaptive Kronecker-structured linear measurement algorithm for \Cref{prob:l2-testing}.
If a method is non-adaptive, then we can think of it as a method that picks a matrix \mV with Kronecker-structured columns and computes \(\mV^\intercal\va = [\langle \vv^{(1)}, \va\rangle ~ \cdots ~ \langle \vv^{(t)}, \va \rangle]\).
So, to prove our lower bound against non-adaptive algorithms, we need to show that for all \mV with \(t\) Kronecker-structured columns and condition number at most \(\kappa\), it is not possible to distinguish \(\vw_0 \defeq \mV^\intercal\va_0\) from \(\vw_1 \defeq \mV^\intercal\va_1\).

We start by examining these two distributions.
Because \vg is Gaussian, we know that \(\vw_0 \sim \cN(\vec0,\mV^\intercal\mV)\).
Similarly, for a fixed value of \(\vu\), we know that \(\vw_1 \sim \cN(\lambda\mV^\intercal\vu, \mV^\intercal\mV)\).
These two distributions differ only in their means and share the same covariance matrix.
In particular, we can easily bound the KL Divergence between \(\vw_0\) and \(\vw_1\) for a fixed value of \(\vu\):
\begin{equation}
    D_{KL}(\vw_0\|\vw_1|\vu)
    = \frac{\lambda^2}{2} \vu^\intercal\mV(\mV^\intercal\mV)^{-1}\mV^\intercal\vu.
    \label{eqn:nonadaptive-dkl}
\end{equation}
This follows from the KL divergence between \(\cN(\vecalt\mu,\mSigma)\) and \(\cN(\vec0,\mSigma)\) being exactly \(\frac12 \vecalt\mu^\intercal\mSigma^{-1}\vecalt\mu\).
We can then use a union bound with \Cref{lem:kron-unit-vec-conentration} to say that \(\norm{\mV^\intercal\vu}_2^2 = \sum_{i=1}^t \langle \vv^{(i)}, \vu\rangle^2 \leq t C_\tau^{-q}\) with probability at least \(1-tC_0^{-q}\).
So, we can bound
\begin{align*}
    D_{KL}(\vw_0\|\vw_1|\vu)
    &= \frac{\lambda^2}{2} \vu^\intercal\mV(\mV^\intercal\mV)^{-1}\mV^\intercal\vu \\
    &\leq \frac{\lambda^2}{2} \norm{\mV^\intercal\vu}_2^2 \norm{(\mV^\intercal\mV)^{-1}}_2 \\
    &\leq \frac{\lambda^2}{2} (t C_\tau^{-q}) \kappa^2
\end{align*}
where the last line uses that we can take the columns of \mV to be unit vectors without loss of generality, so that
\[
    \norm{(\mV^\intercal\mV)^{-1}}_2
    = \frac{1}{\sigma_{\min}(\mV)^2}
    \leq \frac{\sigma_{\max}(\mV)^2}{\sigma_{\min}(\mV)^2}
    \leq \kappa^2.
\]
By Pinsker's Inequality and the Neyman-Pearson Lemma \cite{csiszar2011information, neyman1933ix}, we know that \(\vw_0\) and \(\vw_1\) cannot be distinguished with probability \(\frac23\) so long as their KL divergence is at most \(O(1)\), which happens when \(t = O(\frac{C_\tau^q}{\kappa^2\lambda^2}) = O(\frac{C_\tau^q}{\kappa^2\eps})\).
Mixed with the requirement that our union bound earlier hold with high probability, we also require that \(t = O(C_0^q)\).
This yields our overall lower bound, showing that \(\vw_0\) and \(\vw_1\) cannot be distinguished when both \(t = O(C_0^q)\) and \(t = O(\frac{C_\tau^q}{\kappa^2\eps})\), completing the lower bound.

We note the above lower bound holds only against non-adaptive algorithms.
In \Cref{app:l2-estimation-formal-lower}, we adapt a proof strategy from \cite{simchowitz2017gap} to show that adaptivity cannot help much.
That proof is much more involved, and the fundamental intuitions unique to our method are well captured by the analysis above.
In brief, the proof against adaptive methods shows that at every point of time \(i\in[t]\), the algorithm does not suddenly learn new information about the direction of \vu, owing to \Cref{lem:kron-unit-vec-conentration}.
This analysis gives us the benefit of proving a lower bound against adaptive methods, but comes with the downside of having slightly worse rates, giving \(t = \Omega(\min\{C_0^{q/2}, \frac{C_\tau^{q/2}}{\kappa^2 \sqrt\eps}\}\) in the adaptive case as opposed to \(t = \Omega(\min\{C_0^{q}, \frac{C_\tau^{q}}{\kappa^2 \eps}\}\) in the non-adaptive one.

\subsection{Proof Sketch of Spectral Norm Estimation Lower Bound}

In this section, we outline the proof of \Cref{thm:top-eig-lower-bound}.
We follow the proof strategy in \cite{simchowitz2017gap} again here.
In \cite{simchowitz2017gap}, the authors show lower bounds against distinguishing between two matrices from matrix-vector products.
Specifically, they let \(\mG\in\bbR^{D \times D}\) be a matrix with iid \(\cN(0,1)\) entries and let \(\vu\in\bbR^{D}\) be a random unit vector in \(\bbR^D\).
They show that distinguishing between
\[
    \mA_0 = \frac{\mG+\mG^\intercal}{\sqrt{2D}}
    \hspace{0.5cm}\text{and}\hspace{0.5cm}
    \mA_1 = \frac{\mG+\mG^\intercal}{\sqrt{2D}} + \lambda \vu\vu^\intercal
\]
requires computing at least \(t=\Omega(\frac{\log(D)}{\log(\lambda)})\) classical (non-Kronecker) matrix-vector products.
We take \(D = n^q\).
We abstract out their analysis in \Cref{app:explain-simchowitz}, allowing us to pick a different distribution over unit vectors \vu and restricting the set of matrix-vector query vectors to be Kronecker-structured.
Fundamentally, by taking \(\vu\) to instead be the Kronecker product of iid unit vectors in \(\bbR^n\), we can against take advantage of \Cref{lem:kron-unit-vec-conentration} much like in trace estimation lower bound of \Cref{sec:trace-est-proof-sketch}.
Intuitively, we again have that the inner products between our query vectors and the planted random vector are exponentially small, and therefore at every time step \(i \in [t]\) of the algorithm, it is exceedingly unlikely that the matrix-vector algorithm suddenly goes from having small inner product with \vu to having a large inner product with \vu.

Formally, we prove the following distinguishing lower bound:
\begin{theorem}
    \label{thm:kron-matvec-testing-lower}
    Consider the problem using Kronecker matrix-vector products to test if \(\mA = \mA_0\) or \(\mA = \mA_1\) as shown above, where \(\vu = \vu_1 \otimes \cdots \otimes \vu_q\) for iid uniformly random unit vectors \(\vu_i \in \bbR^n\).
    Then, any \(\kappa-\)conditioned Kronecker matrix-vector algorithm needs at least \(t = \Omega(\min\{C_0^{q/2}, \frac{C_\tau^q}{\lambda^2\kappa^2}\})\) queries to correctly identify \(\mA\) with probability \(\frac{2}{3}\).
\end{theorem}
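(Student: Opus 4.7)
The plan is to apply the adaptive Simchowitz-style hypothesis testing framework (generalized to constrained matrix-vector models in \Cref{app:explain-simchowitz}), using \Cref{lem:kron-unit-vec-conentration} as the only problem-specific ingredient. Let $P_0$ and $P_1$ denote the laws of the full query--response transcript $(\vv^{(i)}, \mA \vv^{(i)})_{i=1}^t$ when the hidden matrix is $\mA_0$ and $\mA_1$ respectively. By Le Cam and Pinsker, any test succeeding with probability $\tfrac{2}{3}$ must satisfy $D_{KL}(P_0 \,\|\, P_1) = \Omega(1)$, so the goal is to upper bound this KL divergence.

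I would decompose $D_{KL}(P_0 \,\|\, P_1)$ by the chain rule over the algorithm's $t$ time steps. Conditioned on the filtration $\cF_{i-1}$ of past queries and responses and on $\vu$, the next response $\mA \vv^{(i)}$ is Gaussian under both hypotheses with the same conditional covariance; the means differ by $\lambda \langle \vu, \vv^{(i)}\rangle \vu$ projected onto the orthogonal complement of the previously observed responses. A direct Gaussian KL computation gives a per-step contribution of order $\lambda^2 \langle \vu, \vv^{(i)}\rangle^2$, so that after normalizing the queries to unit norm (using $\kappa$-conditioning to absorb the rescaling into the final constant) we obtain
\[
    D_{KL}(P_0 \,\|\, P_1) \;\lesssim\; \kappa^2 \lambda^2 \sum_{i=1}^{t} \langle \vu, \vv^{(i)}\rangle^2 .
\]

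The central step is then to bound $\sum_i \langle \vu, \vv^{(i)}\rangle^2$ via \Cref{lem:kron-unit-vec-conentration}. In the non-adaptive case the queries are fixed Kronecker unit vectors, so each term is at most $C_\tau^{-q}$ with probability $1 - C_0^{-q}$, and a direct union bound gives $\sum_i \langle \vu, \vv^{(i)}\rangle^2 \lesssim t \, C_\tau^{-q}$ on a high-probability event. In the adaptive case, $\vv^{(i)}$ may depend on $\vu$ through past responses, so the lemma cannot be applied naively. Here I would invoke the framework of \Cref{app:explain-simchowitz}: by sequentially conditioning on a good event that the algorithm's previous responses have revealed only a low-leverage component of $\vu$, one can reapply \Cref{lem:kron-unit-vec-conentration} at each step as if $\vu$ were fresh on a codimension-$(i-1)$ subspace. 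This bookkeeping is what produces the $C_0^{q/2}$ (rather than $C_0^q$) in the union-bound term. Combining the two estimates, the test fails whenever $\kappa^2 \lambda^2 t \cdot C_\tau^{-q} = O(1)$ and $t \cdot C_0^{-q/2} = O(1)$ simultaneously, yielding the claimed $t = \Omega(\min\{C_0^{q/2}, C_\tau^q / (\lambda^2 \kappa^2)\})$.

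The main obstacle is precisely this adaptive bookkeeping: the inner products $\langle \vu, \vv^{(i)}\rangle$ are not independent, since $\vv^{(i)}$ is chosen after observing responses that already carry some information about $\vu$. The remedy, which I would import from \Cref{app:explain-simchowitz} as a black box, is a filtration argument under which at every step the conditional distribution of $\vu$ remains (approximately) uniform on the unrevealed subspace, so that \Cref{lem:kron-unit-vec-conentration} continues to apply to the freshly chosen query. Modulo this abstraction, the Kronecker-specific content of the proof reduces to substituting our tensorized orthogonality lemma for the spherical-vector concentration used in the original \cite{simchowitz2017gap} argument.
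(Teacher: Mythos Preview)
Your outline has a real gap in how it passes from the per-step Gaussian calculation to the final bound. The displayed inequality $D_{KL}(P_0\|P_1)\lesssim \kappa^2\lambda^2\sum_i\langle\vu,\vv^{(i)}\rangle^2$ is incoherent as written: $P_1$ is the \emph{mixture} over $\vu$, so the right-hand side cannot depend on $\vu$. If you mean to bound $D_{KL}(P_0\|P_{1,\vu})$ for each fixed $\vu$ and then pass to the mixture via convexity, the Kronecker gain evaporates: $\E_\vu\langle\vu,\vv^{(i)}\rangle^2=1/n^q$ for every Kronecker unit query, so the averaged KL is $\Theta(\lambda^2\kappa^2 t)$, giving only the classical (non-Kronecker) bound. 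The concentration in \Cref{lem:kron-unit-vec-conentration} is a tail statement, not a moment statement, so to exploit it you must truncate to the good event and control both pieces. In the adaptive case that truncation event depends jointly on $\vu$ and on the transcript, and your description of how \Cref{app:explain-simchowitz} resolves this (``the conditional distribution of $\vu$ remains approximately uniform on the unrevealed subspace'') is not what the framework does.

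What the paper actually invokes is \Cref{thm:abstract-simchowitz-requirements}, which is a truncated $\chi^2$ (second-moment) bound rather than a KL bound. This requires verifying two conditions. The first is the truncation-failure probability, controlled by $f(\tau)\le C_0^{-q}$ from \Cref{lem:kron-unit-vec-conentration}; you have this. The second is a bound on $\E_{\vu,\vu'}\big[e^{\lambda^2\kappa^2|\langle\vu,\vu'\rangle|\sum_i\tau_i+\cdots}\big]$ for \emph{two independent} planted copies $\vu,\vu'$. This two-copy structure is precisely what lets the argument beat the averaged-KL bound: the cross term $|\langle\vu,\vu'\rangle|$ is tiny, and the paper controls it via \Cref{lem:kron-unit-vec-mgf} (though even the crude bound $|\langle\vu,\vu'\rangle|\le 1$ already suffices here). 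Your plan names \Cref{lem:kron-unit-vec-conentration} as ``the only problem-specific ingredient'' and never touches the two-copy condition, so as stated it does not instantiate the black box.
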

We prove \Cref{thm:kron-matvec-testing-lower} in \Cref{app:formal-kronmatvec-lower}.
The key payoff from this testing lower bound comes from comparing the spectral norms of \(\mA_0\) and \(\mA_1\).
The spectral norm of \(\mA_0\) is at most \(O(1)\) while the spectral norm of \(\mA_1\) is \(\Omega(\lambda)\) for large \(\lambda\).
In particular, if we take \(\lambda = C_\tau^{q/2}\) then we get the following lower bound:
\begin{corollary}
    \label{corol}
    There exists a number \(C>1\) such that any \(\kappa\)-conditioned Kronecker matrix-vector algorithm that can determine if \(\norm{\mA}_2 \leq 3\) or \(\norm{\mA}_2 \geq C^q\) with probability at least \(\frac23\) must use at least \(t = \Omega(C_0^{q/2}, \frac{C^{q}}{\kappa^2})\) queries.
\end{corollary}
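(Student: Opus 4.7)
The plan is to derive the corollary as a direct reduction from \Cref{thm:kron-matvec-testing-lower}, treating the testing problem between $\mA_0$ and $\mA_1$ as a hard instance of spectral norm estimation. The core observation is that $\mA_0$ is a normalized GOE-style matrix with spectral norm concentrated near $2$, while $\mA_1$ is a rank-one additive perturbation whose spectral norm is at least $\lambda$ minus a small constant. Hence any algorithm that can distinguish $\|\mA\|_2\leq 3$ from $\|\mA\|_2\geq C^q$ with probability $\tfrac23$ can solve the testing problem with the same probability and the same number of queries.

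Concretely, I would first bound $\|\mA_0\|_2$ by a standard Wigner-type tail bound: since $\mA_0 = (\mG + \mG^\intercal)/\sqrt{2D}$ with $D = n^q$, we have $\|\mA_0\|_2 \leq 2 + o(1)$ with probability $1 - o(1)$, so in particular $\|\mA_0\|_2 \leq 3$ with probability at least, say, $0.99$ for all sufficiently large $D$. Next, since $\vu = \vu_1\otimes\cdots\otimes\vu_q$ is a unit vector, the Rayleigh quotient gives
\[
\|\mA_1\|_2 \;\geq\; \vu^\intercal \mA_1 \vu \;=\; \vu^\intercal \mA_0 \vu + \lambda \;\geq\; \lambda - \|\mA_0\|_2 \;\geq\; \lambda - 3
\]
on the same high-probability event. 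So by choosing $\lambda$ appropriately, the two matrices lie on opposite sides of a prescribed spectral-norm threshold with high probability, and the testing problem reduces to spectral-norm gap estimation at negligible loss in success probability.

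The main (and only) technical point is balancing $\lambda$ against $C$ so that both the spectral gap and the query lower bound scale like $C^q$. \Cref{thm:kron-matvec-testing-lower} gives a lower bound of $\Omega(\min\{C_0^{q/2}, C_\tau^q / (\lambda^2 \kappa^2)\})$, and I want the gap $\lambda - 3 \geq C^q$ with simultaneously $C_\tau^q/\lambda^2 \geq C^q$. Setting $\lambda = C_\tau^{q/3}$ makes $C_\tau^q/\lambda^2 = C_\tau^{q/3}$, so both quantities are $C_\tau^{q/3}$. Therefore I would take any constant $C \in (1, C_\tau^{1/3})$; then for all sufficiently large $q$ we have $\lambda - 3 \geq C^q$ and $C_\tau^q/\lambda^2 \geq C^q$, yielding the advertised lower bound of $\Omega(\min\{C_0^{q/2}, C^q/\kappa^2\})$ queries.

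To assemble: assume toward contradiction that a $\kappa$-conditioned algorithm $\cA$ distinguishes $\|\mA\|_2 \leq 3$ from $\|\mA\|_2 \geq C^q$ with probability $\tfrac23$ using fewer queries than the stated bound. Feed $\cA$ the random instance $\mA_i$ from \Cref{thm:kron-matvec-testing-lower}. By a union bound over the two high-probability events $\{\|\mA_0\|_2 \leq 3\}$ and $\{\|\mA_1\|_2 \geq C^q\}$, $\cA$'s output identifies $i$ with probability at least, say, $\tfrac23 - 0.02 > \tfrac35$, which after a constant-factor amplification of $\cA$'s success probability contradicts \Cref{thm:kron-matvec-testing-lower}. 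I do not foresee any genuine obstacle; the only thing to be careful about is that the parameter choice $\lambda = C_\tau^{q/3}$ makes both terms in the $\min$ scale exactly as $C^q$, so the constant $C$ in the corollary statement is forced to satisfy $C < C_\tau^{1/3}$ (together with the trivial condition $C < C_0^{1/2}$ already present via the other term of the min).
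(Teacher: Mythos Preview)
Your approach is essentially the same reduction the paper sketches: bound $\|\mA_0\|_2\leq 3$ via the standard Wigner tail, bound $\|\mA_1\|_2\geq \lambda-3$ via the Rayleigh quotient at $\vu$, and invoke \Cref{thm:kron-matvec-testing-lower}. The one substantive difference is the parameter choice. The paper states ``take $\lambda=C_\tau^{q/2}$,'' but plugging that into the testing lower bound $\Omega(\min\{C_0^{q/2},\,C_\tau^q/(\lambda^2\kappa^2)\})$ yields only $\Omega(\min\{C_0^{q/2},\,1/\kappa^2\})$ in the second term, which does not match the corollary's claimed $C^q/\kappa^2$. Your choice $\lambda=C_\tau^{q/3}$ balances the spectral-norm threshold and the query bound so that both scale like $C_\tau^{q/3}$, which is exactly what is needed to realize the corollary as written with a single constant $C\in(1,C_\tau^{1/3})$. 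In short, your argument is the intended one, and your parameter tuning actually repairs a slip in the paper's sketch.
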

This means that even computing an overwhelmingly coarse approximation to the spectral norm of a matrix must incur an exponential query complexity.
This corollary directly implies \Cref{thm:top-eig-lower-bound}.

% Good

\section{Zero Testing}
\label{sec:zero-testing}

In this section, we consider the zero-testing problem with Kronecker measurements.  That is, we suppose that we have a nonzero tensor $\tA \in (\R^{n})^{\otimes q}.$ How many Kronecker structured measurements of the form $v_1 \otimes \ldots \otimes v_q$ do we need to show that $\tA$ is nonzero?

As it turns out, the most difficult case for zero-testing is when $\tA$ itself has Kronecker structure.  When we can write $\tA = \va_1 \otimes \ldots \otimes \va_q$, then each measurement of $\tA$ gives a result of the form $\prod_i (\vv_i^\intercal \va_i),$ which is $0$ as long as at least one of the terms in the product is $0.$   This suggests that we should first study the zero-testing problem in the non-Kronecker setting.

Here, we make the additional assumption that the entries of each $\va_i$ come from a fixed ``alphabet" that we call $\alphabet \subseteq \C.$  This assumption may seem strange at first, but one motivation is that in the non-Kronecker setting, trace estimators such as Hutchinson typically only require that one sketch using Rademacher random vectors.  If one attempts to use a Kronecker product of Rademacher vectors, then trace estimation turns out to require a number of measurements that is exponential in $q.$  The zero-testing problem gives a simpler setting in which to observe this exponential dependence.  Indeed the reason is quite similar to our norm-estimation results -- Kronecker products of Rademacher can be orthogonal to a fixed tensor with high probability, just as how Kronecker products of Gaussians are typically very nearly orthogonal to one another.

To set up some notation, suppose we have a tensor $\tA \in (\R^{n})^{\otimes q}.$  For $\vv \in \R^{n}$ say that the measurement of $\tA$ along mode $i$ by $\vv$ is the tensor in $(\R^{n})^{\otimes q-1}$ that results from taking the inner product of $\vv$ against each of the mode $i$ fibers.
We use the notation $\tA \times_i \vv.$
This is the \emph{Modal Product} as defined in \cite{golub2013matrix}.

The following definitions will be useful for writing our upper and lower bounds with respect to given alphabets.

\begin{definition}
For a given alphabet $\alphabet$, and a field $\mathbb{F}$, either $\R$ or $\C$, let 
\[
P_{\mathbb{F}}(\alphabet, n) = \min_{\mathcal{D}} \max_{\vu \in \alphabet^n} \Pr_{\vv \sim \mathcal{D}}[\vv^\intercal \vu \neq 0],
\]
where $\mathcal{D}$ ranges over all probability distributions on the nonzero vectors in $\mathbb{F}^n.$ When $\mathbb{F}$ is not specified, we assume that $\mathbb{F} = \R.$

Similarly, we define
\[
Q_{\mathbb{F}}(\Sigma, n) = \max_{\mathcal{D}_{\alphabet}} \min_{\vv \in \mathbb{F}^n} \Pr_{\vu \sim \mathcal{D}_{\alphabet}}[\vv^\intercal \vu \neq 0],
\]
where $\mathcal{D}_{\alphabet}$ ranges over distributions on $\alphabet^n.$
\end{definition}

Intuitively, $P$ captures highest success probability that we can achieve for zero-testing on the hardest input distribution.  So upper-bounding $P$ can be used to give a zero-testing lower bound.

Similarly a lower bound on $Q$ shows that there is a distribution over measurements that has good success probability of giving a nonzero measurement on all inputs.  So a lower bound on $Q$ can be used to give an upper bound for the zero-testing problem.

\begin{theorem}
\label{thm:alphabet_lower_bounds}
We have the following.
\begin{enumerate}
\item $P(\{-1,1\}, 2) \leq \frac12$
\item For an arbitrary finite alphabet $\alphabet$, $P(\alphabet, n) \leq 1 - \frac{1}{{\abs{\alphabet}}}\frac{n-{\abs{\alphabet}}}{n-1}$
\item For an arbitrary finite alphabet $\alphabet$, $Q(\alphabet, n) \geq 1 - \frac{1}{\abs{\alphabet}}$

\item $P_{\mathbb{C}}(\{-1,1,i,-i\}, 2) = Q_{\mathbb{C}}(\{-1,1,i,-i\}, 2) = 3/4$

\end{enumerate}

\end{theorem}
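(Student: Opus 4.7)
The plan is to prove each of the four parts separately, with a common template. Parts (1), (2), and the $P \le 3/4$ half of (4) are upper bounds on $P$, proven by exhibiting an explicit distribution $\mathcal{D}$ over nonzero $\vv \in \mathbb{F}^n$ for which every $\vu \in \alphabet^n$ has low success probability. Part (3) and the $Q \ge 3/4$ half of (4) are lower bounds on $Q$, proven by exhibiting a distribution $\mathcal{D}_\alphabet$ over $\vu \in \alphabet^n$ that is rarely orthogonal to any fixed $\vv$. The matching $P \ge 3/4$ and $Q \le 3/4$ halves of (4) will follow from a pigeonhole over the four possible values of the ratio $\vu_2/\vu_1$; these will be the main obstacle, since they require leveraging the specific structure of the fourth roots of unity rather than a generic construction.

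For (1), I would take $\mathcal{D}$ uniform on the orthogonal pair $\{(1,1),(1,-1)\}$: every $\vu \in \{-1,1\}^2$ is a $\pm 1$ multiple of one of these and hence orthogonal to the other, so $\Pr[\vv^\intercal \vu \ne 0] = \tfrac{1}{2}$ for all four $\vu$. For (2), I would take $\mathcal{D}$ uniform over the $\binom{n}{2}$ edge vectors $\ve_i - \ve_j$ with $i<j$; then $\vv^\intercal \vu = \vu_i - \vu_j$, which vanishes iff $\vu_i = \vu_j$. Writing $m_\sigma$ for the number of coordinates of $\vu$ equal to $\sigma$, the number of vanishing pairs is $\sum_\sigma \binom{m_\sigma}{2}$, which by convexity of $\binom{x}{2}$ is bounded below by $\abs{\alphabet}\binom{n/\abs{\alphabet}}{2} = \frac{n(n-\abs{\alphabet})}{2\abs{\alphabet}}$; dividing by $\binom{n}{2}$ and subtracting from $1$ gives the claimed bound. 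For (3), I would take $\vu$ with iid uniform coordinates on $\alphabet$; for any fixed $\vv \ne 0$, pick a coordinate $j$ with $\vv_j \ne 0$ and condition on the remaining $\vu_i$'s. The map $\vu_j \mapsto \vv^\intercal \vu = \vv_j \vu_j + C$ is then an injection of $\alphabet$ into $\mathbb{F}$, so at most one value of $\vu_j$ makes the inner product vanish, giving $\Pr[\vv^\intercal \vu = 0] \le 1/\abs{\alphabet}$.

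For (4), the bound $Q \ge 3/4$ is immediate from (3) with $\abs{\alphabet}=4$, and for $P \le 3/4$ I would take $\mathcal{D}$ uniform on $\{(1,s) : s \in \{-1,1,i,-i\}\}$: for any $\vu \in \alphabet^2$ the equation $\vu_1 + s\vu_2 = 0$ has exactly one solution $s = -\vu_1/\vu_2$ among the four fourth roots of unity, so $\Pr[\vv^\intercal \vu \ne 0] = \tfrac{3}{4}$. The hard direction will rest on the structural observation that $\alphabet^2$ partitions into four classes under the ratio map $\vu \mapsto \vu_2/\vu_1 \in \{-1,1,i,-i\}$, and the annihilator of $\vu$ depends only on its class: it is the $1$-dimensional subspace $L_r = \{\vv \in \mathbb{C}^2 : \vv_1 + r\vv_2 = 0\}$. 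The four lines $L_r$ are pairwise distinct and so pairwise intersect only at the origin, so for any $\mathcal{D}$ on $\mathbb{C}^2 \setminus \{0\}$ the events $\{\vv \in L_r\}$ are pairwise disjoint and $\sum_r \Pr_{\vv}[\vv \in L_r] \le 1$; in particular some $r^*$ has $\Pr[\vv \in L_{r^*}] \le 1/4$, and choosing $\vu$ in class $r^*$ yields $\Pr[\vv^\intercal \vu \ne 0] \ge 3/4$, proving $P \ge 3/4$. Symmetrically, for any $\mathcal{D}_\alphabet$ on $\alphabet^2$, the four ratio-class probabilities sum to $1$, so some class has mass $\ge 1/4$; a $\vv$ lying on its associated line $L_{r^*}$ will satisfy $\vv^\intercal \vu = 0$ precisely on that class, giving $\Pr[\vv^\intercal \vu = 0] \ge 1/4$ and $Q \le 3/4$.
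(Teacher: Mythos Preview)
Your proof is correct and matches the paper's approach almost exactly for parts (1)--(3); the only cosmetic difference is that in (2) you invoke convexity of $\binom{x}{2}$ where the paper uses Cauchy--Schwarz on $\sum_i n_i^2$, and both yield the identical bound $\sum_i \binom{n_i}{2} \geq \frac{n(n-\abs{\alphabet})}{2\abs{\alphabet}}$.

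For part (4), the paper only argues the two directions $P_{\bbC} \leq 3/4$ and $Q_{\bbC} \geq 3/4$, using the same uniform distribution on $\{(1,s):s\in\{\pm1,\pm i\}\}$ that you use, and leaves the reverse inequalities unstated. Your direct pigeonhole arguments via the ratio map $\vu\mapsto \vu_2/\vu_1$ for $P_{\bbC}\geq 3/4$ and $Q_{\bbC}\leq 3/4$ are correct and make the equality fully self-contained. That said, both of these ``hard'' directions also follow in one line from the general weak-duality inequality $Q_{\bbF}(\alphabet,n)\leq P_{\bbF}(\alphabet,n)$: for any distributions $\cD$ on nonzero $\vv$'s and $\cD_{\alphabet}$ on $\vu$'s, the double expectation $\E_{\vv\sim\cD,\,\vu\sim\cD_{\alphabet}}[\indic[\vv^\intercal\vu\neq 0]]$ is simultaneously at most $\max_{\vu}\Pr_{\vv\sim\cD}[\vv^\intercal\vu\neq0]$ and at least $\min_{\vv}\Pr_{\vu\sim\cD_{\alphabet}}[\vv^\intercal\vu\neq0]$. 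So the part you flagged as the ``main obstacle'' is in fact the cheapest step.
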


\begin{proof}
\begin{enumerate}

\item Choose $\mathcal{D}$ to be uniform over $\{(1,1), (1,-1)\}.$ Then any vector $\vu$ in $\{-1,1\}^2$ has dot product $0$ with one element of $\{(1,1), (1,-1)\}.$ So if $\vv$ is uniform from $\{(1,1), (1,-1)\}$, then with probability $1/2$, $\vv^\intercal \vu = 0.$

\item Choose $\mathcal{D}$ to be the uniform distribution over vectors of support size $2$ whose first nonzero value is $1$ and whose second nonzero value is $-1$.
Let $\vv$ be drawn from $\mathcal{D}$ and let $i,j$ be the coordinates of its support. Now suppose that $\vu$ has entries in $\alphabet.$  Then $\vv^\intercal \vu = 0$ precisely when $[\vu]_i = [\vu]_j$.

For each \(k\in\alphabet\), let \(n_k\) denote the number of entries of \vu that take value \(k\).
The probability that $[\vu]_i = [\vu]_j$ is then
\[
\dbinom{n}{2}^{-1}\left(\dbinom{n_1}{2} + \dbinom{n_2}{2} + \ldots + \dbinom{n_L}{2}\right).
\]
We can bound this sum as
\begin{align*}
\sum_{i=1}^{\abs{\alphabet}} &\dbinom{n_i}{2} 
= \frac{1}{2}\sum_{i=1}^{\abs{\alphabet}} (n_i^2 - n_i) \\
&= \frac{1}{2}(\sum_{i=1}^{\abs{\alphabet}} n_i^2 - n)%\\
\geq \frac{1}{2}(\frac{n^2}{{\abs{\alphabet}}} - n).
\end{align*}

In the last line we used the bound 
\(
\sum_{i=1}^{\abs{\alphabet}} n_i^2 \geq \frac{1}{\abs{\alphabet}}\left( \sum_{i=1}^{\abs{\alphabet}} {n_i}\right)^2,
\)
which is a special case of Cauchy-Schwarz.
It follows that
\[
\Pr([\vu]_i = [\vu]_j) \geq \frac{1}{{\abs{\alphabet}}}\frac{n-{\abs{\alphabet}}}{n-1}.
\]

\item Choose $\mathcal{D}_{\alphabet}$ to to have i.i.d. entries over $\alphabet$ and let $\vu$ be drawn from $\mathcal{D}_{\alphabet}$  Let $i$ be the first nonzero coordinate of $\vv.$ Conditioned on all coordinates of $\vu$ except $i,$ the value of $\vv^\intercal \vu$ is uniform over a set of size $\abs{\alphabet}.$ Therfore $\vv^\intercal \vu$ is $0$ with probability at most $\frac{1}{\abs{\alphabet}}.$

\item To bound $P_{\mathcal{C}}$, choose the distribution $\mathcal{D}$ to be uniform over $\{(1,1), (1,-1) (1,i), (1,-i)\}.$  Now observe that any two-dimensional vector with entries in $\{\pm 1, \pm i\}$ is orthogonal to one of these four vectors.  So $P_{\mathcal{C}} \leq \frac{3}{4}.$

Similarly, for $Q_{\mathcal{C}}$ we choose our measurement distribution $\mathcal{D}_{\alphabet}$ to be uniform over $\{(1,1), (1,-1) (1,i), (1,-i)\}.$ These vectors are pairwise linearly independent, so any fixed $\vu$ is orthogonal to at most one of them.  Thus $Q_{\mathcal{C}} \geq 3/4.$

\end{enumerate}
\end{proof}

The following gives a general lower bound for the zero-testing problem via Kronecker measurements.  The idea is effectively to boost the analogous lower bound for non-Kronecker-structured measurements.  We also give a corresponding upper bound that works by reducing to the analogous upper bound for non-Kronecker-structured measurements inductively along each mode.

\begin{theorem}
\label{thm:zero_testing_from_general_to_kronecker}
\begin{enumerate}[label=(\roman*)]
\item Zero-testing of an arbitrary vector $\vv \in (\R^n)^{\otimes q}$ with $\frac23$ success probability, using Kronecker structured measurements in $(\alphabet^n)^{\otimes q}$ requires at least $\frac23\Omega(P_{\bbF}(\alphabet, n)^{-q})$ measurements.

\item Suppose that $\alphabet \subseteq \mathbb{F}.$ There is a zero-tester using Kronecker-structured measurements over the alphabet $\Sigma$, that succeeds with $\frac23$ probability and uses $2 Q_{\mathbb{F}}(\alphabet, n)^{-q}$ measurements.
\end{enumerate}

\end{theorem}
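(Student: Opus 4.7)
\medskip

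\noindent\textit{Plan for the proof.} The two parts are dual: in (i) we plant a random Kronecker-structured input that any fixed Kronecker query fails to detect with high probability, while in (ii) we draw a random Kronecker-structured query that detects any fixed nonzero tensor. In both cases, because Kronecker inner products factorize over modes, the per-mode detection probability raised to the $q$-th power controls the analysis. Combining the one-mode arguments underlying \Cref{thm:alphabet_lower_bounds} with independence across the $q$ modes is the main engine.

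For (i), I would follow a Yao-style argument. Pick a distribution $\cD^*$ on the nonzero vectors of $\mathbb{F}^n$ that attains (up to arbitrarily small slack) the infimum defining $P_{\mathbb{F}}(\alphabet, n)$, so that $\Pr_{\va \sim \cD^*}[\vu^\intercal \va \neq 0] \leq P$ for every $\vu \in \alphabet^n$. Take the hard input to be, with probability $\tfrac12$ each, either the zero tensor or the random Kronecker vector $\va^\star = \va_1 \otimes \cdots \otimes \va_q$ with independent $\va_i \sim \cD^*$. For any fixed Kronecker query $\vu = \vu_1 \otimes \cdots \otimes \vu_q$, the inner product factorizes as $\langle \va^\star, \vu\rangle = \prod_{i=1}^q \langle \va_i, \vu_i\rangle$, so independence of the $\va_i$'s gives $\Pr[\langle \va^\star, \vu\rangle \neq 0] \leq P^q$. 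To handle adaptive randomized algorithms, I would condition on the algorithm's private randomness $R$ and on the event $E_k$ that the first $k-1$ responses were zero. On $E_k$ the $k$-th query is a deterministic function of $R$ alone, so the pointwise bound applies, and summing the probabilities of the disjoint first-detection events $\{A_k \cap E_k\}$ gives $\Pr[\text{some query detects}\mid \va^\star \text{ planted}] \leq T P^q$. A standard total-variation argument between the two branches of the input mixture converts this into $T = \Omega(P^{-q})$ for $\tfrac23$-success.

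For (ii), pick $\cD^*$ on $\alphabet^n$ that attains the supremum defining $Q_{\mathbb{F}}(\alphabet, n)$, so that $\Pr_{\vu \sim \cD^*}[\vu^\intercal \vv \neq 0] \geq Q$ for every nonzero $\vv \in \mathbb{F}^n$. The algorithm draws $T = 2 Q^{-q}$ iid Kronecker queries with every factor iid from $\cD^*$, and outputs ``nonzero'' iff some measurement is nonzero. The crux is that any single such query detects a fixed nonzero tensor $\tA \in (\R^n)^{\otimes q}$ with probability at least $Q^q$, which I would prove by induction on $q$ using the modal product identity $\langle \tA, \vu_1 \otimes \cdots \otimes \vu_q\rangle = \langle \tA \times_1 \vu_1,\, \vu_2 \otimes \cdots \otimes \vu_q\rangle$. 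The base case $q=1$ is the definition of $Q$. For the inductive step, pick any nonzero mode-$1$ fiber $\va$ of $\tA$; then $\langle \vu_1, \va\rangle$ appears as an entry of $\tA \times_1 \vu_1$, so $\tA \times_1 \vu_1$ is nonzero with probability at least $Q$, and on that event the induction hypothesis applied to the order-$(q-1)$ tensor $\tA \times_1 \vu_1$ contributes another factor of $Q^{q-1}$. Independence of the $T$ Kronecker queries then yields $\Pr[\text{all fail}] \leq (1-Q^q)^T \leq e^{-2} < \tfrac13$, which combined with perfect success on the zero input gives $\tfrac23$ success overall.

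The main obstacle is the minor measure-theoretic technicality that the infimum defining $P_{\mathbb{F}}(\alphabet, n)$ need not be attained when $\mathbb{F}$ is infinite; an $\varepsilon$-optimal $\cD^*$ and $\varepsilon \to 0$ passes this to the $\Omega(\cdot)$ in the conclusion. Otherwise the proof is mostly bookkeeping: in (i) the adaptive argument must preserve independence between the algorithm's randomness and the planted input at every conditioning step, and in (ii) the inductive step relies on the one-line observation that a fixed nonzero fiber being non-orthogonal to $\vu_1$ is enough to make the whole modal product nonzero.
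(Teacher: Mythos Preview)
Your proposal is correct and follows essentially the same approach as the paper: plant a random Kronecker input drawn from the $P$-optimal distribution and use factorization plus a union bound for the lower bound, and draw iid Kronecker queries from the $Q$-optimal distribution and use the modal-product induction for the upper bound. You add welcome rigor the paper elides---the explicit handling of adaptivity via the ``zero-response transcript'' argument and the $\varepsilon$-optimal $\cD^*$ for the infimum defining $P$---but the core ideas are identical.
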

\begin{proof}
For the lower bound, let $\mathcal{D}$ be the distribution that achieves the minimum in the definition of $p(\alphabet, n).$  Let $\vv_1, \ldots, \vv_q$ be drawn independently from $\mathcal{D}.$  Let $\vx_1, \ldots, \vx_q$ be arbitrary fixed vectors in $\alphabet^n.$  Then we have
\[
(\vx_1 \otimes \ldots \otimes \vx_q)^\intercal (\vv_1 \otimes \ldots \otimes \vv_q) 
= (\vx_1^\intercal \vv_1)\ldots (\vx_q^\intercal \vv_q).
\]
Note that $\vx_i^\intercal \vv_i \neq 0$ with probability at most $p(\alphabet, n).$  Each of the terms $\vx_i^\intercal \vv_i$ is independent, and so the probability that the product is nonzero is at most $p(\alphabet, n)^q.$

Suppose that an algorithm makes $m$ Kronecker-structured measurements.  Then by a union bound, the probability that at least one of the measurements is nonzero is at most $m p(\alphabet, n)^q.$  The claim follows.

For the upper bound, choose our measurement vectors to be of the form $\vu_1 \otimes \ldots \otimes \vu_q$ where the $\vu_i$'s are i.i.d. from the distribution $\mathcal{D}_{\alphabet}.$  Then for a nonzero tensor $\tA$ have
\[
\inner{\tA}{\vu_1 \otimes \ldots \otimes \vu_q}
= \tA \times_1 \vu_1 \times_2 \vu_2 \ldots \times_q \vu_q.
\]
Since $\tA$ is nonzero, $\tA$ has some nonzero fiber along mode $1$, and therefore $\tA \times_1 \vu_1$ is nonzero with probability at least $Q_{\mathbb{F}}(\alphabet, n).$  Continuing inductively, the measurement above is nonzero with probability at least $Q_{\mathbb{F}}(\alphabet, n)^q.$  Given $m$ measurements of this form, the probability that all of them are $0$ is at most
\[
(1 - Q_{\mathbb{F}}(\alphabet, n)^q)^m
\leq \exp(-m Q_{\mathbb{F}}(\alphabet, n)^q),
\]
which is at most $1/4$ for $m\geq 2 Q_{\mathbb{F}}(\alphabet, n)^{-q}.$
\end{proof}

\iffalse
Let $\mathcal{U}$ and $\mathcal{V}$ be two collections of vectors in $\mathcal{C}^N.$  Define 
\[
p(\mathcal{U}, \mathcal{V}) = \max_{\mathcal{D}_{\mathcal{U}}} \min_{\mathcal{D}_{\mathcal{V}}} \Pr_{u \sim \mathcal{D}_{\mathcal{U}}, v \sim \mathcal{D}_{\mathcal{V}}} u^\intercal v \neq 0.
\]
Intuitively $\mathcal{D}_{\mathcal{V}}$ represent the worst 

\begin{lemma}

\end{lemma}
\fi

\begin{lemma}
\label{lem:from_zero_testing_to_trace_estimation}
An algorithm that performs constant-factor trace estimation requires at least $\frac{2}{3} P_{\mathbb{F}}(\alphabet, n)^{-q}$
Kronecker-structured vector-matrix-vector queries.
\end{lemma}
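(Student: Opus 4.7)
The plan is to reduce zero-testing with Kronecker-structured measurements from the alphabet $\alphabet$ to constant-factor trace estimation, then invoke the lower bound of \Cref{thm:zero_testing_from_general_to_kronecker}(i). Given any constant-factor trace estimator that uses $m$ Kronecker-structured vector-matrix-vector queries with entries in $\alphabet$, I will simulate it on a carefully chosen random rank-one PSD matrix to get a zero-tester making $m$ Kronecker-structured linear measurements with the same success probability; the stated lower bound then follows immediately.

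The reduction hands the trace estimator the rank-one PSD matrix $\mA = \vv\vv^\intercal$, where $\vv = \vv_1 \otimes \cdots \otimes \vv_q$ and $\vv_i \overset{\text{iid}}{\sim} \mathcal{D}$ for the distribution $\mathcal{D}$ achieving the minimum in the definition of $P_{\mathbb{F}}(\alphabet,n)$. The key identity is
\[
    \vx^\intercal \mA \vx
    = (\vx^\intercal \vv)^2
    = \prod_{i=1}^q (\vx_i^\intercal \vv_i)^2
\]
for any Kronecker-structured query $\vx = \vx_1 \otimes \cdots \otimes \vx_q$, so the query returns $0$ iff the underlying Kronecker linear measurement $\vx^\intercal\vv$ vanishes, while $\tr(\mA) = \norm{\vv}_2^2 > 0$ almost surely. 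A constant-factor estimator must therefore return $0$ on $\mA = \mat0$ and a positive value on $\mA = \vv\vv^\intercal$, each with probability at least $2/3$, using only the zero/nonzero pattern of the linear-measurement outputs.

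The final step is an indistinguishability argument. Fix the estimator's internal randomness: under $\mA = \mat0$ the adaptive trajectory $(\vx^{(1)}, \ldots, \vx^{(m)})$ is deterministic because every response equals $0$, and under $\mA = \vv\vv^\intercal$ the trajectory agrees with it on the event that every $(\vx^{(i)})^\intercal \vv$ vanishes. By the defining property of $P_{\mathbb{F}}(\alphabet,n)$ applied factorwise and a union bound over the $m$ queries, this coupling event has probability at least $1 - m\, P_{\mathbb{F}}(\alphabet,n)^q$ over $\vv$. On the coupling event the estimator sees identical transcripts under the two hypotheses and so outputs $0$ with probability at least $2/3$, which is wrong under $\mA = \vv\vv^\intercal$. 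Demanding success probability $2/3$ on both hypotheses forces $m \geq \tfrac{2}{3} P_{\mathbb{F}}(\alphabet,n)^{-q}$, exactly as in the proof of \Cref{thm:zero_testing_from_general_to_kronecker}(i). I expect the only real subtlety to be handling adaptivity, and it is resolved cleanly by this coupling: as long as the algorithm keeps seeing zeros its queries are literally the same random variables under $\mA = \mat0$ and $\mA = \vv\vv^\intercal$, so no information-theoretic machinery beyond a union bound is required.
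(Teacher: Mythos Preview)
Your proposal is correct and follows essentially the same approach as the paper: both construct the random rank-one PSD matrix $\mA = \vv\vv^\intercal$ with each $\vv_i$ drawn independently from the minimizing distribution $\mathcal{D}$ in the definition of $P_{\mathbb{F}}(\alphabet,n)$, observe that each Kronecker-structured query is nonzero iff the corresponding linear measurement of $\vv$ is nonzero, and union bound over the $m$ queries to conclude that all responses vanish with probability at least $1 - m\,P_{\mathbb{F}}(\alphabet,n)^q$. Your coupling argument for adaptivity (fixing the algorithm's randomness and comparing to the all-zeros ``default path'') is spelled out more carefully than in the paper, which simply invokes the union bound without comment; this is a genuine improvement in rigor, since for adaptive queries the $\vx^{(i)}$ depend on $\vv$ through prior responses, and the reason the union bound still applies is precisely the coupling you describe.
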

\begin{proof}
Let $\vx_1, \ldots, \vx_q$ be drawn from the distribution $\mathcal{D}$ in the definition of $P_{\mathbb{F}}(\alphabet, n).$
Set $\vx = \vx_1 \otimes \ldots \otimes \vx_q$.
Take our matrix $\mA$ to be $\vx\vx^\intercal$.

Suppose that we make \(t\) measurements of the form \(\mA\vv^{(i)}\), where \(\vv^{(i)}\) for \(i\in[t]\) has Kronecker structure and uses the alphabet \alphabet.
The result of the measurement is nonzero precisely when \(\vx^\intercal\vv^{(i)} \neq 0\).
The probability that this is nonzero is exactly \((P_{\bbF}(\alphabet,n))^q\).
By a union bound, the probability that at least one of the matrix-vector products is nonzero is at most \(t (P_{\bbF}(\alphabet,n))^q\).
On the other hand, a constant factor trace estimator must distinguish \mA from the \(\mat0\) matrix with probability \(\frac23\), so we need $t (P_{\bbF}(\alphabet, n))^q \geq \tsfrac23.$
from which the claim follows.
\end{proof}

Combining the previous results give the following bounds for zero-testing.

\begin{corollary}
\label{cor:zero_testing_lower_bound_special_cases}
Zero testing for a tensor in $(\R^n)^{\otimes q}$ with success probability $\frac23$
\begin{enumerate}
\item requires at least $\Omega(2^q)$ measurements over the alphabet $\{-1,1\}$ when $n=2.$
\item Requires at least $\Omega((1 - \frac{1}{\abs{\alphabet}} \frac{n-\abs{\alphabet}}{n-1})^q)$ for an arbitrary alphabet $\alphabet.$
\end{enumerate}
For zero testing for a tensor in \((\bbC^{n})^{\otimes q}\), it is necessary and sufficient to use $\Theta ((4/3)^q)$ measurements for the alphabet $\{-1, 1, i, -i\}$ when $n=2$.
\end{corollary}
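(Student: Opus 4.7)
The plan is to derive the corollary as an immediate consequence of the two main ingredients already established in this section: the alphabet-specific bounds on $P_{\mathbb{F}}(\alphabet,n)$ and $Q_{\mathbb{F}}(\alphabet,n)$ from \Cref{thm:alphabet_lower_bounds}, combined with the generic reduction in \Cref{thm:zero_testing_from_general_to_kronecker} that turns such bounds into Kronecker zero-testing query bounds by an independence-over-modes argument. So the work is almost entirely bookkeeping: plug the four parts of \Cref{thm:alphabet_lower_bounds} into part (i) and, where an upper bound is needed, part (ii) of \Cref{thm:zero_testing_from_general_to_kronecker}.

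For item 1 I would invoke \Cref{thm:alphabet_lower_bounds}(1), which gives $P(\{-1,1\},2) \leq \tfrac12$. Plugging this into the lower bound of \Cref{thm:zero_testing_from_general_to_kronecker}(i) yields $\Omega\bigl(P(\{-1,1\},2)^{-q}\bigr) = \Omega(2^q)$ measurements, as desired. For item 2, the bound $P(\alphabet,n) \leq 1 - \frac{1}{|\alphabet|}\frac{n-|\alphabet|}{n-1}$ from \Cref{thm:alphabet_lower_bounds}(2) feeds into the same lower bound to give
\[
\Omega\!\left(\Bigl(1 - \tfrac{1}{|\alphabet|}\tfrac{n-|\alphabet|}{n-1}\Bigr)^{-q}\right)
\]
measurements, which is the stated bound (modulo reciprocation; the statement as written in the corollary is $(1 - \tfrac{1}{|\alphabet|}\tfrac{n-|\alphabet|}{n-1})^q$, which must be interpreted as the reciprocal-exponent form coming out of the reduction).

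For item 3, I need both directions. The lower bound uses \Cref{thm:alphabet_lower_bounds}(4), which says $P_{\mathbb{C}}(\{-1,1,i,-i\},2) \leq \tfrac34$, so \Cref{thm:zero_testing_from_general_to_kronecker}(i) yields $\Omega((4/3)^q)$ measurements. The matching upper bound uses the other half of \Cref{thm:alphabet_lower_bounds}(4), namely $Q_{\mathbb{C}}(\{-1,1,i,-i\},2) \geq \tfrac34$, fed into \Cref{thm:zero_testing_from_general_to_kronecker}(ii), which produces a zero-tester using $2\,Q_{\mathbb{C}}^{-q} \leq 2(4/3)^q$ measurements succeeding with probability at least $\tfrac23$. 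Together these give $\Theta((4/3)^q)$.

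There is no real obstacle here, since all the heavy lifting was done in \Cref{thm:alphabet_lower_bounds} and \Cref{thm:zero_testing_from_general_to_kronecker}; the only mild subtlety is ensuring that the alphabet and field ($\mathbb{R}$ vs. $\mathbb{C}$) are threaded correctly through the two theorems, and for item 3 that the distribution achieving $Q_{\mathbb{C}} = 3/4$ is in fact supported on the prescribed alphabet $\{-1,1,i,-i\}$, which is verified by inspection of the proof of \Cref{thm:alphabet_lower_bounds}(4). So the corollary's proof can be written as essentially a three-line citation of the preceding two results.
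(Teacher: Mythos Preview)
Your proposal is correct and matches the paper's own proof, which is literally the single line ``Combine Theorem~\ref{thm:alphabet_lower_bounds} with Theorem~\ref{thm:zero_testing_from_general_to_kronecker}.'' Your additional care about threading $\mathbb{R}$ versus $\mathbb{C}$ and noting the reciprocal-exponent reading of item~2 is accurate bookkeeping that the paper leaves implicit.
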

\begin{proof}
Combine Theorem~\ref{thm:alphabet_lower_bounds} with Theorem~\ref{thm:zero_testing_from_general_to_kronecker}.
\end{proof}

We obtain a similar corollary for trace estimation.
\begin{corollary}
\label{cor:trace_estimation_lower_bound}
Constant-factor trace estimation of a real PSD matrix requires $\Omega(2^{q})$ measurements when \(n=2\) and when using Rademacher Kronecker-structured matrix-vector queries, i.e. with vectors in $(\{-1,1\}^2)^{\otimes q}.$

Constant-factor trace estimation of a complex PSD matrix requires $\Omega((4/3)^{q})$ measurements when \(n=2\) and when using complex Rademacher Kronecker-structured matrix-vector queries, i.e. with vectors in $(\{-1,1,i,-i\}^2)^{\otimes q}.$
\end{corollary}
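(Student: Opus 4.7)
The plan is simply to invoke \Cref{lem:from_zero_testing_to_trace_estimation}, which already gives a trace estimation lower bound in terms of the quantity $P_{\mathbb{F}}(\alphabet, n)$, and then substitute the explicit bounds on this quantity from the appropriate parts of \Cref{thm:alphabet_lower_bounds}. So there is no new combinatorial or probabilistic content to produce; the work is entirely bookkeeping.

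For the first statement, I take $\mathbb{F} = \mathbb{R}$, $\alphabet = \{-1, 1\}$, and $n = 2$. By part (1) of \Cref{thm:alphabet_lower_bounds}, $P(\{-1,1\}, 2) \leq \tfrac{1}{2}$, hence $P(\{-1,1\}, 2)^{-q} \geq 2^{q}$. Plugging into \Cref{lem:from_zero_testing_to_trace_estimation} gives that any constant-factor trace estimator using Kronecker-structured vector-matrix-vector queries with entries in $\{-1,1\}$ requires at least $\tfrac{2}{3} \cdot 2^{q} = \Omega(2^{q})$ queries.

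For the second statement, I take $\mathbb{F} = \mathbb{C}$, $\alphabet = \{-1, 1, i, -i\}$, and $n = 2$. By part (4) of \Cref{thm:alphabet_lower_bounds}, $P_{\mathbb{C}}(\{-1,1,i,-i\}, 2) = \tfrac{3}{4}$, so $P_{\mathbb{C}}(\{-1,1,i,-i\}, 2)^{-q} = (4/3)^{q}$. Applying \Cref{lem:from_zero_testing_to_trace_estimation} with this alphabet and field yields the lower bound of $\Omega((4/3)^{q})$ queries.

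The only conceptual subtlety, and hence the one point I would double-check while writing the argument, is that \Cref{lem:from_zero_testing_to_trace_estimation} as stated is proved against queries whose entries lie in $\alphabet$, whereas the hard matrix constructed in the lemma is $\mA = \vx\vx^{\intercal}$ with $\vx = \vx_1 \otimes \cdots \otimes \vx_q$ and each $\vx_i \in \alphabet^n$; I should verify that this $\mA$ is indeed PSD (immediate from the rank-one outer product structure, with real entries in the first case and Hermitian PSD in the complex case if one replaces the transpose by a conjugate transpose in the construction) so that the reduction legitimately constrains PSD trace estimators, matching the hypothesis of the corollary. Once that is observed, the corollary follows by direct substitution with no further calculation.
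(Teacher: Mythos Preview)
Your proposal is correct and essentially matches the paper's proof, which simply says ``Combine Corollary~\ref{cor:zero_testing_lower_bound_special_cases} with Lemma~\ref{lem:from_zero_testing_to_trace_estimation}''; you just invoke \Cref{thm:alphabet_lower_bounds} directly rather than routing through the intermediate corollary. One small slip in your side remark: in the proof of \Cref{lem:from_zero_testing_to_trace_estimation} the vectors $\vx_i$ are drawn from the minimizing distribution $\mathcal{D}$ over $\mathbb{F}^n$, not from $\alphabet^n$, but this does not affect your argument or the PSD observation.
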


\begin{proof}
Combine Corollary~\ref{cor:zero_testing_lower_bound_special_cases} with Lemma~\ref{lem:from_zero_testing_to_trace_estimation}.
\end{proof}

\iffalse
\begin{theorem}
Let $\mathcal{A}\subseteq\R$ be an fixed alphabet of size $\abs{\mathcal{A}} = L.$  A zero-testing algorithm that makes Kronecker structured measurements using entries from $\mathcal{A}$ requires at least FILL measurements.  Moreover there are algorithms using FILL measurements.
\end{theorem}
\begin{proof}

For the lower bound, let $x\in\R^n$ have two coordinates of opposite signs in uniformly random pair of positions $(i,j)$.  Suppose that $v$ has entries drawn from $\mathcal{A}.$  Then $\inner{v}{x} = 0$ precisely when $v_i = v_j$.

Suppose that $v$ has $n_k$ entries from value $k$ in $\mathcal{A}.$ The probability that $v_i = v_j$ is then
\[
\dbinom{n}{2}^{-1}\left(\dbinom{n_1}{2} + \dbinom{n_2}{2} + \ldots + \dbinom{n_L}{2}\right).
\]
We can bound this sum as
\[
\sum_{i=1}^L \dbinom{n_i}{2} 
= \frac{1}{2}\sum_{i=1}^L (n_i^2 - n_i)
= \frac{1}{2}(\sum_{i=1}^L n_i^2 - n)
\geq \frac{1}{2}(\frac{n^2}{L} - n).
\]
It follows that
\[
\Pr(v_i = v_j) \geq \frac{1}{L}\frac{n-L}{n-1}.
\]
This in turn implies that 
\[\inner{v^{(1)} \otimes \ldots \otimes v^{(q)}}{x^{(1)} \otimes \ldots \otimes x^{(q)}} \neq 0
\] with probability at most $\left(1 - \frac{1}{L}\frac{n-L}{n-1}\right)^q.$

\end{proof}

\Raph{Todo: add corollary for trace est, one in large alphabet in in small alphabet (for both real and complex)}

\fi

% Good

\section{Conclusion}
We addressed several fundamental linear algebraic problems in the Kronecker matrix-vector query model.
A number of interesting questions remain.
Some of Our lower bounds have a dependence on the condition number of the measurement matrix.
Is this dependence necessary?
This is open even in the case of non-adaptive measurements.
Rigorously, we know that proving the following corollary would suffice to remove the conditioning assumption in both the non-adaptive and adaptive cases:
\begin{conjecture}
    \label{conj:kron-unit-concentration}
    Let \(\vu = \vu_1 \otimes \cdots \otimes \vu_q\) where each \(\vu_i\) is a uniformly random unit vector in \(\bbR^{n}\).
    Let \(\mV = [\vv^{(1)} ~ \cdots ~ \vv^{(t)}]\) where each \(\vv^{(i)}\) is an arbitrary (non-random) Kronecker-structured vector.
    Let \(\mP\) be the orthogonal projection onto the range of \mV.
    Then, so long as \(t \leq \poly(n,q)\), we have that
    \[
        \norm{\mP\vu}_2^2 \leq \frac{c_1^{-q}}{n^q}
        \hspace{1cm}
        \text{with probability at least }
        1-c_2^{-q}
    \]
    for some \(c_1,c_2 > 1\).
\end{conjecture}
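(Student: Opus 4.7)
The plan is to extend the single-vector concentration bound (\Cref{lem:kron-unit-vec-conentration}) to a uniform bound on the supremum of \(\langle \vu, \vw\rangle^2\) over the unit sphere of the range of \(\mV\), using a moment method combined with a net argument. First, I would reformulate the claim as
\[
    \norm{\mP\vu}_2^2
    = \sup_{\vw \in \range(\mV),~\norm{\vw}_2=1} \langle \vu, \vw\rangle^2,
\]
so that it suffices to control this supremum. Writing each unit \(\vw\) in the span as \(\vw = \sum_{i=1}^t \alpha_i \vv^{(i)}\) (with \(\valpha\) determined by \(\vw\) up to the kernel of \(\mV\)), I would observe that \(\langle \vu, \vw\rangle = \sum_i \alpha_i \langle\vu,\vv^{(i)}\rangle\), and that each \(\langle \vu, \vv^{(i)}\rangle\) already concentrates sharply at rate \(C_\tau^{-q/2}/n^{q/2}\). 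The task becomes controlling a sum of correlated, small random variables with arbitrary coefficients constrained by \(\norm{\vw}_2 = 1\).

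Second, I would attack the per-vector bound via a high-moment computation. For fixed \(\vw\), the \(2k\)-th moment factorizes across the \(q\) Kronecker modes:
\[
    \E\bigl[\langle \vu, \vw\rangle^{2k}\bigr]
    = \sum_{i_1,\ldots,i_{2k}} \alpha_{i_1}\cdots\alpha_{i_{2k}} \prod_{j=1}^q \E\bigl[\langle \vu_j, \vv_j^{(i_1)}\rangle\cdots\langle \vu_j, \vv_j^{(i_{2k})}\rangle\bigr].
\]
Each per-mode expectation can be evaluated via Wick's theorem for uniform spherical measure, and a non-vanishing contribution requires the \(2k\) indices to pair up in each mode; this already introduces a factor of \(n^{-qk}\) and imposes combinatorial constraints that, together with the normalization \(\norm{\vw}_2 = 1\), should translate into an overall scaling on the order of \(C_\tau^{-qk}/n^{qk}\) times a combinatorial factor in \(k\) and \(t\). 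Markov's inequality on this moment then yields tail bounds for each fixed \(\vw\).

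Third, I would uniformize over \(\vw\) using an \(\epsilon\)-net \(\cN\) on the unit sphere of \(\range(\mV)\), which has size \((3/\epsilon)^t\). Because \(t = \poly(n,q)\), this net is of subexponential (in \(q\)) size only when \(\epsilon\) is taken extremely small, and the union bound over \(\cN\) must absorb this. Thus the moment order \(k\) needs to be chosen as a function of \(t\) large enough that the moment bound survives the net blowup while still delivering the target rate \(c_1^{-q}/n^q\). Lipschitz continuity of \(\vw \mapsto \langle \vu, \vw\rangle\) on the sphere of \(W\) (with Lipschitz constant bounded by \(\norm{\vu}_2 = 1\)) would let me extrapolate from net points to all of the sphere at the cost of an additive \(\epsilon\) error.

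The main obstacle is the tension between the net size \((3/\epsilon)^t\) and the target failure probability \(c_2^{-q}\): since \(t\) can be polynomially larger than \(q\), a naive net argument cannot close the gap, and the moment bound must be delicate enough to preserve the exponential-in-\(q\) rate despite paying a polynomial-in-\(t\) combinatorial cost. Overcoming this likely requires exploiting the constraint that \(\valpha\) is tied to a normalized \(\vw\) through the Gram matrix \(\mV^\intercal\mV\) (so the \(\ell_2\) mass of \(\valpha\) cannot be too concentrated on any small set of indices unless the corresponding \(\vv^{(i)}\) are ill-aligned), together with a careful combinatorial analysis of which matchings in the Wick expansion contribute at leading order. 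If this delicate bookkeeping works out, the conjecture follows; if not, one may need a fundamentally different tool such as a direct chaining or entropy-method bound tailored to the Kronecker tensor structure.
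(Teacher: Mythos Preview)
The statement you are attempting to prove is explicitly labeled a \emph{conjecture} in the paper and is left open; the paper provides no proof. It is introduced in the conclusion precisely as the missing ingredient that would allow the authors to remove the \(\kappa\)-conditioning assumption from \Cref{thm:top-eig-lower-bound} and \Cref{thm:trace-lower-bound}. So there is no ``paper's own proof'' to compare against.

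On the merits of your plan: you correctly identify the central obstruction, namely that a net on the \(t\)-dimensional sphere of \(\range(\mV)\) has cardinality exponential in \(t\), while the target failure probability is only exponentially small in \(q\), and \(t\) may be polynomially larger than \(q\). Your hope that ``delicate bookkeeping'' in the Wick expansion will absorb this gap is not substantiated, and there is a structural reason to be skeptical. The whole difficulty is that an orthonormal basis \(\vx^{(1)},\ldots,\vx^{(t)}\) for \(\range(\mV)\) is \emph{not} Kronecker-structured, so \Cref{lem:kron-unit-vec-conentration} does not apply to the individual \(\langle \vu,\vx^{(i)}\rangle\). The paper's workaround (\Cref{lem:conditioning-to-ortho-inner-prod}) bounds \(\langle \vu,\vx^{(i)}\rangle^2 \leq \kappa^2 \norm{\mV^\intercal\vu}_2^2\), paying exactly the condition-number factor the conjecture seeks to eliminate. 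Your moment expansion in the \(\vv^{(i)}\) basis does not escape this: the coefficients \(\alpha_i\) for a unit \(\vw\) satisfy \(\norm{\valpha}_2 \leq 1/\sigma_{\min}(\mV)\), so without a conditioning assumption the cross terms in your \(2k\)-th moment can be arbitrarily large, and the per-mode pairing structure alone does not control them. In short, your proposal rediscovers the obstacle the paper already flags rather than resolving it; a genuine proof would need a new idea beyond moments-plus-nets, and the paper does not claim to have one.
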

The above conjecture is a direct generalization of \Cref{lem:kron-unit-vec-conentration}.
To see this, note that taking \(t=1\) in \Cref{conj:kron-unit-concentration} exactly recovers \Cref{lem:kron-unit-vec-conentration}.

In the case that \Cref{conj:kron-unit-concentration} does not hold, this suggests that a ill-conditioned might be efficient in the Kronecker matrix-vector model.
Namely, does there exist a Khatri-Rao sketching matrix that allows for $\ell_2$ norm estimation (and is extremely poorly conditioned)?
It would also be interesting to obtain tight bounds for trace estimation in the Kronecker matrix-vector model.
Lower bounds for Hutchinson-style estimators are known, but could there be better estimators, perhaps analogous to the Hutch++ \cite{meyer21hutchpp} algorithm?

% Good

\section*{Acknowledgments}

Raphael Meyer was partially supported by a Caltech Center for Sensing to Intelligence grant to Joel A. Tropp and ONR Award N-00014-24-1-2223 to Joel A. Tropp. William Swartworth and David Woodruff received support from a Simons Investigator Award, NSF CCF-2335412, and a Google Faculty Award.

\bibliography{local}
\bibliographystyle{apalike}

%%%%%%%%%%%%%%%%%%%%%%%%%%%%%%%%%%%%%%%%%%%%%%%%%%%%%%%%%%%%%%%%%%%%%%%%%%%%%%%
%%%%%%%%%%%%%%%%%%%%%%%%%%%%%%%%%%%%%%%%%%%%%%%%%%%%%%%%%%%%%%%%%%%%%%%%%%%%%%%
% APPENDIX
%%%%%%%%%%%%%%%%%%%%%%%%%%%%%%%%%%%%%%%%%%%%%%%%%%%%%%%%%%%%%%%%%%%%%%%%%%%%%%%
%%%%%%%%%%%%%%%%%%%%%%%%%%%%%%%%%%%%%%%%%%%%%%%%%%%%%%%%%%%%%%%%%%%%%%%%%%%%%%%
\newpage
\appendix

\appendix
% Good

\section{Near-Total Orthogonality with Real Vectors}
\label{sec:kron-unit-vec-conentration}

In this section, we prove \Cref{lem:kron-unit-vec-conentration} and related concentrations and lemmas that characterize the near-total orthogonality of the Kronecker product of random unit vectors with respect to otehr Kronecker-structured vectors.
We conclude with a short lemma showing how conditioning relates to projections of Kronecker-structured vectors.

\begin{lemma}
\label{lem:log-beta-subexponential}
Let \(X\) be disitributed as the first entry of a uniformly random vector in \(\sqrt{n}\bbS^n\).
Let \(Y = \log \abs{X}\).
Then \(Y\) is subexponential with subexponential norm \(\norm{Y}_{\psi_1} \leq O(1)\).
\end{lemma}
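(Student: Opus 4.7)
The plan is to use the Gaussian representation of the uniform measure on the sphere: if $g\sim\cN(0,I_n)$ then $\sqrt{n}\,g/\norm{g}_2$ is uniform on $\sqrt{n}\,\mathbb{S}^n$, so we may take $X=\sqrt{n}\,g_1/\norm{g}_2$. This gives the decomposition
\[
    Y \;=\; \log\abs{X} \;=\; \log\abs{g_1} \;+\; \tfrac{1}{2}\log\!\bigl(n/\norm{g}_2^2\bigr).
\]
By the triangle inequality for the Orlicz norm $\norm{\cdot}_{\psi_1}$, it suffices to bound each summand separately by a constant independent of $n$.

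For the first term, I would bound both tails of $\log\abs{g_1}$ directly. The upper tail $\Pr[\log\abs{g_1}\geq t] = \Pr[\abs{g_1}\geq e^t]$ is Gaussian in $e^t$, hence decays much faster than $e^{-t}$. The lower tail is the only real concern: since the density of $\abs{g_1}$ is uniformly bounded by $\sqrt{2/\pi}$, we have $\Pr[\abs{g_1}\leq e^{-t}] \leq \sqrt{2/\pi}\cdot e^{-t}$. Combining yields $\Pr[\abs{\log\abs{g_1}}\geq t] \leq C e^{-t}$ for a universal $C$, giving $\norm{\log\abs{g_1}}_{\psi_1} = O(1)$ independently of $n$.

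For the correction term, $\norm{g}_2^2\sim\chi^2_n$, and the standard Laurent--Massart bound gives $\Pr[\abs{\norm{g}_2^2/n-1}\geq s]\leq 2\exp(-cn\min(s,s^2))$ for a universal $c$. On the bulk event $\norm{g}_2^2\in[n/2,2n]$, the Lipschitz bound $\abs{\log x}\leq 2\abs{x-1}$ on $[\tfrac12,2]$ yields $\abs{\log(n/\norm{g}_2^2)}\leq 2\abs{\norm{g}_2^2/n-1}$, which is subexponential with norm $O(1/\sqrt{n})\leq O(1)$. The complementary event has probability exponentially small in $n$, and a crude pointwise bound $\abs{\log(n/\norm{g}_2^2)} = O(\log n + \norm{g}_2^2/n)$ shows that the residual contribution does not inflate the $\psi_1$-norm beyond $O(1)$.

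The main obstacle is really just bookkeeping the two tails of $\log\abs{g_1}$ against the definition of $\norm{\cdot}_{\psi_1}$; the chi-squared concentration is textbook. Finally, since $\E\log\abs{g_1}$ is an absolute constant and $\E\log(\norm{g}_2^2/n)\to 0$ by the concentration above, we have $\abs{\E Y}=O(1)$, so $\norm{Y}_{\psi_1}$ and $\norm{Y-\E Y}_{\psi_1}$ agree up to a constant factor, and the claim follows.
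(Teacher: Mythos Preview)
Your approach is correct and takes a genuinely different route from the paper. The paper works directly with the distribution of $|X|$: it notes that $|X|^2/n\sim\beta(\tfrac12,\tfrac{n-1}{2})$, uses the resulting uniform density bound on $|X|/\sqrt{n}$ near zero to control the left tail $\Pr[Y\leq -t]\leq Ce^{-t}$, and disposes of the right tail via the crude pointwise inequality $Y=\log|X|\leq |X|$ together with the standard fact that a coordinate of $\sqrt{n}\,\bbS^n$ is $O(1)$-subgaussian. Your Gaussian decomposition $Y=\log|g_1|+\tfrac12\log(n/\norm{g}_2^2)$ trades that single-variable analysis for two textbook ingredients---anticoncentration of a scalar Gaussian and concentration of $\chi^2_n$---glued by the triangle inequality for $\norm{\cdot}_{\psi_1}$ (which, as you implicitly use, needs no independence). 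Both arguments are short; the paper's is more self-contained, while yours cleanly isolates that the only nontrivial content is the $e^{-t}$ left tail coming from a bounded density at the origin.

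One caveat: your ``crude pointwise bound'' $|\log(n/\norm{g}_2^2)|=O(\log n+\norm{g}_2^2/n)$ is false when $\norm{g}_2^2<1$, so the complementary-event step as written does not close. The cleanest repair is to drop the bulk/complement split for the upper tail of the correction term and use a negative chi-squared moment directly: for any fixed $\alpha\in(0,\tfrac12)$ one has $\E[(n/\chi^2_n)^\alpha]=n^\alpha\,2^{-\alpha}\Gamma(n/2-\alpha)/\Gamma(n/2)$, which is bounded uniformly in $n\geq 2$ by Stirling, and then Markov gives $\Pr[\tfrac12\log(n/\norm{g}_2^2)\geq t]\leq C e^{-2\alpha t}$. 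This is the only place your sketch needs tightening.
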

\begin{proof}

Recall that the $\beta(\frac12, \frac{n-1}{2})$ distribution has pdf given by
\[
\frac{\Gamma(n/2)}{\Gamma(1/2)\Gamma((n-1)/2)}x^{-1/2}(1-x)^{(n-3)/2} := f(x)
\]
on the interval $[0,1],$ and that $\frac{1}{\sqrt{n}}\abs{X}$ is distributed as the square root of a $\beta(1/2, \frac{n-1}{2})$ random variable.  

By the change of variables formula, the pdf of $\frac{1}{\sqrt{n}}\abs{X}$ is given by 
\[
f(x^2) \cdot \frac{d}{dx} x^2
= 2\frac{\Gamma(n/2)}{\Gamma(1/2)\Gamma((n-1)/2)} (1-x)^{(n-3)/2},
\]
which is is uniformly bounded by $C\sqrt{n}$ on $[0,1/2]$ for an absolute constant $C.$

 We then have that for $t \geq \log 2$ that,
 \[
 \Pr(Y \leq -t)
 = \Pr(\abs{X} \leq e^{-t})
 = \Pr(\frac{1}{\sqrt{n}} \abs{X} \leq \frac{1}{\sqrt{n}}e^{-t})
 \leq \frac{2}{\sqrt{n}}e^{-t} \sup_{x \in [0,1/2]} f_X(x)
 \leq 2Ce^{-t}.
 \]

 Also $X$ is subgaussian with constant subgaussian norm indedendent of $n$ (see for example Theorem 3.4.6 in \cite{vershynin2018high}.)  Thus $X$ is also subexponential with constant subexponential norm.  So for positive $t$, $X$ satisfies a right tail bound of the form 
 \[
 \Pr(X \geq t) \leq \exp(-ct).
 \]
 Since $Y \leq X,$ we obtain the same right tail bound for $Y$, and our claim follows.
\end{proof}

\begin{replemma}{lem:kron-unit-vec-conentration}
Let \(\vu = \vu_1 \otimes \cdots \otimes \vu_q\) where \(\vu_i\) is a uniformly random unit vector in \(\bbR^n\).
Then, for any kronecker-strucutred unit vector \(\vv = \vv_1 \otimes \cdots \otimes \vv_q\) we have that \(\tau \leq C_\tau^{-q}\) has
\[
	f(\tau)
	\defeq
    \Pr\left[\langle \vu, \vv \rangle^2 \geq \frac{\tau}{n^q} \right] \leq
    C_0^{-q}
\]
for some universal constants \(C_\tau, C_0 > 1\).
\end{replemma}
\begin{proof}
We start by letting \(X \defeq \langle \vu,\vv\rangle^2\), \(X_i \defeq \langle \vu_i, \vv_i\rangle^2\), and \(Y_i \defeq \ln(X_i)\), so that \(Y \defeq \ln(X) = \sum_{i=1}^q Y_i\) is a sum of iid terms.
We will argue the concentration of \(X\) via the concentration of \(Y\).
By \Cref{lem:log-beta-subexponential}, we know that \(\log|Z|\) has sub-exponential norm \(K\), where \(Z\) is the first entry of a random on the unit sphere of radius \(\sqrt n\).
Since the mean of \(\log|Z|\) is at most \(1.32+\frac1n \leq 1.4\) for \(n \geq 13\), we know that \(\log|Z| - \E[\log|Z|]\) has sub-exponential norm at most \(K+1.4\).
Then, by Bernstein's Inequality (as written in Proposition 4.2 of \cite{zhang2020concentration}),
\[
    \Pr\left[\sum_{i=1}^q \log|Z_i| \geq q\E[\log|Z_i|] + 2t\right] \leq e^{-\frac14 \min\{\frac{t^2}{8q(K+1.4)^2},\frac{t}{2(K+1.4)}\}}
\]
Since \(Y_i = 2\log|Z_i| - \log(n)\), we can equivalently take \(\mu \defeq \E[Y_i]\) and write
\[
    \Pr[\sum_{i=1}^q Y_i \geq q\mu + t] \leq e^{-\frac14 \min\{\frac{t^2}{8q(K+1.4)^2},\frac{t}{2(K+1.4)}\}}
\]
Recalling that \(X = e^{\sum_i Y_i}\) and that \(\mu \leq 0\),
\[
    \Pr\left[X \geq e^{t-q|\mu|}\right] \leq e^{-\frac14 \min\{\frac{t^2}{8q(K+1.4)^2},\frac{t}{2(K+1.4)}\}}
\]
Next we need to compute \(\mu = \E[Y_i] = \E[\log(X_i)]\).
Letting \(\psi\) denote the digamma function, we can write \(\E[\log(X_i)] = \psi(\alpha) - \psi(\alpha+\beta) = \psi(\frac12)-\psi(\frac n2)\), and therefore that
\[
    1.27 + \ln(n) - \frac2n \leq |\mu| \leq 1.271 + \ln(n)
\]
Then, we know that \(X \geq e^{t-q|\mu|}\) implies that \(X \geq e^{t - q(1.271+\ln(n))} = n^{-q} e^{t-1.271q}\).
So, we have
\[
    \Pr\left[X \geq \frac{e^{t-1.271q}}{n^q}\right] \leq e^{-\frac14 \min\{\frac{t^2}{8q(K+1.4)^2},\frac{t}{2(K+1.4)}\}}
\]
Taking \(t=16(K+1.4)^2(\sqrt{1+\frac{1.271}{8(K+1.4)^2}}-1)q\) then gives us
\[
    \Pr\left[X \geq \frac{e^{-\alpha q}}{n^q}\right] \leq e^{-\alpha q}
\]
where \(\alpha = 1.271 - 16(K+1.4)^2(\sqrt{1+\frac{1.271}{8(K+1.4)^2}}-1) \in (0,0.006)\).
From \Cref{lem:log-beta-subexponential}, we know that \(K = O(1)\), which completes the proof.
\end{proof}

We will also need the following result on the MGF of the inner product of Kronecker-structured vectors.

\begin{lemma}
\label{lem:kron-unit-vec-mgf}
Let \(\vu = \vu_1 \otimes \cdots \otimes \vu_q\) where \(\vu_i\) is a uniformly random unit vector in \(\bbR^n\).
Then, for any kronecker-strucutred unit vector \(\vv = \vv_1 \otimes \cdots \otimes \vv_q\) and \(\eta \in (0,1)\),
\[
	\E[e^{\eta|\langle\vu,\vtheta\rangle|}]
    \leq 1 + \frac{2\eta}{n^q}
    \leq e^{2\eta n^{-q}}.
\]
\end{lemma}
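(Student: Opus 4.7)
The plan is to linearize the exponential on a bounded interval, reducing to a first-moment bound, and then exploit Kronecker multiplicativity. Since $\vu$ and $\vv$ are both unit vectors, Cauchy-Schwarz gives $|\langle\vu,\vv\rangle|\leq 1$, so $\eta|\langle\vu,\vv\rangle|\in[0,1]$ whenever $\eta\in(0,1)$. On $[0,1]$ the elementary inequality $e^y\leq 1+2y$ holds (one checks that $g(y):=1+2y-e^y$ is concave with $g(0)=0$ and $g(1)=3-e>0$, so $g\geq 0$ throughout). Taking expectations gives
\[
\E\bigl[e^{\eta|\langle\vu,\vv\rangle|}\bigr] \leq 1 + 2\eta\,\E[|\langle\vu,\vv\rangle|],
\]
and the final inequality $1+x\leq e^x$ upgrades this linear form to the exponential form stated in the lemma.

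For the first-moment bound, I would use the Kronecker identity $\langle\vu,\vv\rangle=\prod_{i=1}^q \langle\vu_i,\vv_i\rangle$ together with independence of the $\vu_i$ to factor
\[
\E[|\langle\vu,\vv\rangle|] = \prod_{i=1}^q \E[|\langle\vu_i,\vv_i\rangle|].
\]
By rotational invariance of the uniform distribution on $\bbS^{n-1}$, each factor equals $\|\vv_i\|$ times $\E[|u_{i,1}|]$, where $u_{i,1}$ is the first coordinate of a uniform unit vector in $\bbR^n$. The Kronecker normalization $\prod_i\|\vv_i\|=\|\vv\|=1$ absorbs the per-mode norms in the product, so the whole problem reduces to controlling a product of $q$ identical scalar moments.

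The main obstacle is recovering the stated $n^{-q}$ rate from these per-mode first moments. The clean identity $\E[u_{i,1}^2]=1/n$ together with Jensen only gives $\E[|u_{i,1}|]\leq 1/\sqrt n$, and this is sharp (e.g.\ for $n=2$, $\E[|u_{i,1}|]=2/\pi$), so the naive route yields only $n^{-q/2}$. To close the gap I would try to replace the linearization $e^y\leq 1+2y$ with one that isolates the second moment of $\langle\vu,\vv\rangle$: split $e^{\eta|y|}=\cosh(\eta y)+|\sinh(\eta y)|$, use $|y|\leq 1$ so that $y^{2k}\leq y^2$ in the even Taylor series, and apply Cauchy-Schwarz to consecutive even moments in the odd series, funneling everything through $\E[\langle\vu,\vv\rangle^2]$, which does factorize cleanly by independence to $n^{-q}$. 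As a fallback I would invoke \Cref{lem:kron-unit-vec-conentration} to split the expectation between the high-probability event $\{\langle\vu,\vv\rangle^2\leq C_\tau^{-q}/n^q\}$, on which the exponent is already exponentially small, and its complement, whose probability $C_0^{-q}$ is itself exponentially small and on which the exponent is at most $1$.
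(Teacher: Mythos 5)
Your diagnosis is exactly right, and the paper's own proof makes precisely the mistake you feared. The paper asserts that $\langle\vu_1,\vv_1\rangle$ is distributed as a $\beta(\tfrac12,\tfrac{n-1}{2})$ random variable, is nonnegative, and therefore has $\E[|\langle\vu_1,\vv_1\rangle|] = \tfrac1n$. But it is the \emph{square} $\langle\vu_1,\vv_1\rangle^2$ that is $\beta(\tfrac12,\tfrac{n-1}{2})$-distributed; the inner product itself is symmetric about zero, and its absolute first moment equals $\Gamma(\tfrac n2)/(\sqrt\pi\,\Gamma(\tfrac{n+1}{2})) = \Theta(n^{-1/2})$, not $n^{-1}$. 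So linearization genuinely tops out at $\E[e^{\eta|\langle\vu,\vv\rangle|}] \le 1 + 2\eta\,n^{-q/2}$, as you derived.

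In fact the lemma as stated is false. Take $q=1$, $n=3$: then $X := \langle\vu_1,\vv_1\rangle$ is uniform on $[-1,1]$, so
\[
\E\bigl[e^{\eta|X|}\bigr] = \frac{e^\eta-1}{\eta},
\]
which at $\eta = 0.9$ is $\approx 1.622 > 1.6 = 1 + \tfrac{2\eta}{n}$. Your proposed repairs cannot recover the stated $n^{-q}$ rate either: the $\cosh/\sinh$ decomposition and the Taylor rearrangement both still carry a term of order $\E[|\langle\vu,\vv\rangle|] = \Theta(n^{-q/2})$, and that first-order contribution is real, not an artifact of a loose bound. The fallback through \Cref{lem:kron-unit-vec-conentration} gives $1 + 2\eta\,C_\tau^{-q/2}n^{-q/2} + (e-1)C_0^{-q}$, again not $1 + 2\eta/n^q$.

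The good news is that the correct bound $1 + 2\eta\,n^{-q/2} \le e^{2\eta n^{-q/2}}$ suffices for the only place this lemma is used, namely the MGF computation in the proof of \Cref{thm:kron-matvec-testing-lower} in \Cref{app:formal-kronmatvec-lower}. There the relevant exponent already carries an exponentially small $C_\tau^{-q}$ factor from the $\tau_i$'s, so replacing $n^{-q}$ by $n^{-q/2}$ merely perturbs constants and leaves the final $t = \Omega(\min\{C_0^{q/2}, C_\tau^q/(\lambda^2\kappa^2)\})$ intact. So the right move is to weaken the lemma to the $n^{-q/2}$ rate, not to look for a cleverer route to $n^{-q}$.
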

\begin{proof}
We approach this bound via linearization.
Since \(\eta|\langle\vu,\vtheta\rangle| \leq \eta \leq 1\), we know that \(e^{\eta|\langle\vu,\vtheta\rangle|} \leq 1+2\eta|\langle\vu,\vtheta\rangle|\).
So, we bound
\begin{align*}
    \E[e^{\eta|\langle\vu,\vtheta\rangle|}]
    &\leq 1+2\eta \E[|\langle\vu,\vtheta\rangle|] \\
    &= 1+2\eta (\E[|\langle\vu_1,\vtheta_1\rangle|])^q
\end{align*}
Since \(\langle\vu_1,\vtheta_1\rangle\) is a distributed as a \(Beta(\frac12,\frac{n-1}{2})\) random variable, and since \(\langle\vu_1,\vtheta_1\rangle \geq 0\), we know that \(\E[|\langle\vu_1,\vtheta_1\rangle|] = \E[\langle\vu_1,\vtheta_1\rangle] = \frac1n\).
So,
\[
    \E[e^{\eta|\langle\vu,\vtheta\rangle|}] \leq 1 + \frac{2\eta}{n^q}
    \leq e^{2\eta n^{-q}}
\]
where the last inequality uses that \(1+x \leq e^{x}\) for \(x \leq 1\).
\end{proof}

Lastly, we show the following lemma that relates conditioning to the constants \(C_0\) and \(C_\tau\) from \Cref{lem:kron-unit-vec-conentration}.

\begin{lemma}
\label{lem:conditioning-to-ortho-inner-prod}
Let \(\vv^{(1)},\cdots,\vv^{(t)} \in \bbR^{n^q}\) be unit vectors.
Suppose that \(\mV = [\vv^{(1)} ~ \cdots ~ \vv^{(t)}] \in \bbR^{n^q \times t}\) has condition number less than \(\kappa\).
Let \(\mX = [\vx^{(1)} ~ \cdots ~ \vx^{(t)}] \in \bbR^{n^q \times t}\) be an orthogonal matrix that spans \mV.
Then, for any unit vector \vu, we have
\[
	|\langle \vx^{(i)}, \vu\rangle|^2
	\leq \kappa^2 \norm{\mV^\intercal\vu}_2^2.
\]
\end{lemma}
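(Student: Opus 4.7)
The plan is a short linear-algebra argument that relates $\mX$ and $\mV$ through an invertible factor, and then uses the condition-number hypothesis together with the unit-norm assumption on the columns of $\mV$ to control that factor. First I would observe that since the columns of $\mX$ form an orthonormal basis of the column span of $\mV$, there exists an invertible matrix $\mR \in \bbR^{t \times t}$ with $\mV = \mX \mR$ (its columns are just the coordinates of each $\vv^{(j)}$ in the basis $\vx^{(1)},\ldots,\vx^{(t)}$). Multiplying $\mV^\intercal \vu = \mR^\intercal \mX^\intercal \vu$ by $\mR^{-\intercal}$ and using $\mX^\intercal \mX = \mI$ yields the key identity
\[
\mX^\intercal \vu \;=\; \mR^{-\intercal} \mV^\intercal \vu.
\]

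Next I would control $\opnorm{\mR^{-1}}$ via the singular values of $\mV$. Because $\mX^\intercal \mX = \mI$, we have $\mV^\intercal \mV = \mR^\intercal \mR$, so the singular values of $\mR$ and $\mV$ coincide. The hypothesis that $\mV$ has condition number less than $\kappa$ gives $\sigma_{\min}(\mV) \geq \sigma_{\max}(\mV)/\kappa$, and since each column of $\mV$ is a unit vector we have $\sigma_{\max}(\mV) = \opnorm{\mV} \geq \norm{\mV \ve_1}_2 = 1$. Hence $\sigma_{\min}(\mR) = \sigma_{\min}(\mV) \geq 1/\kappa$, i.e., $\opnorm{\mR^{-1}} \leq \kappa$.

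Combining these two pieces, for any unit vector $\vu$ and any index $i$,
\[
|\langle \vx^{(i)}, \vu \rangle|^2 \;\leq\; \norm{\mX^\intercal \vu}_2^2 \;=\; \norm{\mR^{-\intercal} \mV^\intercal \vu}_2^2 \;\leq\; \opnorm{\mR^{-1}}^2 \, \norm{\mV^\intercal \vu}_2^2 \;\leq\; \kappa^2 \norm{\mV^\intercal \vu}_2^2,
\]
which is exactly the claim. There is no real obstacle here; the only subtlety is unpacking the hypothesis that $\mX$ ``spans $\mV$'' as the statement that the columns of $\mX$ form an orthonormal basis of the column space of $\mV$ (so that the QR-style factorization $\mV = \mX \mR$ is available with $\mR$ invertible), and noticing that unit-length columns of $\mV$ are precisely what promotes the condition-number bound into the operator-norm bound $\opnorm{\mR^{-1}} \leq \kappa$ rather than a weaker bound scaled by $\sigma_{\max}(\mV)$.
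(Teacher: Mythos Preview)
Your proof is correct and follows essentially the same approach as the paper: both factor $\mV = \mX\mR$, observe that $\mR$ and $\mV$ share singular values, and use the unit-column assumption to turn the condition-number bound into $\opnorm{\mR^{-1}} \leq \kappa$. The only cosmetic difference is that you bound the full vector $\mX^\intercal\vu = \mR^{-\intercal}\mV^\intercal\vu$ at once, whereas the paper writes $\vx^{(i)} = \mV\mR^{-1}\ve_i$ and bounds a single coordinate directly.
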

\begin{proof}
There exists some invertible map \mR such that \(\mV = \mX\mR\) (for instance, if we built \(\mX\) as the Q factor of the QR of \mV).
Letting \(\mV=\mU\mSigma\mZ^\intercal\) be the SVD of \mV, notice that
\[
	\mR = \mX^\intercal\mV = (\mX^\intercal\mU)\mSigma\mZ^\intercal
\]
is also an SVD and therefore that \mR has the same singular values as \mV.
Since \(\vx^{(i)} = \mX\ve_i = \mV\mR^{-1}\ve_i\) where \(\ve_i\) is the \(i^{th}\) standard basis vector, we can bound
\[
	\langle \vu, \vx^{(i)}\rangle^2
    = (\vu^\intercal\mV\mR^{-1}\ve_i)^2
    \leq \norm{\vu^\intercal\mV}_2^2 \norm{\mR^{-1}}_2^2 \norm{\ve_i}_2^2
    = \frac{1}{(\sigma_{\min}(\mV))^2} \norm{\mV^\intercal\vu}_2^2
\]
where we use that \mR and \mV share singular values in the last equality.
Next, since \mV has unit vector columns, we know that \(\sigma_{\max}(\mV) \geq 1\).
So, \(\frac{1}{(\sigma_{\min}(\mV))^2} \leq \frac{(\sigma_{\max}(\mV))^2}{(\sigma_{\min}(\mV))^2} = \kappa^2\).
Therefore, we have
\[
	\langle \vu, \vx^{(i)}\rangle^2
	\leq \kappa^2 \norm{\mV^\intercal\vu}_2^2
\]
completing the lemma.
\end{proof}

% Good

\section{L2 Estimation via Linear Measurements}
\label{app:l2-estimation-formal-lower}

\begin{problem}
    \label{prob:l2-estimation}
    Fix a vector \(\va\in\bbR^{n^q}\).
    Then, the \emph{Kronecker-structured linear measurement oracle} for \va is the oracle that, when given any Kronecker structured vector \(\vv\in\bbR^{n^q}\), returns \(\langle \va, \vv\rangle\).
    In the \emph{L2 Estimation via Kronecker Measurements} problem, we have to use a few oracle queries as possible to return a number \(z\in\bbR\) such that \((1-\eps)\norm\va_2^2 \leq z \leq (1+\eps)\norm\va_2^2\) with probability \(\frac23\).
\end{problem}

\begin{theorem}
    \label{thm:l2-adaptive-estimation-lower-bound}
    Any (possibly adaptive) algorithm \cA that solves \Cref{prob:l2-estimation} with probability \(\frac23\) using \(\kappa-\)conditioned Kronecker-structured queries must use at least \(t = O(\min\{C_0^{q/2}, \frac{C_\tau^{q/2}}{\kappa^2 \sqrt\eps}\})\) queries.
\end{theorem}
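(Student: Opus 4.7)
The plan is to reduce L2 estimation to the planted-signal distinguishing task of \Cref{prob:l2-testing} and then bound the total variation distance between the observation laws under the two hypotheses by adapting the truncated information-theoretic framework of \cite{simchowitz2017gap} (as generalized in \Cref{app:explain-simchowitz}) to the Kronecker setting via \Cref{lem:kron-unit-vec-conentration}.

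\textbf{Step 1 (reduction to distinguishing).} I would first show that any $(1\pm\eps)$-accurate estimator of $\norm{\va}_2^2$ solves \Cref{prob:l2-testing} with $\lambda = 6\sqrt\eps$. By Gaussian concentration and $\norm{\vu}_2^2 = n^q$, we have $\norm{\va_0}_2^2 \approx n^q$ while $\norm{\va_1}_2^2 \approx (1+\lambda^2)n^q = (1+36\eps)n^q$ with overwhelming probability, so a $(1\pm\eps)$ output separates the two hypotheses.

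\textbf{Step 2 (orthogonalize the queries).} Via \Cref{lem:conditioning-to-ortho-inner-prod}, I would replace the $\kappa$-conditioned query matrix $\mV$ with an orthonormal basis $\mX$ for its range, at a multiplicative cost of $\kappa^2$ in the aggregate inner-product sum $\sum_i \inner{\vu}{\vx^{(i)}}^2 \leq \kappa^2 \norm{\mV^\intercal \vu}_2^2$. This allows the KL analysis to treat the queries as orthonormal.

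\textbf{Step 3 (KL chain rule).} Conditional on $\vu$, the round-$i$ response under $\bbP_0$ is $\cN(0,\norm{\vv^{(i)}}_2^2)$ while under $\bbP_1$ it is $\cN(\lambda\inner{\vu}{\vv^{(i)}},\norm{\vv^{(i)}}_2^2)$. Chain-ruling KL across the $t$ adaptive rounds and taking expectations over $\vu$ yields
\[
    D_{KL}(\bbP_0\,\|\,\bbP_1) \;\leq\; \tfrac{\lambda^2 \kappa^2}{2}\,\bbE\!\left[\sum_{i=1}^t \inner{\vu}{\vv^{(i)}}^2\right],
\]
exactly mirroring the non-adaptive sketch of \Cref{sec:trace-est-proof-sketch}.

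\textbf{Step 4 (adaptive truncation, main obstacle).} The crux is controlling $\inner{\vu}{\vv^{(i)}}^2$ when $\vv^{(i)}$ is adaptively chosen as a function of past responses that under $\bbP_1$ are correlated with $\vu$. Following \cite{simchowitz2017gap}, I would introduce a stopping time that fires as soon as any $\inner{\vu}{\vv^{(i)}}^2$ exceeds a threshold $\theta$. Under $\bbP_0$, the responses are independent of $\vu$, so conditional on the observed history the query $\vv^{(i)}$ is a \emph{fixed} Kronecker-structured unit vector from $\vu$'s perspective, and \Cref{lem:kron-unit-vec-conentration} applies per round to bound the truncation probability by $C_0^{-q}$. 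A union bound controls $\Pr_{\bbP_0}[\text{truncation by time }t]$, and the abstract change-of-measure bound from \Cref{app:explain-simchowitz} transports this control to $\bbP_1$ at the price of a square-root loss, which is the source of the weaker exponents in the adaptive bound.

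\textbf{Step 5 (combining via Pinsker).} On the no-truncation event, Steps 2--4 give $D_{KL} \leq \tfrac12 \lambda^2 \kappa^2 t\,\theta$; Pinsker's inequality then yields a $d_{TV}$ bound of the form
\[
    d_{TV}(\bbP_0,\bbP_1) \;\lesssim\; \sqrt{t\, C_0^{-q}} \;+\; \lambda\kappa\sqrt{t\theta}.
\]
Choosing $\theta = C_\tau^{-q}$ so the per-round truncation probability matches \Cref{lem:kron-unit-vec-conentration}, and demanding that this $d_{TV}$ stay below the constant required for distinguishing failure, yields the two constraints $t = \Omega(C_0^{q/2})$ and $t = \Omega(C_\tau^{q/2}/(\kappa^2\sqrt\eps))$ (after substituting $\lambda = 6\sqrt\eps$), proving the theorem.

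The hard part is Step~4: specifying the filtration so that \Cref{lem:kron-unit-vec-conentration} applies to each adaptive query (which requires working under $\bbP_0$, where $\vu$ is independent of the history) and then transporting the per-round high-probability control from $\bbP_0$ to $\bbP_1$ through a change-of-measure argument. The quadratic (rather than linear) dependence on the likelihood ratio in this transport is precisely what halves the exponent in $q$ and replaces $\eps$ with $\sqrt\eps$, versus the cleaner non-adaptive calculation where queries are fixed and \Cref{lem:kron-unit-vec-conentration} unions directly.
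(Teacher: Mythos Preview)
Your high-level plan matches the paper's proof: both reduce to the distinguishing task of \Cref{prob:l2-testing}, invoke the truncated framework of \cite{simchowitz2017gap} (specifically \Cref{lem:corrected-truncated-dtv} and \Cref{impthm:simchowitz-core-chi-squared-divergence}), orthogonalize via \Cref{lem:conditioning-to-ortho-inner-prod}, and drive the per-round bounds with \Cref{lem:kron-unit-vec-conentration}. However, two pieces of your sketch do not go through as written.

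First, the KL chain rule in Step~3 yields only a trivial bound. Using convexity of KL in the second argument and the chain rule under $\bbQ$ gives
\(
\E_{\vu} D_{KL}(\bbP_0 \,\|\, \bbP_{\vu}) = \tfrac{\lambda^2}{2}\,\E_{\bbQ}\,\E_{\vu}\!\big[\sum_i \langle \vu, \vx^{(i)}\rangle^2\big],
\)
where under $\bbQ$ the queries are independent of $\vu$. But $\vu$ is isotropic with $\norm{\vu}_2^2 = n^q$, so $\E_{\vu}[\langle\vu,\vx\rangle^2]=1$ for \emph{any} unit $\vx$, and the bound collapses to $\lambda^2 t/2$. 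The exponentially small inner products in \Cref{lem:kron-unit-vec-conentration} are a high-probability phenomenon, not an in-expectation one, so they cannot be harvested through an averaged KL. This is exactly why the paper works with the truncated $\chi^2$ (second-moment) quantities of \Cref{lem:corrected-truncated-dtv} and \Cref{lem:gaussian-chi-squared-divergence} rather than KL plus Pinsker; mixing KL in Step~3 with the $\chi^2$-based transport in Step~4 is not coherent.

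Second, your Step~5 display $d_{TV}\lesssim \sqrt{tC_0^{-q}}+\lambda\kappa\sqrt{t\theta}$ does not produce the claimed rates: solving it gives $t\lesssim C_0^{q}$ and $t\lesssim C_\tau^{q}/(\eps\kappa^2)$, which are too strong and do not match the theorem. The halving of exponents has a different origin. In the paper, bounding the truncation probability under $\bar\bbP$ via \Cref{impthm:simchowitz-core-chi-squared-divergence} produces at step $i$ a term $\sqrt{f(\tau_i)}\cdot\exp\!\big(\tfrac{\lambda^2\kappa^2}{2}\sum_{j\le i}\tau_j\big)$; the $\sqrt{f(\tau)}$ gives $C_0^{-q/2}$, and the accumulated exponent $\sum_{j\le i}\tau_j$ summed over $i\le t$ contributes a \emph{quadratic} $t^2$ factor, forcing $\lambda^2\kappa^2 t^2 C_\tau^{-q}=O(1)$ and hence $t=O(C_\tau^{q/2}/(\kappa^2\sqrt\eps))$. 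Your formula has only a linear $t$ in both places and so misses both effects.
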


Our proof methodology mirrors that of Section 6 in \cite{simchowitz2017gap}, but applied to this linear measurements framework instead of the matrix-vector framework as studied in their paper (and partially explained in \Cref{app:explain-simchowitz}).
The crux of this section is to show that \Cref{lem:kron-unit-vec-conentration} implies the lower bound in \Cref{thm:l2-adaptive-estimation-lower-bound}.
We prove this lower bound by appealing to the following testing problem:

\begin{problem}
    \label{prob:l2-estimation-instance}
    Fix \(n,q\in\bbN\) and \(\lambda > 1\).
    Let \(\vg\in\bbR^{n^q}\) be a \(\cN(\vec0,\mI)\) vector, and let \(\vu = \vu_1\otimes \cdots \otimes \vu_q\) where each \(\vu_i \in \bbR^n\) vector is uniformly distributed on the set of vectors with \(\norm{\vu_i}_2^2 = n\).
    Further, let
    \[
        \va_0 = \vg
        \hspace{1cm}\text{and}\hspace{1cm}
        \va_1 = \vg + \lambda\vu.
    \]
    Suppose that nature samples \(i \in \{0,1\}\) uniformly at random.
    Then, an algorithm \cA computes \(t\) linear measurements with \(\va \defeq \va_i\) and then guesses if \(i=0\) or \(i=1\).
\end{problem}
The result \Cref{thm:l2-adaptive-estimation-lower-bound} follows from combining two results: showing that any L2 estimating algorithm can distinguish \(\va_0\) from \(\va_1\), and that distinguishing \(\va_0\) from \(\va_1\) requires exponential query complexity.
We start with the former result.

\begin{lemma}
    Let \cA be any linear measurement algorithm that can solve \Cref{prob:l2-estimation} with probability \(\frac23\) for some \(\eps \in (0,0.25)\).
    Then \cA can solve \Cref{prob:l2-estimation-instance} when \(\lambda=6\sqrt\eps\) and \(n^q = \Omega(\frac1{\eps^2})\) with probability at least \(\frac35\).
\end{lemma}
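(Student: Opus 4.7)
The plan is to reduce $\cA$ to a simple threshold test: run $\cA$ on $\va$ to obtain an estimate $z \in (1\pm\eps)\norm{\va}_2^2$, and output ``$i=1$'' iff $z$ exceeds a threshold $\tau$ slightly above $n^q$. The key calculation is that $\norm{\vu}_2^2 = \prod_{j=1}^{q} \norm{\vu_j}_2^2 = n^q$ deterministically, so the two hypotheses induce squared norms
\begin{align*}
\norm{\va_0}_2^2 &= \norm{\vg}_2^2, \\
\norm{\va_1}_2^2 &= \norm{\vg}_2^2 + 2\lambda\langle\vg,\vu\rangle + \lambda^2 n^q,
\end{align*}
whose means differ by exactly $\lambda^2 n^q = 36\eps\, n^q$. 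The scaling $\lambda = 6\sqrt\eps$ is chosen so that this gap is a constant factor larger than the $\eps\cdot n^q$ multiplicative slack in $\cA$'s output.

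Next I would control the stochastic fluctuations. Standard $\chi^2$ tail bounds give $|\norm{\vg}_2^2 - n^q| \leq O(\sqrt{n^q})$ with probability at least $29/30$; conditional on $\vu$ the cross term $\langle\vg,\vu\rangle$ is $\cN(0,n^q)$, so $|2\lambda\langle\vg,\vu\rangle| \leq O(\sqrt{\eps\, n^q})$ with probability at least $29/30$. Provided $n^q \geq C/\eps^2$ for a sufficiently large absolute constant $C$, both fluctuations are bounded by $\eps\, n^q/10$, so on this event $\norm{\va_0}_2^2 \leq (1+\eps/10)n^q$ while $\norm{\va_1}_2^2 \geq (1+36\eps - \eps/5)n^q$.

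Combining this with the $(1\pm\eps)$ multiplicative guarantee of $\cA$, the output $z$ satisfies $z \leq (1+\eps)(1+\eps/10)n^q \leq (1+2\eps)n^q$ under $H_0$ and $z \geq (1-\eps)(1+35\eps)n^q \geq (1+30\eps)n^q$ under $H_1$ for $\eps < 1/4$, which are cleanly separated by the threshold $\tau = (1+10\eps)n^q$. A union bound over the failure event of $\cA$ (probability at most $1/3$) and the two concentration failure events (each at most $1/30$) yields overall failure probability at most $1/3 + 1/15 = 2/5$, so the test succeeds with probability at least $3/5$.

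The main obstacle is purely bookkeeping: choosing the constants in $n^q = \Omega(1/\eps^2)$ and in the threshold so that three error budgets --- the $(1\pm\eps)$ relative error of $\cA$, the $\chi^2$ fluctuation of $\norm{\vg}_2^2$ around its mean, and the Gaussian fluctuation of $2\lambda\langle\vg,\vu\rangle$ --- all fit comfortably inside the $36\eps\, n^q$ signal gap, while leaving enough slack for the union bound to beat a $2/5$ failure probability. Notably, $\lambda = 6\sqrt\eps$ is essentially the smallest $\lambda$ for which this separation can be certified by a $(1\pm\eps)$-estimator, which is why the subsequent lower bound in \Cref{thm:l2-estimation-lower-bound} depends on $\sqrt\eps$ rather than $\eps$.
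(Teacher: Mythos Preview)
Your proposal is correct and follows essentially the same approach as the paper: decompose $\norm{\va_1}_2^2 = \norm{\vg}_2^2 + 2\lambda\langle\vg,\vu\rangle + \lambda^2 n^q$, use $\chi^2$ concentration for $\norm{\vg}_2^2$ and Gaussian concentration for $\langle\vg,\vu\rangle$, and verify that the $(1\pm\eps)$-windows around the two norms are separated. One small arithmetic slip: $(1-\eps)(1+35\eps) \geq 1+30\eps$ fails for $\eps \in (4/35,\,1/4)$, but since the bound still gives $z \geq (1+25\eps)n^q$ under $H_1$ your threshold $\tau=(1+10\eps)n^q$ separates the two cases regardless, so the argument goes through unchanged.
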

\begin{proof}
    Throughout this proof, we let \(C>0\) be a large enough constant that both of the concentrations required simultaneously hold with probability \(\frac{9}{10}\).
    We will concretely assume that \(\frac{1}{n^{q/2}} \leq \min\{\frac{\lambda^2}{4C},\frac{\lambda}{8C}\}\).
    Note that \(\norm{\vg}_2^2\) is a chi-squared random variable with parameter \(n^q\).
    So, \(\norm{\va_0}_2^2 = \norm{\vg}_2^2 \in (1\pm\frac{C}{n^{q/2}})n^q \subseteq (1\pm\frac{\lambda^2}{4})n^q\).
    We also know that \(\norm{\vh}_2^2 = n^q\) exactly.
    Further, since \(\vg\) is Gaussian, we know that \(\langle \vg, \vh \rangle \sim \cN(\vec0,\norm{\vh}_2^2) = \cN(\vec0,n^q)\), and therefore that \(|\langle \vg, \vh \rangle| \leq C n^{q/2}\).
    This lets us expand
    \begin{align*}
    \norm{\va_1}_2^2
    &= \norm{\vg}_2^2 + \lambda^2\norm{\vh}_2^2 - 2\lambda\langle \vg, \vh\rangle \\
    &\geq (1-\tsfrac{\lambda^2}{4})n^q + \lambda^2n^q - \tsfrac{2\lambda C}{n^{q/2}}n^q \\
    &\geq (1-\tsfrac{\lambda^2}{4})n^q + \lambda^2n^q - \tsfrac{\lambda^2}{4}n^q \\
    &= (1+\tsfrac{\lambda^2}{2})n^q
    \end{align*}
    So, we have that
    \[
        \norm{\va_0}_2^2 \leq (1+\tsfrac{\lambda^2}{4}) n^q
        \hspace{1cm}
        \text{and}
        \hspace{1cm}
        \norm{\va_1}_2^2 \geq (1+\tsfrac{\lambda^2}{2}) n^q.
    \]
    In particular, since \(\lambda = 6\sqrt\eps\), we have that
    \[
        (1+\eps)\norm{\va_0}_2^2
        \leq (1+\eps)(1+\tsfrac{36\eps}{4})n^q
        < (1-\eps)(1+\tsfrac{36\eps}{2})n^q
        \leq (1-\eps)\norm{\va_1}_2^2
    \]
    holds for all \(\eps\in(0,0.25)\).
    In particular, this means that any algorithm \cA that can estimate \((1\pm\eps)\norm{\va}_2^2\) from \(t\) measurements can distinguish \(\va_0\) from \(\va_1\) with high probability, completing the proof
\end{proof}

Next, we show the crux of the lower bound -- that \Cref{prob:l2-estimation-instance} has exponential sample complexity lower bound.
We show this by applying \Cref{lem:corrected-truncated-dtv} to our setting.
In order to instantiate this theorem though, we have to introduce some further notation.

\begin{setting}
    \label[setting]{setting:l2-estimation-proof-notation}
Fix an algorithm \cA that solves \Cref{prob:l2-estimation-instance}.
Let \(\vv^{(1)},\ldots,\vv^{(t)}\) be the (possibly adaptive) query vectors computed by \cA.
Let \(w_1,\ldots,w_t\) be the responses from the oracle.
That is, \(w_i = \langle \vv^{(i)}, \va \rangle\).
Let \(\cZ_i = (\vv^{(1)}, w_1, \ldots, \vv^{(i)}, w_i)\) be the transcript of all information sent between \cA and the oracle in the first \(i\) queries.
By Yao's minimax principle, we assume without loss of generality that \cA is deterministic.
Therefore, \(\vv^{(i)}\) is a deterministic function of \(\cZ_{i-1}\).

We let \bbQ denote the distribution of \(\cZ_t\) when \(\va = \va_0\).
We let \(\bbP_{\vu}\) denote the distribution of \(\cZ_t\) when \(\va = \va_1\) conditioned on a specific value of \vu.
We let \(\bar\bbP\) denote the marginal distribution of \(\bbP_{\vu}\) over all \vu, or equivalently that \(\bar\bbP\) is the distribution of \(\cZ_t\) when \(\va = \va_1\).
Lastly, we let \(A_{\vu}^{i}\) be the event that \(\{\forall j\in [i], \langle \vv^{j}, \vu\rangle^2 \leq \tau_j\}\) for some numbers \(0 \leq \tau_1 \leq \ldots \leq \tau_t\) that will be clear from context.
\end{setting}

Following this notation, to show that no algorithm can distinguish \(\va_0\) from \(\va_1\), it suffices to show that there is low total variation between \(\bar\bbP\) and \bbQ.
We will do this by applying \Cref{lem:corrected-truncated-dtv}.
In particular, specialized to our context, the lemma says the following:
\begin{corollary}
    \label{corol:l2-estimation-proof-goals}
    Consider \Cref{setting:l2-estimation-proof-notation}.
    Fix any numbers \(0 \leq \tau_1 \leq \ldots \leq \tau_t\).
    If we are given that
    \begin{align}
        \Pr[\exists i \in [t] ~:~ \langle \vv^{(i)}, \vu\rangle^2 \geq \tau_i] \leq z
        \label{eqn:l2-truncation-rate}
    \end{align}
    and that
    \begin{align}
        \E_{\cZ_t \sim \bbQ}\left[
            \left(\frac{
                \E_{\vu}[d\bbP_{\vu}(\cZ_t \cap A_{\vu}^{t})]
            }{
                d\bbQ(\cZ_t)
            }\right)^2
        \right]
        \leq 1+z
        \label{eqn:l2-estimation-divergence-rate}
    \end{align}
    then the total variation distance between \(\bar\bbP\) and \bbQ is at most \(\sqrt{3z}\).
    In particular, if we take \(z = \frac{1}{27}\) then \cA cannot distinguish between \(\va_0\) and \(\va_1\) with probability at least \(\frac23\).
\end{corollary}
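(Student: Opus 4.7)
The plan is to derive this corollary as a direct instantiation of \Cref{lem:corrected-truncated-dtv}, and then convert the resulting total variation bound into a bound on distinguishing probability via the standard Neyman–Pearson / Le Cam argument. The bulk of the probabilistic content lives in the general lemma, so this proof is essentially bookkeeping.

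First I would identify the objects of \Cref{setting:l2-estimation-proof-notation} with the hypotheses of \Cref{lem:corrected-truncated-dtv}: the null distribution is $\bbQ$, the mixture alternative is $\bar\bbP = \E_\vu[\bbP_\vu]$, and the family of truncation events is $\{A_\vu^t\}_\vu$ with $A_\vu^t = \{\forall j \in [t],\, \langle \vv^{(j)}, \vu\rangle^2 \leq \tau_j\}$. Each $A_\vu^t$ is measurable with respect to $\cZ_t$ for fixed $\vu$ since the adaptive queries $\vv^{(j)}$ are deterministic functions of $\cZ_{j-1}$ (Yao's principle). Hypotheses (\ref{eqn:l2-truncation-rate}) and (\ref{eqn:l2-estimation-divergence-rate}) are precisely the truncation-failure and truncated-$\chi^2$ bounds consumed by \Cref{lem:corrected-truncated-dtv}, so invoking the lemma immediately produces $D_{TV}(\bar\bbP, \bbQ) \leq \sqrt{3z}$. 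Intuitively, one proves this by writing $D_{TV}(\bar\bbP, \bbQ) \leq D_{TV}(\bar\bbP, \tilde\bbP) + D_{TV}(\tilde\bbP, \bbQ)$ for the truncated sub-probability measure $\tilde\bbP(\cdot) \defeq \E_\vu[\bbP_\vu(\cdot \cap A_\vu^t)]$: the first term is at most $z/2$ from the mass deficit of $\tilde\bbP$, and the second is at most $\sqrt{3z}/2$ via Cauchy–Schwarz applied to $\E_\bbQ\bigl[(d\tilde\bbP/d\bbQ - 1)^2\bigr] \leq (1+z) - 2(1-z) + 1 = 3z$.

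Second, since \cA's guess is a measurable function of $\cZ_t$, the Neyman–Pearson / Le Cam bound states that the optimal success probability for \Cref{prob:l2-estimation-instance} under the uniform prior on $\{0,1\}$ equals $\tfrac{1}{2} + \tfrac{1}{2} D_{TV}(\bbQ, \bar\bbP)$. Setting $z = \tfrac{1}{27}$ yields $D_{TV} \leq \tfrac{1}{3}$, and hence an optimal success probability of at most $\tfrac{1}{2} + \tfrac{1}{6} = \tfrac{2}{3}$, which is the stated conclusion.

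The main (minor) subtlety I anticipate is verifying that adaptivity does not interfere with the $\chi^2$ computation inside \Cref{lem:corrected-truncated-dtv}. The key observation is that, conditional on the query sequence encoded in $\cZ_t$, the responses under $\bbP_\vu$ are a Gaussian shift by $\lambda \mV^\intercal \vu$ of the responses under $\bbQ$, so the likelihood ratio $d\bbP_\vu/d\bbQ(\cZ_t)$ factors cleanly across time steps despite the queries being chosen adaptively — exactly as in the non-adaptive warm-up sketched in \Cref{sec:trace-est-proof-sketch}. Assuming this factorization is packaged inside \Cref{lem:corrected-truncated-dtv}, the corollary is a one-line application.
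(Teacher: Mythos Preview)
Your proposal is correct and matches the paper's approach exactly: the paper's proof is the single sentence ``This corollary follows directly from plugging in \Cref{setting:l2-estimation-proof-notation} into \Cref{lem:corrected-truncated-dtv},'' and your identification of $\theta=\vu$, $A_\theta = A_\vu^t$, $1-p = \Pr[\exists i:\langle\vv^{(i)},\vu\rangle^2\geq\tau_i]$, together with the Le Cam bound $\tfrac12+\tfrac12 D_{TV}$ for the ``in particular'' clause, is precisely that plugging-in. Your additional commentary on adaptivity and the intuitive triangle-inequality sketch of the lemma are accurate but unnecessary here, since those details live inside \Cref{lem:corrected-truncated-dtv} and the downstream lemmas rather than in this corollary.
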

This corollary follows directly from plugging in \Cref{setting:l2-estimation-proof-notation} into \Cref{lem:corrected-truncated-dtv}.
In order to prove \Cref{thm:l2-adaptive-estimation-lower-bound}, we just need to prove that both \Cref{eqn:l2-truncation-rate,eqn:l2-estimation-divergence-rate} hold with \(z = \frac1{27}\).
This will be the focus of the rest of the subsection.

First, we will need the following claim about divergences:
\begin{lemma}
\label{lem:gaussian-chi-squared-divergence}
Let \(\bbP_a\) denote the distribution \(\cN(\va,\mSigma)\), \(\bbP_b\) the distribution \(\cN(\vb,\mSigma)\), and \(\bbQ\) the distribution \(\cN(\vec0,\mSigma)\).
Then,
\[
    \E_{\vz\sim\bbQ}\left[
        \left(\frac{d\bbP_a(\vz)}{d\bbQ(\vz)}\right)^2
    \right]
    = e^{\va^\intercal\mSigma^{-1}\va}
\]
and
\[
    \E_{\vz\sim\bbQ}\left[
        \frac{d\bbP_a(\vz) d\bbP_b(\vz)}{(d\bbQ(\vz))^2}
    \right]
    = e^{\va^\intercal\mSigma^{-1}\vb}
\]
\end{lemma}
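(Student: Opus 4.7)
The plan is to prove both identities by direct algebraic computation, using only the explicit formula for a Gaussian density and the moment generating function of a centered Gaussian. There is no clever reduction needed; the identities are essentially a restatement of standard facts about chi-squared/Renyi divergences between Gaussians with a common covariance, and the main thing to check is that the cross terms cancel correctly.

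First I would write out the Radon-Nikodym derivative of $\bbP_a$ against $\bbQ$ by taking the ratio of their densities. Because the $(2\pi)^{n^q/2}|\mSigma|^{1/2}$ normalizers cancel, one is left with
\[
\frac{d\bbP_a(\vz)}{d\bbQ(\vz)} = \exp\bigl(\vz^\intercal \mSigma^{-1}\va - \tfrac12 \va^\intercal \mSigma^{-1}\va\bigr),
\]
after expanding $(\vz-\va)^\intercal\mSigma^{-1}(\vz-\va)$ and canceling the $\vz^\intercal\mSigma^{-1}\vz$ term against the one coming from $d\bbQ$. Squaring gives $\exp\bigl(2\vz^\intercal\mSigma^{-1}\va - \va^\intercal\mSigma^{-1}\va\bigr)$, and then applying the Gaussian MGF identity $\E_{\vz\sim\cN(\vec0,\mSigma)}[e^{\vt^\intercal\vz}] = e^{\frac12 \vt^\intercal\mSigma\vt}$ with $\vt = 2\mSigma^{-1}\va$ yields $e^{2\va^\intercal\mSigma^{-1}\va}$. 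Combining with the deterministic factor $e^{-\va^\intercal\mSigma^{-1}\va}$ produces the claimed $e^{\va^\intercal\mSigma^{-1}\va}$.

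The second identity is proved the same way: write
\[
\frac{d\bbP_a(\vz)\,d\bbP_b(\vz)}{(d\bbQ(\vz))^2} = \exp\bigl(\vz^\intercal\mSigma^{-1}(\va+\vb) - \tfrac12\va^\intercal\mSigma^{-1}\va - \tfrac12\vb^\intercal\mSigma^{-1}\vb\bigr),
\]
then take the expectation under $\bbQ$ via the MGF with $\vt = \mSigma^{-1}(\va+\vb)$. This produces $\exp\bigl(\tfrac12(\va+\vb)^\intercal\mSigma^{-1}(\va+\vb)\bigr)$; expanding the quadratic and canceling the $\tfrac12\va^\intercal\mSigma^{-1}\va$ and $\tfrac12\vb^\intercal\mSigma^{-1}\vb$ deterministic terms leaves exactly $e^{\va^\intercal\mSigma^{-1}\vb}$.

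There is no real obstacle: the only thing to be careful about is keeping track of signs and the factor of $2$ when applying the MGF (since $\vt=2\mSigma^{-1}\va$ in the first identity contributes $\tfrac12 \vt^\intercal\mSigma\vt = 2\va^\intercal\mSigma^{-1}\va$). The computation is valid for any positive definite $\mSigma$, and symmetry of $\mSigma^{-1}$ is used implicitly when combining the $\vz$-linear terms. One could also derive these formulas by recognizing the first as the $\chi^2$-divergence identity $\chi^2(\bbP_a\,\|\,\bbQ)+1 = e^{\va^\intercal\mSigma^{-1}\va}$ between two Gaussians with common covariance, and the second as an analogous cross-moment, but the direct MGF computation above is the shortest route.
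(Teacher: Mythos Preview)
Your proposal is correct and follows essentially the same approach as the paper: write the density ratio explicitly, collect the exponent into a $\vz$-linear term plus a deterministic term, and evaluate the expectation via the Gaussian MGF. The paper organizes the algebra slightly differently (it keeps the full quadratic forms in the exponent and expands $(\va+\vb)^\intercal\mSigma^{-1}(\va+\vb)$ at the end, and it only writes out the second identity, noting the first is the case $\vb=\va$), but the computation is the same.
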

\begin{proof}
We prove only the second claim as the first claim follows from taking \(\vb=\va\).
We directly expand the expectation using the corresponding PDFs, nothing that the terms outside the expectation all exactly cancel since our distributions all share the same covariance matrix.
\begin{align*}
    \E_{\vz\sim\bbQ}\left[
        \frac{d\bbP_a(\vz) d\bbP_b(\vz)}{(d\bbQ(\vz))^2}
    \right]
    &= \E_{\vz\sim\bbQ}\left[
        e^{
            -\frac12(\vz-\va)^\intercal\mSigma^{-1}(\vz-\va)
            -\frac12(\vz-\vb)^\intercal\mSigma^{-1}(\vz-\vb)
            + \vz^\intercal\mSigma^{-1}\vz
        }
    \right] \\
    &= \E_{\vz\sim\bbQ}\left[
        e^{
            -\frac12(\va^\intercal\mSigma^{-1}\va
            +\vb^\intercal\mSigma^{-1}\vb)
            +(\vz^\intercal\mSigma^{-1}\va
            +\vz^\intercal\mSigma^{-1}\vb)
        }
    \right] \\
    &= e^{
            -\frac12(\va^\intercal\mSigma^{-1}\va
            +\vb^\intercal\mSigma^{-1}\vb)
        }
        \E_{\vz\sim\bbQ}\left[
        e^{
            \vz^\intercal(\mSigma^{-1}(\va+\vb))
        }
    \right] \\
    &= e^{
            -\frac12(\va^\intercal\mSigma^{-1}\va
            +\vb^\intercal\mSigma^{-1}\vb)
        }
        e^{
            \frac12
            (\mSigma^{-1}(\va+\vb))^\intercal
            \mSigma
            (\mSigma^{-1}(\va+\vb))
        } \tag{Gaussian MGF}\\
    &= e^{
            -\frac12(\va^\intercal\mSigma^{-1}\va
            +\vb^\intercal\mSigma^{-1}\vb)
        }
        e^{
            \frac12
            (\va+\vb)^\intercal
            \mSigma^{-1}
            (\va+\vb)
        } \\
    &= e^{
            -\frac12(\va^\intercal\mSigma^{-1}\va
            +\vb^\intercal\mSigma^{-1}\vb)
        }
        e^{
            \frac12
            (
            \va^\intercal\mSigma^{-1}\va
            + \vb^\intercal\mSigma^{-1}\vb
            + 2\va^\intercal\mSigma^{-1}\vb
            )
        } \\
    &= e^{
            \va^\intercal\mSigma^{-1}\vb
        }
\end{align*}
\end{proof}

We will also need the following information-theoretic claim from \cite{simchowitz2017gap}:
\begin{importedtheorem}[Proposition 5.1 of \cite{simchowitz2017gap}]
    \label{impthm:simchowitz-core-chi-squared-divergence}
    Let \(\cP\) be a prior distribution over parameters \(\theta \in \Theta\).
    Let \(\{\bbP_\theta\}_{\theta \in \Theta}\) be a family of distributions on space \((\cX,\cF)\) parameterized by \(\theta\).
    Let \(\{A_{\theta}\}_{\theta \in \Theta}\) be a set of events defined on \cF.
    Let \(\cV\) be an action space (i.e. an arbitrary set).
    Let \(\cL:\cV\times\Theta \rightarrow \{0,1\}\) be a binary loss function.
    Let \cA denote a determinstic algorithm that observes data and picks an action; that is \cA is any map from \cX to \cV.
    Let \(V_0\) be the probability an algorithm can achieve loss 0 without observing data:
    \[
        V_0 = \sup_{v\in\cV} \Pr_{\theta \sim \cP} [\cL(v,\theta)=0],
    \]
    and let \(V_v\) be the probability that \cA achieves loss 0 after observing a sample from \(\bbP_\theta\) while event \(A_{\theta}\) happens:
    \[
        V_v = \E_{\theta\sim\cP} \Pr_{x\sim\bbP_\theta}[\cL(\cA(x), \theta) = 0, A_\theta].
    \]
    Then, for any probability distribution \bbQ also on \((\cX,\cF)\),
    \[
        V_v \leq V_0 + \sqrt{V_0(1-V_0)\E_{\theta\sim\cP}\E_{x\sim\bbQ}\left[
            \left(\tsfrac{d\bbP_{\theta}[x]}{d\bbQ[x]}\right)^2
            \mathbbm1_{[A_\theta]}
        \right]}.
    \]
\end{importedtheorem}
This result will suffice to bound \Cref{eqn:l2-truncation-rate}:
\begin{lemma}
    \label{lem:finding-u-lower-bound}
    Consider \Cref{setting:l2-estimation-proof-notation}, where \(\tau_1=\cdots=\tau_t = C_\tau^{-q}\).
    Then, we have that
    \[
        \Pr[\exists i \in [t] ~:~ \langle \vv^{(i)}, \vu\rangle^2 \geq \tau_i] \leq \tsfrac{1}{27}
    \]
    so long as \(t = O(\min\{C_0^{q/2}, \frac{C_\tau^{q/2}}{\kappa^2 \sqrt\eps}\})\).
\end{lemma}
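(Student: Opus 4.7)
The plan is to bound the probability of ever violating the truncation event by decomposing according to the first violating index and then applying Imported Theorem~\ref{impthm:simchowitz-core-chi-squared-divergence} once per step. Let $B_i \defeq \{\langle \vv^{(i)}, \vu\rangle^2 \geq \tau\}$ with $\tau = C_\tau^{-q}$. The events $\{A^{i-1}_\vu \cap B_i\}_{i=1}^t$ partition $\{\exists i\in[t]: B_i\}$ by the first failing index, so it suffices to prove
\[
    \sum_{i=1}^t \Pr\nolimits_{\bar\bbP}\!\left[A^{i-1}_\vu \cap B_i\right] \leq \tfrac{1}{27}.
\]

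For each $i$, I would instantiate the imported theorem with parameter $\theta = \vu$ drawn from the prior in Problem~\ref{prob:l2-estimation-instance}; data equal to the $(i-1)$-step transcript $\cZ_{i-1}$, with alternative $\bbP_\vu$ and null $\bbQ$; truncation event $A_\theta = A^{i-1}_\vu$; action space $\cV$ the set of Kronecker-structured unit vectors; and loss $\cL(\vw, \vu) = 0$ iff $\langle \vw,\vu\rangle^2 \geq \tau$. By Yao's minimax principle $\vv^{(i)}$ is a deterministic function of $\cZ_{i-1}$, so the choice $\cA(\cZ_{i-1}) \defeq \vv^{(i)}/\norm{\vv^{(i)}}_2$ identifies $V_v$ with $\Pr_{\bar\bbP}[A^{i-1}_\vu \cap B_i]$. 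Lemma~\ref{lem:kron-unit-vec-conentration} bounds the data-free risk uniformly as $V_0 \leq C_0^{-q}$, since any fixed Kronecker unit vector is nearly orthogonal to $\vu$.

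The main obstacle is bounding the truncated chi-squared divergence $\E_\vu \E_\bbQ[(d\bbP_\vu/d\bbQ)^2 \mathbbm{1}_{A^{i-1}_\vu}]$ in the adaptive setting. In the non-adaptive case, Lemma~\ref{lem:gaussian-chi-squared-divergence} gives the closed form $\exp(\lambda^2 \vu^\intercal \mV_{i-1}(\mV_{i-1}^\intercal \mV_{i-1})^{-1}\mV_{i-1}^\intercal \vu)$, and $\kappa$-conditioning combined with the truncation caps the exponent by $\lambda^2 \kappa^2 (i-1)\tau$, mirroring the sketch in Section~\ref{sec:trace-est-proof-sketch}. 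For adaptive queries, I would unwind one step at a time: conditional on $\cZ_{j-1}$, the next scalar response $w_j$ is Gaussian under both laws with the same variance and a mean shift of $\lambda\langle \vv^{(j)},\vu\rangle$, so $d\bbP_\vu/d\bbQ$ telescopes into a product of one-dimensional Gaussian likelihood ratios whose log-expectations aggregate to $\lambda^2 \vu^\intercal \mV_{i-1}(\mV_{i-1}^\intercal \mV_{i-1})^{-1} \mV_{i-1}^\intercal \vu$ as in the non-adaptive calculation; the truncation $A^{i-1}_\vu$, together with Lemma~\ref{lem:conditioning-to-ortho-inner-prod}, once again caps this by $\lambda^2 \kappa^2 (i-1)\tau$.

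Combining these ingredients, the imported theorem produces
\[
    V_v \leq C_0^{-q} + \sqrt{C_0^{-q} \cdot \exp\!\big(O(\lambda^2 \kappa^2 t \tau)\big)}.
\]
Summing over $i\in[t]$ and substituting $\tau = C_\tau^{-q}$ together with $\lambda^2 = 36\eps$, the choice $t = O(\min\{C_0^{q/2},\, C_\tau^{q/2}/(\kappa^2 \sqrt\eps)\})$ with sufficiently small implicit constants keeps $\lambda^2 \kappa^2 t \tau = O(1)$ and drives the total sum below $\tfrac{1}{27}$, yielding the claim. The delicate point throughout is verifying that the adaptive telescoping really does reproduce the non-adaptive Gaussian bound under truncation; everything else is a routine aggregation of Lemma~\ref{lem:kron-unit-vec-conentration} over the $t$ steps.
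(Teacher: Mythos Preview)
Your plan matches the paper's: decompose by first failure, apply Imported Theorem~\ref{impthm:simchowitz-core-chi-squared-divergence} at each step with $V_0\le C_0^{-q}$ from Lemma~\ref{lem:kron-unit-vec-conentration}, and bound the truncated chi-squared term using $\kappa$-conditioning. The one genuine imprecision is your claim that, conditional on $\cZ_{j-1}$, the mean shift of $w_j$ between $\bbP_\vu$ and $\bbQ$ equals $\lambda\langle\vv^{(j)},\vu\rangle$. This is false when the queries are not orthogonal: conditioning on the prior responses $w_1,\dots,w_{j-1}$ also shifts the conditional mean of $\langle\vv^{(j)},\vg\rangle$, and the actual mean shift is $\lambda\langle(\mI-\mP_{j-1})\vv^{(j)},\vu\rangle$ with conditional variance $\|(\mI-\mP_{j-1})\vv^{(j)}\|_2^2$. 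With your stated mean shift and unit variance, the per-step log-second-moments would sum to $\lambda^2\|\mV^\intercal\vu\|_2^2$, not to the projection quantity you claim.

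The paper resolves this by explicitly Gram--Schmidt-orthonormalizing to $\vx^{(1)},\dots,\vx^{(i)}$, so that the conditional observations become independent standard Gaussians with mean shift $\lambda\langle\vx^{(j)},\vu\rangle$, and by invoking the Unrolling Lemma (Lemma~\ref{lem:unrolling-lemma}) to convert the truncated expectation into a supremum of $\prod_j e^{\lambda^2\langle\vx^{(j)},\vu\rangle^2}$ over transcripts in $A^{i-1}_\vu$. It then bounds each $\langle\vx^{(j)},\vu\rangle^2\le\kappa^2\|\mV_j^\intercal\vu\|_2^2$ separately via Lemma~\ref{lem:conditioning-to-ortho-inner-prod}, obtaining an exponent $\lambda^2\kappa^2 i^2\tau$ (quadratic in $i$). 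Your claimed aggregate $\lambda^2\vu^\intercal\mV(\mV^\intercal\mV)^{-1}\mV^\intercal\vu=\lambda^2\|\mP_i\vu\|_2^2$ and the resulting linear-in-$i$ exponent are in fact correct and tighter than the paper's bound, but they follow from the orthonormalized calculation $\sum_j\langle\vx^{(j)},\vu\rangle^2=\|\mP_i\vu\|_2^2$, not from the mean shift you wrote down.
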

\begin{proof}
We start by expanding the target probability into a probabilistic claim for each query made by the algorithm:
\begin{align*}
    \Pr[\exists i \in [t] ~:~ \langle \vv^{(i)}, \vu\rangle^2 \geq \tau_i]
    &\leq \sum_{i=1}^t \Pr[\langle \vv^{(i)}, \vu\rangle^2 \geq \tau_i ~:~ \forall j\in[i-1] ~ \langle \vv^{(j)}, \vu\rangle^2 \leq \tau_j]
\end{align*}
Now, we bound each summand on the right by using \Cref{impthm:simchowitz-core-chi-squared-divergence}.
We take \(\theta \defeq \vu\), so that \(\cP\) is the distribution of \vu.
Then, \(\bbP_\theta\) becomes \(\bbP_{\vu}\) from \Cref{setting:l2-estimation-proof-notation}, and \(\bbQ\) is exactly \bbQ as in \Cref{setting:l2-estimation-proof-notation}.
We take the truncation event \(A_\theta = A_{\vu}^{i-1}\).
The set of actions \cV is conditioned on the prior query vectors \(\vv^{(1)},\ldots,\vv^{(i-1)}\) already made by the algorithm:
\[
    \cV^i
    \defeq
    \{
        \vv^{(i)}
        ~:~
        \vv^{(i)} = \otimes_{j=1}^q \vv^{(i)}_j
        ,~
        \norm{\vv^{(i)}}_2=1
        ,~
        \text{cond}(\sbmat{\vv^{(1)} & \ldots & \vv^{(i)}}) \leq \kappa
    \}
\]
Lastly, we take \(\cL(\vv^{(i)},\vu) = \mathbbm1_{[\langle \vv^{(i)}, \vu\rangle^2 \leq \tau_i]}\).
Therefore, \Cref{impthm:simchowitz-core-chi-squared-divergence} takes
\[
    V_0 = \sup_{\vv^{(i)}\in\cV^{i}} \Pr_{\vu} [\langle \vv, \vu \rangle^2 \geq \tau_i] \leq C_0^{-q}
\]
where the last inequality uses \Cref{lem:kron-unit-vec-conentration}, recalling that \(\tau_i = C_\tau^{-q}\).
So, \Cref{impthm:simchowitz-core-chi-squared-divergence} tells us that
\begin{align*}
    V_v
    &= \E_{\vu} \Pr_{\cZ_i\sim\bbP_{\vu}}[\cL(\cA(x), \theta) = 0, A_\theta] \\
    &= \Pr[\langle \vv^{(i)}, \vu\rangle^2 \geq \tau_i ~:~ \forall j\in[i-1] \langle \vv^{(j)}, \vu\rangle^2 \leq \tau_i] \\
    &\leq C_0^{-q} + \sqrt{C_0^{-q}\E_{\vu}\E_{\cZ_i\sim\bbQ}\left[
            \left(\tsfrac{d\bbP_{\vu}[\cZ_i]}{d\bbQ[\cZ_i]}\right)^2
            \mathbbm1_{[A_{\vu}^{i-1}]}
        \right]}
\end{align*}
So, next we bound this expectation.
First, we take a moment to examine this ratio of probabilities.
We would like to apply \Cref{lem:gaussian-chi-squared-divergence} to bound the inner-most expectation.
However, the indicator variable inside the expectation prevents us from doing so.
Instead, we apply \Cref{lem:unrolling-lemma} to this situation (taking \(\bbP_a = \bbP_b = \bbP_{\vu}\)).
This lemma tells us that
\[
    \E_{\cZ_i\sim\bbQ}\left[
        \left(\tsfrac{d\bbP_{\vu}[\cZ_i]}{d\bbQ[\cZ_i]}\right)^2
        \mathbbm1_{[A_{\vu}^{i-1}]}
    \right]
    \leq \sup_{\cZ_i ~:~ (\vv^{(1)},\cdots,\vv^{(i)})\in A^{i-1}}
    \prod_{j=1}^i
    \E\left[
            \left(\frac{d\bbP_{\vh}(\cZ_j \mid \cZ_{j-1})}{d\bbQ(\cZ_j \mid \cZ_{j-1})}\right)^2 \big| \cZ_{j-1}
    \right].
\]
Now, we analyze this conditional expectation on the right.
First, recall that we assumed without loss of generality that \cA is a determinstic algorithm.
Therefore, the \(j^{th}\) query vector \(\vv^{(j)}\) is deterministic in \(\cZ_{j-1}\).
So, the random variable \(\cZ_j \mid \cZ_{j-1}\) is equivalent to just looking at \(w_j = \langle \vv^{(j)}, \va \rangle\).
Formally using the Data Processing Inequality (Lemma E.1 from \cite{simchowitz2017gap}), this means that it suffices to bound
\[
    \E\left[
        \left(\frac{d\bbP_{\vh}( w_j \mid \cZ_{j-1})}{d\bbQ( w_j \mid \cZ_{j-1})}\right)^2 \big| \cZ_{j-1}
    \right].
\]
Next, we take a moment to analyze the impact of conditioning here.
Unfortunately, it is annoying to analyze the expression above due to the way that \(w_j = \langle \vv^{(j)}, \va \rangle\) may depend on the previous observations in \(\cZ_{j-1}\).
Instead, we appeal to the Data Processing Inequality again to orthonormalize.
In particular, for the set of queries \(\mV_j \defeq [\vv^{(1)} ~ \cdots ~ \vv^{(j)}]\) in \(\cZ_j\), we let \(\mX_j \defeq [\vx^{(1)} ~ \cdots ~ \vx^{(j)}]\) be the result of running Gram-Schmidt on \(\mV_j\).
That is, \(\mX_j\) is an orthogonal matrix that spans \(\mV_j\).
We can write our new adjusted transcript as
\[
    \tilde \cZ_j = ( \vx^{(1)}, \langle \vx^{(1)}, \va\rangle, \cdots, \vx^{(j)}, \langle \vx^{(j)}, \va\rangle)
\]
Since this process is invertible, it does not change the statistical distance, and therefore it suffices to bound
\[
    \E\left[
        \left(\frac{d\bbP_{\vh}( \langle \vx^{(j)}, \va\rangle \mid \tilde\cZ_{j-1})}{d\bbQ( \langle \vx^{(j)}, \va\rangle \mid \tilde\cZ_{j-1})}\right)^2 \big| \tilde\cZ_{j-1}
    \right].
\]
Next, we observe that the set of all observations under the adjusted transcript \(\tilde\vw = [\langle \vx^{(1)}, \va\rangle ~ \cdots ~ \langle \vx^{(j)}, \va\rangle] = \mX_j^\intercal\va\) is distributed as a multivariate Gaussian.
Under \bbQ, \(\tilde\vw \sim \cN(\vec0, \mX_j^\intercal\mX_j) = \cN(\vec0,\mI)\).
Similarly, under \(\bbP_{\vu}\), \(\tilde\vw \sim \cN(\lambda\mX_j^\intercal\vu, \mI)\).
Notice that under both distributions, we have that all entries of \(\tilde\vw\) are independent.
Therefore, we know that \(\langle \vx^{(j)}, \va\rangle\) is independent of all other \(\langle \vx^{(m)}, \va\rangle\) given \(\mX_j\).
So, we can use \Cref{lem:gaussian-chi-squared-divergence} to say that
\[
    \E\left[
        \left(\frac{d\bbP_{\vh}( \langle \vx^{(j)}, \va\rangle \mid \tilde\cZ_{j-1})}{d\bbQ( \langle \vx^{(j)}, \va\rangle \mid \tilde\cZ_{j-1})}\right)^2 \big| \tilde\cZ_{j-1}
    \right]
    = e^{\lambda^2 \langle \vx^{(j)}, \vu\rangle^2}
\]
We want to upper bound this expectation in terms of our original transcript \(\cZ_t\) though.
Here, we use our conditioning assumption.
By \Cref{lem:conditioning-to-ortho-inner-prod}, we know that \(\langle \vx^{(j)}, \vu\rangle^2 \leq \kappa^2 \norm{\mV_j^\intercal\vu}_2^2\).
This means we bound
\[
    \E\left[
        \left(\frac{d\bbP_{\vh}( \langle \vx^{(j)}, \va\rangle \mid \tilde\cZ_{j-1})}{d\bbQ( \langle \vx^{(j)}, \va\rangle \mid \tilde\cZ_{j-1})}\right)^2 \big| \tilde\cZ_{j-1}
    \right]
    \leq e^{\lambda^2 \kappa^2 \norm{\mV_j^\intercal\vu}_2^2}.
\]
Further, using our conditioning assumption, we know that
Backing up, we then need to bound
\begin{align*}
    \E_{\cZ_i\sim\bbQ}\left[
        \left(\tsfrac{d\bbP_{\vu}[\cZ_i]}{d\bbQ[\cZ_i]}\right)^2
        \mathbbm1_{[A_{\vu}^{i-1}]}
    \right]
    &\leq \sup_{\cZ_i ~:~ (\vv^{(1)},\cdots,\vv^{(i)})\in A^{i-1}}
    \prod_{j=1}^i
    \E\left[
            \left(\frac{d\bbP_{\vh}(\cZ_j \mid \cZ_{j-1})}{d\bbQ(\cZ_j \mid \cZ_{j-1})}\right)^2 \big| \cZ_{j-1}
    \right] \\
    &= \sup_{\cZ_i ~:~ (\vv^{(1)},\cdots,\vv^{(i)})\in A^{i-1}}
    e^{\lambda^2 \kappa^2 \sum_{j=1}^i \norm{\mV_j^\intercal\vu}_2^2} \\
    &\leq \sup_{\cZ_i ~:~ (\vv^{(1)},\cdots,\vv^{(i)})\in A^{i-1}}
    e^{\lambda^2 \kappa^2 i \norm{\mV_i^\intercal\vu}_2^2} \\
    &\leq e^{\lambda^2 \kappa^2 i \sum_{j=1}^i \tau_j} \\
    &\leq e^{\lambda^2 \kappa^2 i^2 C_\tau^{-q}}
\end{align*}
Then, we can complete our overall lemma by taking
\begin{align*}
    \Pr[\exists i \in [t] ~:~ \langle \vv^{(i)}, \vu\rangle^2 \geq \tau_i]
    &\leq \sum_{i=1}^t \Pr[\langle \vv^{(i)}, \vu\rangle^2 \geq \tau_i ~:~ \forall j\in[i-1] ~ \langle \vv^{(j)}, \vu\rangle^2 \leq \tau_j] \\
    &\leq \sum_{i=1}^t \left(C_0^{-q} + \sqrt{C_0^{-q}\E_{\vu}\E_{\cZ_i\sim\bbQ}\left[
            \left(\tsfrac{d\bbP_{\vu}[\cZ_i]}{d\bbQ[\cZ_i]}\right)^2
            \mathbbm1_{[A_{\vu}^{i-1}]}
        \right]}\right) \\
    &\leq \sum_{i=1}^t \left(C_0^{-q} + C_0^{-q/2}e^{\lambda^2 \kappa^2 i^2 C_\tau^{-q} / 2} \right) \\
    &\leq \sum_{i=1}^t 2C_0^{-q/2}e^{\lambda^2 \kappa^2 i^2 C_\tau^{-q} / 2} \\
    &\leq 2tC_0^{-q/2}e^{\lambda^2 \kappa^2 t^2 C_\tau^{-q} / 2} \\
    &\leq \frac{1}{27}
\end{align*}
where we take \(t = O(\min\{C_0^{q/2}, \frac{C_\tau^{q/2}}{\kappa^2 \lambda}\}) = O(\min\{C_0^{q/2}, \frac{C_\tau^{q/2}}{\kappa^2 \sqrt\eps}\})\) on the last line.

\end{proof}

\begin{lemma}
    Consider \Cref{setting:l2-estimation-proof-notation}, where \(\tau_1=\cdots=\tau_t = C_\tau^{-q}\).
    Then, we have that
    \[
        \E_{\cZ_t \sim \bbQ}\left[
            \left(\frac{
                \E_{\vu}[d\bbP_{\vu}(\cZ_t \cap A_{\vu}^{t})]
            }{
                d\bbQ(\cZ_t)
            }\right)^2
        \right]
        \leq 1+z
    \]
    so long as \(t = O(\min\{C_0^{q/2}, \frac{C_\tau^{q/2}}{\kappa^2 \sqrt\eps}\})\).
\end{lemma}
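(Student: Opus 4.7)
The plan is to first introduce an independent copy $\vu'$ of $\vu$ and expand the squared expectation as a double integral:
\[
    \E_{\cZ_t\sim\bbQ}\!\left[\left(\tfrac{\E_{\vu}[d\bbP_{\vu}(\cZ_t\cap A_{\vu}^{t})]}{d\bbQ(\cZ_t)}\right)^{\!2}\right]
    = \E_{\vu,\vu'}\,\E_{\cZ_t\sim\bbQ}\!\left[\tfrac{d\bbP_{\vu}(\cZ_t)\,d\bbP_{\vu'}(\cZ_t)\,\mathbbm{1}_{A_{\vu}^t\cap A_{\vu'}^t}}{(d\bbQ(\cZ_t))^2}\right].
\]
It then suffices to bound the inner expectation for every fixed $(\vu,\vu')$, uniformly over those transcripts $\cZ_t$ consistent with the joint truncation event $A_{\vu}^t\cap A_{\vu'}^t$. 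This mirrors the first-moment reduction used in \Cref{lem:finding-u-lower-bound}, except that now we work with a product of two likelihood ratios against $\bbQ$ rather than a single ratio squared.

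Next, I would repeat the Gram--Schmidt reparameterization from the proof of \Cref{lem:finding-u-lower-bound}: pass to the equivalent orthonormalized transcript $\tilde\cZ_t=(\vx^{(1)},\langle\vx^{(1)},\va\rangle,\ldots,\vx^{(t)},\langle\vx^{(t)},\va\rangle)$, where $\vx^{(j)}$ is the Gram--Schmidt orthonormalization of $\vv^{(j)}$ against the earlier queries. Under $\bbQ$, the scalars $\langle\vx^{(j)},\va\rangle$ are iid $\cN(0,1)$; under $\bbP_{\vu}$ they are conditionally independent $\cN(\lambda\langle\vx^{(j)},\vu\rangle,1)$; and similarly under $\bbP_{\vu'}$. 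I would then apply the two-measure analogue of the unrolling identity (\Cref{lem:unrolling-lemma}) to $d\bbP_{\vu}\,d\bbP_{\vu'}/(d\bbQ)^2$, telescoping the joint likelihood ratio into a product of one-step conditional cross-divergences. The second identity of \Cref{lem:gaussian-chi-squared-divergence}, applied with conditional means $\lambda\langle\vx^{(j)},\vu\rangle$ and $\lambda\langle\vx^{(j)},\vu'\rangle$ and unit covariance, evaluates each such factor to $\exp(\lambda^2\langle\vx^{(j)},\vu\rangle\langle\vx^{(j)},\vu'\rangle)$. The inner expectation is therefore bounded by
\[
    \sup_{\cZ_t\in A_{\vu}^t\cap A_{\vu'}^t}\exp\!\Bigl(\lambda^2\sum_{j=1}^t\langle\vx^{(j)},\vu\rangle\langle\vx^{(j)},\vu'\rangle\Bigr).
\]

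The final step is to bound this exponent under the joint truncation event. Cauchy--Schwarz gives
\[
    \Bigl|\sum_{j=1}^t \langle\vx^{(j)},\vu\rangle\langle\vx^{(j)},\vu'\rangle\Bigr|
    \le \sqrt{\textstyle\sum_j\langle\vx^{(j)},\vu\rangle^2}\;\sqrt{\textstyle\sum_j\langle\vx^{(j)},\vu'\rangle^2},
\]
and each sum is controlled by the exact chain already used in \Cref{lem:finding-u-lower-bound}: \Cref{lem:conditioning-to-ortho-inner-prod} yields $\langle\vx^{(j)},\vu\rangle^2\le\kappa^2\|\mV_j^\intercal\vu\|_2^2$, and the truncation event $A_{\vu}^t$ yields $\|\mV_j^\intercal\vu\|_2^2\le j\tau_j\le jC_\tau^{-q}$, so $\sum_j\langle\vx^{(j)},\vu\rangle^2\le\kappa^2 t^2 C_\tau^{-q}$, and analogously for $\vu'$. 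The exponent is therefore at most $\lambda^2\kappa^2 t^2 C_\tau^{-q}$. Substituting $\lambda^2=36\eps$ and the hypothesis $t=O(\min\{C_0^{q/2},C_\tau^{q/2}/(\kappa^2\sqrt\eps)\})$ forces this to be at most a small absolute constant, so the outer $\E_{\vu,\vu'}$ preserves the bound by $1+z$ via $e^x\le 1+2x$ for small $x$.

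The main technical point I expect to have to verify is the two-measure unrolling: specifically, that the cross chi-squared identity from \Cref{lem:gaussian-chi-squared-divergence} really applies \emph{conditionally} given $\tilde\cZ_{j-1}$, even though $\vx^{(j)}$ depends on the prior Gaussian observations inside the transcript. This should be handled exactly as in the single-measure case of \Cref{lem:finding-u-lower-bound}: conditioning on $\tilde\cZ_{j-1}$ freezes $\vx^{(j)}$ by determinism of the algorithm, the randomness in $\vu$ and $\vu'$ is external to the transcript, and the residual scalar $\langle\vx^{(j)},\va\rangle$ is conditionally univariate Gaussian with the claimed means, so the Gaussian computation applies verbatim. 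Beyond that, the proof is essentially a copy of the first-moment argument with one extra factor entering at the Cauchy--Schwarz step, which is precisely why the same threshold on $t$ as in \Cref{lem:finding-u-lower-bound} suffices.
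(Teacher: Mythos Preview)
Your proposal is correct and follows essentially the same route as the paper's proof: both expand the square via an independent copy $\vu'$, apply the two-measure unrolling lemma (\Cref{lem:unrolling-lemma}), pass to the orthonormalized transcript, invoke the cross identity in \Cref{lem:gaussian-chi-squared-divergence} to get the per-step factor $e^{\lambda^2\langle\vx^{(j)},\vu\rangle\langle\vx^{(j)},\vu'\rangle}$, and then control the exponent using \Cref{lem:conditioning-to-ortho-inner-prod} together with the truncation event to arrive at $e^{\lambda^2\kappa^2 t^2 C_\tau^{-q}}$. The only cosmetic difference is that you bound the summed exponent via Cauchy--Schwarz whereas the paper bounds each cross term directly by $\kappa^2\|\mV_j^\intercal\vu\|_2\|\mV_j^\intercal\vu'\|_2$; both yield the same final estimate.
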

\begin{proof}
Equation (6.31) of \cite{simchowitz2017gap} shows that we can rewrite
\[
    \E_{\cZ_t \sim \bbQ}\left[
        \left(\frac{
            \E_{\vu}[d\bbP_{\vu}(\cZ_t \cap A_{\vu}^{t})]
        }{
            d\bbQ(\cZ_t)
        }\right)^2
    \right]
    = \E_{\vu,\vu'}\left[
        \E_{\cZ_t \sim \bbQ}\left[
            \frac{
                d\bbP_{\vu}(\cZ_t \cap A_{\vu}^{t})
                d\bbP_{\vu'}(\cZ_t \cap A_{\vu'}^{t})
            }{
                (d\bbQ(\cZ_t))^2
            }
        \right]
    \right]
\]
where \(\vu'\) is an iid copy of \vu.
Then, by \Cref{lem:unrolling-lemma} we know that
\[
    \E_{\cZ_t \sim \bbQ}\left[
        \frac{
            d\bbP_{\vu}(\cZ_t \cap A_{\vu}^{t})
            d\bbP_{\vu'}(\cZ_t \cap A_{\vu'}^{t})
        }{
            (d\bbQ(\cZ_t))^2
        }
    \right]
    \leq
    \sup_{\cZ_t\in A_{\vu}^t \cap A_{\vu'}^t}
    \prod_{i=1}^t \E\left[
        \frac{
            d\bbP_{\vu}(\cZ_i \mid \cZ_{i-1}) d\bbP_{\vu'}(\cZ_i \mid \cZ_{i-1})
        }{
            (d\bbQ(\cZ_i \mid \cZ_{i-1}))^2} \big| \cZ_{i-1
        }
    \right].
\]
As in \Cref{lem:finding-u-lower-bound}, we change our basis from \(\mV_j\) to \(\mX_j\).
Under \bbQ, we have \(\tilde\vw \sim \cN(\vec0,\mI)\).
Under \(\bbP_{vu}\), we have \(\tilde\vw\sim\cN(\lambda\mX_j^\intercal\vu,\mI)\).
So, by \Cref{lem:gaussian-chi-squared-divergence}, we know that
\[
    \E\left[
        \frac{
            d\bbP_{\vu}(\cZ_i \mid \cZ_{i-1}) d\bbP_{\vu'}(\cZ_i \mid \cZ_{i-1})
        }{
            (d\bbQ(\cZ_i \mid \cZ_{i-1}))^2
        } \big| \cZ_{i-1}
    \right]
    = e^{\lambda^2 \langle \vx^{(j)}, \vu\rangle \langle \vx^{(j)}, \vu'\rangle}
\]
And, again following \Cref{lem:conditioning-to-ortho-inner-prod}, we bound \(|\langle\vx^{(j)},\vu\rangle| \leq \kappa \norm{\mV_j^\intercal\vu}_2\), so the above exponential is at most \(e^{\lambda^2 \kappa^2 \norm{\mV_j^\intercal\vu}_2^2}\).
Therefore,
\begin{align*}
    \E_{\cZ_t \sim \bbQ}\left[
        \frac{
            d\bbP_{\vu}(\cZ_t \cap A_{\vu}^{t})
            d\bbP_{\vu'}(\cZ_t \cap A_{\vu'}^{t})
        }{
            (d\bbQ(\cZ_t))^2
        }
    \right]
    &\leq
    \sup_{\cZ_t\in A_{\vu}^t \cap A_{\vu'}^t}
    \prod_{i=1}^t \E\left[
        \frac{
            d\bbP_{\vu}(\cZ_i \mid \cZ_{i-1}) d\bbP_{\vu'}(\cZ_i \mid \cZ_{i-1})
        }{
            (d\bbQ(\cZ_i \mid \cZ_{i-1}))^2
        } \big| \cZ_{i-1}
    \right] \\
    &\leq
    \sup_{\cZ_t\in A_{\vu}^t \cap A_{\vu'}^t}
    e^{\lambda^2 \kappa^2 \sum_{i=1}^t \norm{\mV_j^\intercal\vu}_2^2} \\
    &\leq
    \sup_{\cZ_t\in A_{\vu}^t \cap A_{\vu'}^t}
    e^{\lambda^2 \kappa^2 t \norm{\mV_t^\intercal\vu}_2^2} \\
    &\leq e^{\lambda^2 \kappa^2 t \sum_{i=1}^t \tau_i} \\
    &\leq e^{\lambda^2 \kappa^2 t^2 C_\tau^{-q}} \\
    &\leq 1 + \frac{1}{27}
\end{align*}
where we take \(t = O(\min\{C_0^{q/2}, \frac{C_\tau^{q/2}}{\kappa^2 \lambda}\}) = O(\min\{C_0^{q/2}, \frac{C_\tau^{q/2}}{\kappa^2 \sqrt\eps}\})\) on the last line, completing the proof.
\end{proof}

% Good

\section{Formal Adaptive Matrix-Vector Lower Bound}
\label{app:formal-kronmatvec-lower}

Written in the notation and form of \Cref{app:explain-simchowitz}, we can write \Cref{thm:kron-matvec-testing-lower} equivalently as the following:
\begin{reptheorem}{thm:kron-matvec-testing-lower}
Consider \Cref{setting:abstract-simchowitz} where \(\cV^t\) is the set of \(\kappa-\)conditioned Kronecker-structured query vectors:
\[
    \cV^t \defeq
    \{
        (\vv^{(1)},\ldots,\vv^{(t)})
        ~:~
        \vv^{(i)} = \otimes_{j=1}^q \vv^{(i)}_j
        ,~
        \vv^{(i)}_j\in\bbS^{n}
        ,~
        \text{cond}([\vv^{(1)} ~ \cdots ~ \vv^{(t)}]) \leq \kappa
    \}
\]
Then, \(t = \Omega(\min\{C_0^{q/2}, \frac{C_\tau^q}{\lambda^2 \kappa^2}\})\) matrix-vector products are needed to correctly guess if \(\mA = \mA_0\) or \(\mA_1\) in \Cref{setting:abstract-simchowitz} with probability at least \(\frac23\), where \(C_0\) and \(C_\tau\) are the constants in \Cref{lem:kron-unit-vec-conentration}.
\end{reptheorem}
\begin{proof}
We proceed by using \Cref{thm:abstract-simchowitz-requirements} in conjunction with \Cref{lem:kron-unit-vec-conentration} and \Cref{lem:kron-unit-vec-mgf}.
In particular, we \Cref{lem:kron-unit-vec-conentration} tells us that \(f(\tau) \leq C_0^{-q}\) for any \(\tau \leq C_\tau^{-q}\).
Therefore, we can take \(\tau_1=\ldots=\tau_t = C_\tau^{-q}\) so that \(\sum_{j=1}^i \tau_j = i C_\tau^q\).
By assuming that \(t \leq \frac{3 C_\tau^q}{\lambda^2 \kappa^2}\), we know that \(e^{\frac{\lambda^2\kappa^2}{2} \sum_{j=1}^{i=1}\tau_j} \leq e^{\frac{\lambda^2 \kappa^2 i}{2C_\tau^q}} \leq 4\).
Then,
\begin{align*}
    f(\tau_1) + 2\sum_{i=1}^t e^{\frac{\lambda^2\kappa^2}{2} \sum_{j=1}^{i=1}\tau_j} \sqrt{f(\tau_j)}
    &\leq C_0^{-q} + 8 \sum_{i=1}^t C_0^{-q/2} \\
    &\leq (1+8t)C_0^{-q/2} \\
    &\leq \frac1{27}
\end{align*}
where the last line holds so long as \(t \leq O(C_0^{-q/2})\).
Similarly, we can use \Cref{lem:kron-unit-vec-mgf} to bound
\(
    \E_{\vu,\vu'}[e^{\eta |\langle \vu, \vu'\rangle|}]
    \leq e^{\frac{2\eta}{n^q}}
\)
for any \(\eta \in (0,1)\).
Therefore,
\begin{align*}
    \E_{\vu,\vu'\sim\cP}[e^{\lambda^2 \kappa^2|\langle \vu, \vu'\rangle|(\sum_{i=1}^t \tau_i) + \frac{\lambda^2\kappa^4}{n^q}(\sum_{i=1}^t \tau_i)^2}]
    &= \E_{\vu,\vu'\sim\cP}\left[e^{\frac{\lambda^2 \kappa^2 t}{C_\tau^q}|\langle \vu, \vu'\rangle|}\right] e^{ \frac{\lambda^2\kappa^4 t^2}{n^q C_\tau^{2q}}} \\
    &\leq e^{2\frac{\lambda^2 \kappa^2 t}{n^q C_\tau^q} + \frac{\lambda^2\kappa^4 t^2}{n^q C_\tau^{2q}}} \\
    &= e^{2\frac{\lambda^2 \kappa^2 t}{n^q C_\tau^q}(1 + \frac{\kappa^2 t}{4C_\tau^q})} \\
    &\leq e^{4\frac{\lambda^2 \kappa^2 t}{n^q C_\tau^q}} \\
    &\leq 1 + 8\frac{\lambda^2 \kappa^2 t}{n^q C_\tau^q} \\
    &\leq 1 + \frac{1}{27}
\end{align*}
where we use that \(\eta = \frac{\lambda^2 \kappa^2 t}{C_\tau^q} \leq 1\) in the first inequality, that \(t \leq \frac{4 C_\tau^q}{\kappa^2}\) in the second inequality, that \(e^x \leq 1+2x\) for \(x \leq 1\) in the third inequality, and we take \(t \leq \frac{n^q C_\tau^q}{27 \cdot 8 \lambda^2 \kappa^2}\) in the last inequality.
By \Cref{thm:abstract-simchowitz-requirements}, we find that having \(t \leq O(\min\{C_0^{q/2}, \frac{C_\tau^q}{\lambda^2\kappa^2}\})\) does not suffice to correctly guess if \(\mA = \mA_0\) or \(\mA = \mA_1\) in \Cref{setting:abstract-simchowitz} with probability at least \(\frac23\).
\end{proof}

% Good

\section{Connecting to the Simchowitz et. al Lower Bounds}
\label{app:explain-simchowitz}

In \cite{simchowitz2017gap}, the authors prove a lower bound against the number of matrix-vector products needed to detect if there is a rank-one matrix planted on a random Wigner matrix.
Their techniques and proofs are all written to consider the generic matrix-vector model, where we can compute \(\mA\vv\) for any vector \(\vv\in\bbR^D\).
However, with minor alteration, their proof techniques can be significantly generalized to allow matrix-vector products with a limited matrix-vector model.
To start, we define a generic notion of a limited matrix-vector model.

\begin{definition}
    Fix a set \(\cV^t \in (\bbS^{D})^{t}\).
    A matrix-vector algorithm \cA is \(\cV^t\) limited if it always computed exactly \(t\) matrix vector products and if, for all input matrices \(\mA\in\bbR^{D \times D}\) the algorithm only computes (possibly adaptive) query vectors \(\vv^{(1)},\ldots,\vv^{(t)}\) such that the sequence \((\vv^{(1)},\ldots,\vv^{(t)}) \in \cV^t\).
\end{definition}

The proof methods of \cite{simchowitz2017gap} rely on assuming that the matrix-vector queries computed are orthonormal.
We do not want to assume that the queries admissible in \(\cV^t\) are orthonormal, so we instead will make an assumption on \(\cV^t\) that measure how far \(\cV^t\) is from having orthonormal queries:
\begin{definition}
    For each \((\vv^{(1)},\ldots,\vv^{(t)}) \in \cV^t\), let \(\mV = \sbmat{\vv^{(1)} & \cdots & \vv^{(t)}} \in \bbR^{D \times t}\).
    If, for all \((\vv^{(1)},\ldots,\vv^{(t)}) \in \cV^t\) we know that the condition number of \mV is at most \(\kappa\), then we say that \(\cV^t\) is \emph{\(\kappa-\)conditioned}.
\end{definition}

We can now setup the instance of the lower bound problem considered in \cite{simchowitz2017gap}.

\begin{setting}
    \label[setting]{setting:abstract-simchowitz}
	Fix \(D\in\bbN\) and \(\lambda > 1\).
    Fix a \(\cV^t\) limited matrix-vector algorithm \cA.
	Let \cP be a isotropic prior distribution over planted vectors \(\vu\in\bbS^{D-1}\), so that \(\E[\vu]=\vec0\) and \(\E[\vu\vu^\intercal]=\mI\).
	Let \(\mW = \frac12(\mG+\mG^\intercal)\) where \(\mG\in\bbR^{D \times D}\) is a matrix with iid \(\cN(0,1)\) entries.
	Let \(\mA_0 \defeq \frac{1}{\sqrt D}\mW\) and \(\mA_1 = \frac{1}{\sqrt D}\mW + \lambda \vu\vu^\intercal\).
	Nature picks \(i \in \{0,1\}\) uniformly at random.
    \cA then computes \(t\) matrix vector products with \(\mA \defeq \mA_i\) and returns a guess of the value of \(i\).
\end{setting}

In this setting, we will show that \cite{simchowitz2017gap} proves the following lower bounding mechanism:

\begin{theorem}
	\label{thm:abstract-simchowitz-requirements}
	Consider \Cref{setting:abstract-simchowitz}.
	Fix any \(0 \leq \tau_1 \leq \ldots \leq \tau_t\) and \(\kappa > 1\).
	Let \(f(\tau)\) be the probability that the best possible blind guess for \(\vu\) in \(\cV^t\) has squared inner product at least \(\frac{\tau}{D}\):
	\[
		f(\tau) \defeq \sup_{(\vv^{(1)},\ldots,\vv^{(t)})\in\cV^t} \sup_{i\in[t]} \Pr[\langle \vv^{(i)}, \vu\rangle^2 > {\tsfrac{\tau}{D}}].
	\]
	Suppose that for some \(z\in(0,1)\)
	\[
		f(\tau_1) + 2 \sum_{i=1}^t e^{\frac{\lambda^2\kappa}2 \sum_{j=1}^{i-1}\tau_j}\sqrt{f(\tau_j)} \leq z
	\]
	and that
	\[
		\E_{\vu,\vu'\sim\cP}[e^{\lambda^2 \kappa^2 |\langle \vu, \vu'\rangle|\sum_{i=1}^t \tau_i + \frac{\lambda^2\kappa^4}{D}(\sum_{i=1}^t \tau_i)^2}] \leq 1+z.
	\]
	Then, \cA can distinguish \(\mA_0\) from \(\mA_1\) with probability at most \(\frac12 + \frac12 \sqrt{3z}\).
	In particular, if \(z \leq \frac1{27}\), then any such algorithm \cA cannot correctly guess if \(i=0\) or \(i=1\) with probability at least \(\frac23\).
\end{theorem}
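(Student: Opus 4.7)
The plan is to follow the information-theoretic framework of \cite{simchowitz2017gap}, adapting it so that the condition number $\kappa$ of the constrained query set $\cV^t$ replaces the assumption of exactly orthonormal queries. Introduce notation paralleling \Cref{setting:l2-estimation-proof-notation}: let $\bbQ$ denote the transcript distribution under $\mA=\mA_0$, $\bbP_{\vu}$ the transcript distribution under $\mA=\mA_1$ conditional on the planted direction, and $\bar\bbP = \E_{\vu}[\bbP_{\vu}]$. Define the truncation event $A_{\vu}^i = \{\forall j \leq i:\ \langle \vv^{(j)}, \vu\rangle^2 \leq \tau_j/D\}$. The goal is to bound $d_{TV}(\bar\bbP, \bbQ) \leq \Pr_{\bar\bbP}[\neg A_{\vu}^t] + d_{TV}(\bar\bbP \cap A^t,\, \bbQ)$, and then pass from distinguishing probability to $\tfrac12 + \tfrac12 d_{TV}$. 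The core device is \Cref{lem:corrected-truncated-dtv} applied with truncation parameters $\tau_1,\ldots,\tau_t$.

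To bound $\Pr_{\bar\bbP}[\neg A_\vu^t]$, union-bound over queries and control each conditional probability $\Pr[\langle \vv^{(i)},\vu\rangle^2 > \tau_i/D \mid A_\vu^{i-1}]$ using \Cref{impthm:simchowitz-core-chi-squared-divergence}, exactly as in \Cref{lem:finding-u-lower-bound}. The blind-guess probability is at most $f(\tau_i)$ by definition, and the chi-squared term $\E_\vu\E_\bbQ[(d\bbP_\vu/d\bbQ)^2 \mathbbm1_{A_\vu^{i-1}}]$ is handled by \Cref{lem:unrolling-lemma}: orthonormalize the prior queries via Gram--Schmidt to $\vx^{(1)},\ldots,\vx^{(i-1)}$, recognize that under $\bbP_\vu$ the responses along these directions are independent $\cN(\lambda\langle \vx^{(j)},\vu\rangle,1)$ while under $\bbQ$ they are $\cN(0,1)$, and apply the Gaussian chi-squared identity from \Cref{lem:gaussian-chi-squared-divergence}. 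This produces a factor $\exp(\lambda^2 \sum_{j<i} \langle \vx^{(j)},\vu\rangle^2)$, and \Cref{lem:conditioning-to-ortho-inner-prod} converts each such term into $\kappa^2 \|\mV_j^\intercal\vu\|_2^2 \leq \kappa^2 \sum_{j<i}\tau_j/D$ on $A_\vu^{i-1}$. Summed, this yields $\Pr[\neg A^t] \leq f(\tau_1) + 2\sum_{i=1}^t e^{\frac{\lambda^2 \kappa^2}{2} \sum_{j<i}\tau_j}\sqrt{f(\tau_j)} \leq z$, matching the first hypothesis.

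For the chi-squared on the truncated event, rewrite $\chi^2(\bar\bbP \cap A^t \,\|\, \bbQ)$ using an independent copy $\vu'\sim\cP$ as $\E_{\vu,\vu'}\E_\bbQ[\frac{d\bbP_\vu\,d\bbP_{\vu'}}{(d\bbQ)^2}\mathbbm1_{A_\vu^t \cap A_{\vu'}^t}]$, and unroll along queries via \Cref{lem:unrolling-lemma} again. After orthonormalization, the inner conditional second moment becomes $\exp(\lambda^2 \langle \vx^{(j)},\vu\rangle \langle \vx^{(j)},\vu'\rangle)$ by the cross form of \Cref{lem:gaussian-chi-squared-divergence}. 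To handle the sign of this cross inner product, split $\langle \vx^{(j)},\vu\rangle\langle \vx^{(j)},\vu'\rangle = \tfrac12[\langle \vx^{(j)},\vu+\vu'\rangle^2 - \langle \vx^{(j)},\vu\rangle^2 - \langle \vx^{(j)},\vu'\rangle^2]$ or alternatively apply the absolute value via $2ab \leq 2|ab|$ plus an extra diagonal correction. Bounding $|\langle\vx^{(j)},\vu\rangle||\langle\vx^{(j)},\vu'\rangle| \leq \kappa^2\|\mV_j^\intercal\vu\|_2\|\mV_j^\intercal\vu'\|_2$ via \Cref{lem:conditioning-to-ortho-inner-prod}, then expanding the resulting Gram inner products and using $A_\vu^t \cap A_{\vu'}^t$ to bound the diagonal contributions by $(\sum\tau_i)^2 / D$, produces an exponent of the form $\lambda^2\kappa^2 |\langle\vu,\vu'\rangle| \sum \tau_i + \lambda^2\kappa^4/D(\sum\tau_i)^2$. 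Taking expectation over $\vu,\vu'$ and applying the second hypothesis yields $\chi^2 \leq 1+z$, so $d_{TV}(\bar\bbP\cap A^t,\bbQ) \leq \tfrac12\sqrt{\chi^2-1} \leq \tfrac12\sqrt{z}$.

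The main obstacle will be correctly tracking how the condition number $\kappa$ propagates through the orthonormalization and, in particular, ensuring that the cross term $\langle\vx^{(j)},\vu\rangle\langle\vx^{(j)},\vu'\rangle$ is bounded in a way that genuinely factorizes into a piece depending on $|\langle\vu,\vu'\rangle|$ and a remainder depending only on the truncation radii. This is what forces the peculiar form of the second hypothesis (with the mixed $\kappa^2/\kappa^4$ exponents) rather than a cleaner bound. Assuming $z \leq 1/27$, combining the two pieces gives $d_{TV}(\bar\bbP,\bbQ) \leq z + \tfrac12\sqrt{z} \leq \sqrt{3z}$, and the standard testing reduction then caps the success probability at $\tfrac12 + \tfrac12\sqrt{3z} < \tfrac23$, completing the theorem.
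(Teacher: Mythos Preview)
Your high-level plan mirrors the paper's: both use \Cref{lem:corrected-truncated-dtv} with the truncation events $A_{\vu}^i$, bound $1-p$ via an iterative chi-squared argument, and bound the truncated chi-squared by introducing an independent copy $\vu'$. However, your execution conflates the matrix-vector setting of \Cref{setting:abstract-simchowitz} with the scalar linear-measurement setting of \Cref{app:l2-estimation-formal-lower}, and this causes the argument to break in two places.

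First, the response to query $\vx^{(j)}$ is the vector $\mA\vx^{(j)}\in\bbR^D$, not a scalar. Under $\bbP_{\vu}$ its conditional law is not $\cN(\lambda\langle\vx^{(j)},\vu\rangle,1)$; it is a $D$-dimensional Gaussian with mean $\lambda\langle\vu,\vx^{(j)}\rangle\,\vu$ and covariance coming from $\frac{1}{\sqrt D}\mW$ (with nontrivial correlations across queries because $\mW$ is symmetric). Consequently \Cref{lem:gaussian-chi-squared-divergence} does not apply directly, and the paper instead imports the matrix-vector likelihood computations of \cite{simchowitz2017gap}, namely \Cref{lem:orthonormal-chisquared-likelihood-ratio} and \Cref{lem:simchowitz-generic-chisquared-testing}. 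These give the per-step bound $e^{\lambda^2 D\langle\vu,\vx^{(j)}\rangle^2}$ (note the factor $D$) and, for the cross term, an exponent of the form
\[
D\lambda^2\sum_i \langle\vx^{(i)},\vu\rangle\langle\vx^{(i)},\vu'\rangle\Big(\langle\vu,\vu'\rangle - \tfrac12\langle\vx^{(i)},\vu\rangle\langle\vx^{(i)},\vu'\rangle - \textstyle\sum_{j<i}\langle\vx^{(j)},\vu\rangle\langle\vx^{(j)},\vu'\rangle\Big).
\]
The factor of $D$ is exactly what cancels the $1/D$ in the truncation $\langle\vx^{(i)},\vu\rangle^2 \leq \kappa^2\sum_j\tau_j/D$ to produce the $D$-free exponents in the theorem's hypotheses; your scalar computation would leave a spurious $1/D$.

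Second, and more importantly, the $|\langle\vu,\vu'\rangle|$ appearing in the second hypothesis is \emph{not} obtained by algebraically manipulating $\langle\vx^{(j)},\vu\rangle\langle\vx^{(j)},\vu'\rangle$ as you propose. It is already present as a multiplicative factor in the matrix-vector likelihood ratio displayed above, because the mean shift under $\bbP_{\vu}$ points in the direction $\vu$, so the cross likelihood between $\bbP_{\vu}$ and $\bbP_{\vu'}$ naturally involves $\langle\vu,\vu'\rangle$. Neither your polarization identity nor the bound $|\langle\vx^{(j)},\vu\rangle||\langle\vx^{(j)},\vu'\rangle| \leq \kappa^2\|\mV_j^\intercal\vu\|_2\|\mV_j^\intercal\vu'\|_2$ produces this term; they only yield truncation-radius products, and ``expanding the resulting Gram inner products'' does not recover $\langle\vu,\vu'\rangle$. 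The paper's route is simply to substitute the conditioning bound $|\langle\vx^{(i)},\vu\rangle|\leq\kappa\sqrt{\tau_i/D}$ into the displayed exponent (this is \Cref{lem:conditioning-gives-chisquared-ratio}), which immediately gives $\lambda^2\kappa^2|\langle\vu,\vu'\rangle|\sum_i\tau_i + \frac{\lambda^2\kappa^4}{D}(\sum_i\tau_i)^2$.
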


In \Cref{app:formal-kronmatvec-lower}, we show how to use \Cref{thm:abstract-simchowitz-requirements} to lower bound Kronecker matrix-vector complexity.
In this section, we instead will show how \Cref{thm:abstract-simchowitz-requirements} follow from \cite{simchowitz2017gap}.
To start, we will use the notion of \emph{Truncated Probability Distributions} from \cite{simchowitz2017gap}.
\begin{definition}
Let \(\bbP\) be a probability measure.
Let \(A\) be an event.
Then, the truncated probability measure of \(\bbP\) with respect to \(A\) is defined by saying for all events \(B\),
\[
	P[B ; A] \defeq P[B \cap A]
\]
\end{definition}
This is not a probability distribution as its integral is less than 1 for any nontrivial event \(A\).
For discussion as to why the truncated probability distribution is helpful in proving information-theoretic lower bounds, see \cite{simchowitz2017gap,simchowitz2018tight}.
We will also need the idea of the marginal of truncated distributions.
\begin{definition}
    Let \cP be a distribution over a space \(\Theta\).
    Let \(\{\bbP_{\theta}\}_{\theta\in\Theta}\) be a family of probability measures on space \((\cX,\cF)\).
    For each \(\theta\), let \(A_\theta\) be an event on \cF.
    For any event \(B\) on \cF we can then define the \emph{marginal truncated distribution} \(\bar\bbP[\cdot;\bar A]\) as
    \[
        \bar\bbP[B;\bar A] \defeq \E_{\theta \sim \cP} \bbP_\theta[B;A_\theta].
    \]
    Notice that the total measure of \(\bar\bbP[\cdot,\bar A]\) is \(\bar\bbP[\cX;\bar A] = \E_{\theta\sim\cP}\bbP_\theta[\cX;A_\theta] = \Pr_{\theta\sim\cP}[A_\theta]\).
    Without truncation, we write \(\bar\bbP \defeq \bar\bbP[\cdot;\cX]\).
\end{definition}

We will concretely take \(\bbQ\) and \(\bar\bbP\) to be distributions over transcripts of matrix-vector products between \cA and \mA.
That is, we let \(\cZ_t \defeq (\vv^{(1)},\mA\vv^{(1)},\ldots,\vv^{(t)},\mA\vv^{(t)})\) be the transcript of \(t\) matrix-vector products.
Then, we take \bbQ to be the distribution of \(\cZ_t\) given \(i=0\), so that \(\mA = \frac1{\sqrt D}\mW\).
We then will take \(\vu\sim\cP\) as our prior distribution, so that \(\theta = \vu\).
Then, we let \(\bbP_{\vu}\) be the distribution of \(\cZ_t\) given both \(i=1\) and a fixed value of \(\vu\), so that \(\mA = \frac{1}{\sqrt D} \mW + \lambda\vu\vu^\intercal\) for a fixed \vu.
For any fixed \(\vu\), we will take our truncation event to be \(\mA_\theta = \cV_{\vu}^t \defeq \{(\vv^{(1)},\ldots,\vv^{(t)})\in\cV^t ~:~ \langle \vv^{(i)},\vu\rangle^2 \leq \frac{\tau_i}{D} ~ \forall i\in[t]\}\), using the constants \(0 \leq \tau_1 \leq \ldots \leq \tau_t\) as given in \Cref{thm:abstract-simchowitz-requirements}.
In words, this truncation set \(\cV_{\vu}^t\) is the set of all queries that fail to find nontrivial information about the vector \vu.
Lastly, this means that \(\bar\bbP\) is the marginal distribution of all the truncated distributions.
That is, \(\bar\bbP\) is the distribution of \(\cZ_t\) given \(i=1\) but not given any particular value of \(\vu\), and \(\bar\bbP[\cdot,\bar A]\) is \(\bar\bbP\) truncated to the cases where our algorithm has not computed any matrix-vector products that achieve nontrivial inner product with \vu.

Our main goal is to bound the total variation distance between \(\bbQ\) and \(\bar\bbP\).
\cite{simchowitz2017gap} bound this distance by truncating \(\bar\bbP\) and bounding both the probability of the truncated event not happening and the distance between \(\bbQ\) and the truncated \(\bar\bbP[\cdot;\bar A]\).
This is formalized by Proposition 6.1 from \cite{simchowitz2017gap}, whose proof has a fixable error.
We provide and prove the fixed version below:

\begin{importedlemma}[Proposition 6.1 of \cite{simchowitz2017gap}]
    \label{lem:corrected-truncated-dtv}
    Let \(\cP,\{\bbP_\theta\}_{\theta\in\Theta},\) and \(A_\theta\) define a marginal truncated distribution \(\bar\bbP[\cdot,\bar A]\) on \((\cX,\cF)\).
    Let \(\bbQ\) be a probability distribution on \((\cX,\cF)\).
    Then, letting \(p \defeq \bar \bbP[\cX; \bar A] = \Pr_{\theta\sim\cP}[A_\theta]\), we have
    \[
        D_{TV}(\bar \bbP, \bbQ) \leq \frac12 \sqrt{\E_{x \sim \bbQ}\left[\left(\frac{d\bar \bbP[x; \bar A]}{d\bbQ(x)}\right)^2\right]+1-2p} + \frac{1-p}{2}.
    \]
    In particular, if we have \(\E_{x\sim \bbQ}\left[\left(\frac{d\bar \bbP[x; \bar A]}{d\bbQ(x)}\right)^2\right] \leq 1+z\) and \(1-p < z\) for some \(z\in(0,1)\), then we can bound \(D_{TV}(Q, \bar P) \leq \sqrt{3z}\).
\end{importedlemma}
\begin{proof}
\emph{This proof is a close copy of the result in \cite{simchowitz2017gap}, but avoids errors in algebra.}
For ease of notation, let \(\bar \bbP_A := \bar \bbP[\cdot, \bar A]\).
Note that \(\bar \bbP - \bar \bbP_A \geq 0\), which implies that
\[
    \int \abs{d\bar \bbP - d \bar \bbP_A}
    = \int d\bar \bbP - d \bar \bbP_A
    = 1 - p
\]
so by the triangle inequality,
\begin{align*}
    &D_{TV}(\bbQ, \bar \bbP)
    = \frac12  \int \abs{d \bbQ(x) - d \bar \bbP(x)} \\
    \leq & \frac12 \int \abs{d \bbQ(x) - d \bar \bbP_A(x)} + \frac12 \int \abs{d \bar \bbP(x) - d \bar \bbP_A(x)} \\
    = & \frac12 \int \abs{d \bbQ(x) - d \bar \bbP_A(x)} + \frac{1-p}{2}.
\end{align*}
Next, since \(\bbQ\) is a probability measure,
\begin{align*}
    \int \abs{d \bbQ(x) - d \bar \bbP_A(x)}
    &= \E_{\bbQ}\abs{\tsfrac{d \bar \bbP_A(x)}{d\bbQ(x)} - 1}
    \leq \sqrt{\E_{\bbQ}\abs{\tsfrac{d \bar \bbP_A(x)}{d\bbQ(x)} - 1}^2} \\
    &= \sqrt{\E_{\bbQ}\abs{\tsfrac{d \bar \bbP_A(x)}{d\bbQ(x)}}^2 + 1 - 2 \bar P_A(\cX)}
    = \sqrt{\E_{\bbQ}\abs{\tsfrac{d \bar \bbP_A(x)}{d\bbQ(x)}}^2 + 1 - 2 p}.
\end{align*}
Combining what we've shown, we conclude the first result, that
\[
D_{TV}(\bbQ, \bar \bbP) \leq \frac12 \sqrt{\E_{\bbQ} \left|\frac{d \bar \bbP_A(x)}{d\bbQ(x)}\right|^2 + 1 - 2 p} + \frac{1-p}{2}.
\]
Now, we move onto the second result.
Directly substituting our values for \(z\) and \(1+z\), we get
\begin{align*}
    D_{TV}(\bbQ, \bar \bbP)
    &\leq \frac12 \sqrt{1+z + 1 - 2 p} + \frac{1-p}{2} \\
    &= \frac12 \sqrt{z + 2(1-p)} + \frac{1-p}{2} \\
    &\leq \frac12 \sqrt{z + 2z} + \frac{z}{2} \\
    &\leq \frac{1}2 \sqrt{3z} + \frac{1}{2}\sqrt{z} \\
    &\leq \sqrt{3z}
\end{align*}
\end{proof}

We next import the results that \cite{simchowitz2017gap} used to bound \(\E_{\cZ_t\sim \bbQ}\left[\left(\frac{d\bar \bbP[\cZ_t;\cV_{\vu}^t]}{d\bbQ(\cZ_t)}\right)^2\right]\) and \(1-p = 1-\Pr[\cV_{\vu}^t] = \Pr[\exists i \in [t] ~:~ \langle \vv^{(i)}, \vu\rangle^2 > {\tsfrac{\tau_i}{D}}]\).
Starting with \(1-p\), we import Theorem 5.3:

\begin{importedtheorem}[Theorem 5.3 from \cite{simchowitz2017gap}]
    \label{thm:simchowitz-iterative-chisquared-unrolled}
    Consider \Cref{setting:abstract-simchowitz}.
    Fix \(0 \leq \tau_1 \leq \ldots \leq \tau_t\).
    Let \(f(\tau)\) be the probability that the best possible blind guess for \(\vu\) using a vector in \(\cV^t\) achieves squared inner product at least \(\frac{\tau}D\):
    \[
		f(\tau) \defeq \sup_{(\vv^{(1)},\ldots,\vv^{(t)})\in\cV^t} \sup_{i\in[t]} \Pr[\langle \vv^{(i)}, \vu\rangle^2 > {\tsfrac{\tau}{D}}].
    \]
    Then, \(\Pr[\exists i \in [t] ~:~ \langle \vv^{(i)}, \vu\rangle^2 > {\tsfrac{\tau_i}{D}}]\) is at most
    \begin{align*}
        f(\tau_1) + 2\sum_{i=1}^t
        \E_{\vu\sim\cP}
        \left[\sqrt{
            f(\tau_i)
            \sup_{(\tilde\vv^{(1)},\ldots,\tilde\vv^{(t)})\in\cV^t_{\vu}}
            \prod_{j=1}^i \E\left[ \left(\frac{d\bbP_{\vu}(\mA\tilde\vv^{(j)} | \tilde\vv^{(1)},\ldots,\tilde\vv^{(t)})}{d\bbQ(\mA\tilde\vv^{(j)} | \tilde\vv^{(1)},\ldots,\tilde\vv^{(t)})}\right)^2 
            \mid
            \tilde\vv^{(1)},\ldots,\tilde\vv^{(t)}
            \right]
        } \right]
    \end{align*}
\end{importedtheorem}
This theorem exactly matches Theorem 5.3 from \cite{simchowitz2017gap} if we make two changes.
First, we do not yet apply the inequality (5.25) for reasons we will discuss momentarily.
Second, we change their set \(\cV_{\theta}^k\) into our set \(\cV_{\vu}^t\) in Lemma 5.2 so that we only consider queries that belong to \(\cV^t\) as opposed to arbitrary query vectors in \(\bbS^D\).
The proof of Lemma 5.2 does not change from this redefinition of \(\cV_{\vu}^t\).
Next, we must bound this big expectation that appears on the right hand side above.
In inequality (5.25), \cite{simchowitz2017gap} bounds this expectation under the assumption that \(\vv^{(1)},\ldots,\vv^{(t)}\) are orthonormal.
We cannot make this assumption, as an orthonormal basis for vectors in \(\cV^t\) might not belong to \(\cV^t\).
So, we rephrase Lemma C.3 from \cite{simchowitz2017gap}, making this orthonormality reduction explicit:
\begin{importedlemma}[Lemma C.3 from \cite{simchowitz2017gap}]
    \label{lem:orthonormal-chisquared-likelihood-ratio}
    Consider \Cref{thm:simchowitz-iterative-chisquared-unrolled}.
    Let \(\tilde\vx^{(1)},\ldots,\tilde\vx^{(t)}\) be orthonormal vectors such that \(\text{span}\{\tilde\vx^{(1)},\ldots,\tilde\vx^{(i)}\} = \text{span}\{\tilde\vv^{(1)},\ldots,\tilde\vv^{(i)}\}\) for all \(i\in[t]\).
    Then,
    \[
        \E\left[ \left(\frac{d\bbP_{\vu}(\mA\tilde\vv^{(j)} | \tilde\vv^{(1)},\ldots,\tilde\vv^{(t)})}{d\bbQ(\mA\tilde\vv^{(j)} | \tilde\vv^{(1)},\ldots,\tilde\vv^{(t)})}\right)^2 
        \mid
        \tilde\vv^{(1)},\ldots,\tilde\vv^{(t)}
        \right]
        \leq e^{\lambda^2 D\langle \vu, \tilde\vx^{(i)}\rangle^2}
    \]
\end{importedlemma}
\noindent
Applying this result to \Cref{thm:simchowitz-iterative-chisquared-unrolled}, we see that we need to upper bound the expression
\[
    \sup_{(\tilde\vv^{(1)},\ldots,\tilde\vv^{(t)})\in\cV^t_{\vu}}
    e^{\lambda^2 D\langle \vu,\tilde\vx^{(i)}\rangle^2}
\]
Unfortunately, it is not immediately obvious how to relate \(\langle \vu,\tilde\vx^{(i)}\rangle\) to \(\tilde\vv^{(1)},\ldots\tilde\vv^{(i)}\) or \(\tau_1,\ldots,\tau_i\).
In the non-Kronecker case, when \(\cV^t\) covers all vectors in \(\bbS^D\), we can take \(\tilde\vx^{(i)}=\tilde\vv^{(i)}\) without loss of generality.
However, if \(\cV^t\) covers Kronecker-structured query vectors, then we do not know what the worst-case relationship between these terms is.
So, we make an assumption on \(\cV^t\) to proceed.
In particular, we assume that \(\cV^t\) is well conditioned.
\begin{lemma}
    \label{lem:conditioning-gives-p}
    Consider \Cref{thm:simchowitz-iterative-chisquared-unrolled}.
    Under the assumption that \(\cV^t\) is \(\kappa\)-conditioned, we know that
    \[
    \Pr[\exists i \in [t] ~:~ \langle \vv^{(i)}, \vu\rangle^2 > {\tsfrac{\tau_i}{D}}]
        \leq
        f(\tau_1) + 2\sum_{i=1}^t
        e^{\frac{\lambda^2\kappa^2}{2}\kappa\sum_{j=1}^{i-1}\tau_j}
        \sqrt{f(\tau_i)}
    \]
\end{lemma}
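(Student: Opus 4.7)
The plan is to start from the bound in \Cref{thm:simchowitz-iterative-chisquared-unrolled} and deterministically control the product of chi-squared likelihood ratios that appears inside it, using the conditioning assumption to convert an orthonormal-basis projection norm into a bound involving the truncated inner products $\tau_j$.

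Step one is to apply \Cref{lem:orthonormal-chisquared-likelihood-ratio} termwise. This replaces each factor $\E[(d\bbP_\vu/d\bbQ)^2 \mid \cdot]$ by $e^{\lambda^2 D \langle \vu, \tilde\vx^{(j)} \rangle^2}$, where $\tilde\vx^{(1)}, \ldots, \tilde\vx^{(i)}$ is the Gram--Schmidt orthonormalization of $\tilde\vv^{(1)}, \ldots, \tilde\vv^{(i)}$. The key observation is that the product then collapses into a single exponential of a projection norm: $\prod_{j=1}^i e^{\lambda^2 D \langle \vu, \tilde\vx^{(j)} \rangle^2} = e^{\lambda^2 D \norm{\mP_i \vu}_2^2}$, where $\mP_i$ is the orthogonal projection onto the range of $\mV_i \defeq [\tilde\vv^{(1)} ~ \cdots ~ \tilde\vv^{(i)}]$.

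The second step converts $\norm{\mP_i \vu}_2^2$ into a bound in terms of the truncated $\tau_j$. Using $\mP_i = \mV_i (\mV_i^\intercal \mV_i)^{-1} \mV_i^\intercal$, I get $\norm{\mP_i \vu}_2^2 \leq \norm{\mV_i^\intercal \vu}_2^2 / \sigma_{\min}(\mV_i)^2$. The $\kappa$-conditioning assumption is on the full query matrix $\mV_t$, so I need to verify that the sub-matrix $\mV_i$ is also $\kappa$-conditioned. This follows from a zero-padding argument: any unit $x \in \bbR^i$ extends to a unit $x' = (x, 0, \ldots, 0) \in \bbR^t$ with $\norm{\mV_i x}_2 = \norm{\mV_t x'}_2$, so $\sigma_{\min}(\mV_i) \geq \sigma_{\min}(\mV_t)$ and $\sigma_{\max}(\mV_i) \leq \sigma_{\max}(\mV_t)$. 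Since the columns of $\mV_i$ are unit vectors, $\sigma_{\max}(\mV_i) \geq 1$, so $1/\sigma_{\min}(\mV_i)^2 \leq \kappa^2$. Hence $\norm{\mP_i \vu}_2^2 \leq \kappa^2 \sum_{j=1}^i \langle \tilde\vv^{(j)}, \vu \rangle^2$, and under the truncation event $\cV^t_\vu$ this is bounded by $(\kappa^2/D) \sum_{j=1}^i \tau_j$. Putting everything together, the full product is at most $e^{\lambda^2 \kappa^2 \sum_{j=1}^i \tau_j}$, whose square root is $e^{(\lambda^2 \kappa^2 / 2) \sum_{j=1}^i \tau_j}$. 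Substituting back and pulling this deterministic bound out of the expectation over $\vu$ collapses the $\E_{\vu}[\cdots]$ to $\sqrt{f(\tau_i)}$ times the exponential, yielding the claimed inequality (up to the off-by-one convention in the summation range).

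The main potential obstacle is the conditioning argument for $\mV_i$: the assumption is stated for the full matrix $\mV_t$, but the Simchowitz product naturally involves sub-matrices at every intermediate stage. The zero-padding argument handles this cleanly and is the only place where the uniform $\kappa$-conditioning of $\cV^t$ really gets used --- everything else is routine algebra on the Simchowitz template combined with the deterministic inner-product control provided by the truncation event.
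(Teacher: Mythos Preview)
Your proof is correct and follows essentially the same route as the paper: apply \Cref{lem:orthonormal-chisquared-likelihood-ratio} termwise, then use the $\kappa$-conditioning to convert the orthonormal inner products back to the original ones, and finally invoke the truncation $\langle \tilde\vv^{(j)},\vu\rangle^2 \leq \tau_j/D$. Your version is in fact a bit cleaner than the paper's in two respects: you collapse the product directly into $e^{\lambda^2 D\norm{\mP_i\vu}_2^2}$ and bound the whole projection norm at once (the paper instead invokes \Cref{lem:conditioning-to-ortho-inner-prod} on each $\langle\vu,\tilde\vx^{(j)}\rangle^2$ separately and is somewhat loose in the displayed intermediate step), and you explicitly verify via zero-padding that the leading submatrices $\mV_i$ inherit the $\kappa$-conditioning of $\mV_t$, a point the paper uses without comment.
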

\begin{proof}
From the definition of the conditioning of \(\cV^t\), we know that for all \((\tilde\vv^{(1)},\ldots,\tilde\vv^{(t)})\in\cV_{\vu}^t\) that \(\tilde\mV \defeq \bmat{\tilde\vv^{(1)} & \cdots & \tilde\vv^{(t)}}\) has condition number at most \(\kappa\).
Therefore, by \Cref{lem:conditioning-to-ortho-inner-prod}, we know that \(\langle \vu, \tilde\vx^{(i)} \rangle^2 \leq \kappa^2 \sum_{j=1}^i \langle \vu,\tilde\vv^{(j)}\rangle^2\).
By our definition of \(\cV_{\vu}^t\), we further know that \(\langle \vu,\tilde\vv^{(j)}\rangle^2 \leq \frac{\tau_j}{D}\).
So, we get that
\[
    \sup_{(\tilde\vv^{(1)},\ldots,\tilde\vv^{(t)})\in\cV^t_{\vu}}
    e^{\frac{\lambda^2}{2} D\langle \vu,\tilde\vx^{(i)}\rangle^2}
    \sqrt{f(\tau_i)}
    \leq
    e^{\frac{\lambda^2}{2} \kappa^2 \sum_{j=1}^i \tau_j}
    \sqrt{f(\tau_i)}
\]
Overall, going back to \Cref{thm:simchowitz-iterative-chisquared-unrolled}, we find that
\[
     \Pr[\exists i \in [t] ~:~ \langle \vv^{(i)}, \vu\rangle^2 > {\tsfrac{\tau_i}{D}}]
    \leq
    f(\tau_1) + 2\sum_{i=1}^t
    e^{\frac{\lambda^2\kappa^2}{2} \sum_{j=1}^i \tau_j}\sqrt{f(\tau_i)},
\]
completing the proof.
\end{proof}

This suffices to bound the term \(1-p\) in \Cref{lem:corrected-truncated-dtv}.
However, we still have do bound the expected squared likelihood ratio term.
Lemma C.3 from \cite{simchowitz2017gap} is analogous to \Cref{lem:orthonormal-chisquared-likelihood-ratio} but instead applies to this context:
\begin{importedlemma}[Lemma C.3 from \cite{simchowitz2017gap}]
    \label{lem:simchowitz-generic-chisquared-testing}
    Consider \Cref{thm:simchowitz-iterative-chisquared-unrolled}.
    Let \(\bar\bbP\) be the distribution of \(\cZ_t\) conditioned on \(i=1\) but not conditioned on a specific value of \(\vu\sim\cP\).
    Let \(\tilde\vx^{(1)},\ldots,\tilde\vx^{(t)}\) be orthonormal vectors such that \(\text{span}\{\tilde\vx^{(1)},\ldots,\tilde\vx^{(i)}\} = \text{span}\{\tilde\vv^{(1)},\ldots,\tilde\vv^{(i)}\}\) for all \(i\in[t]\).
    Then,
    \begin{align*}
        &\E_{\cZ_t\sim\bbQ} \left[ \left( \frac{d\bar\bbP[\cZ_t;\cV_{\vu}^t]}{d\bbQ(\cZ_t)} \right)^2 \right] \\
        &\leq
        \E_{\vu,\vu'\sim\cP}\left[
        \sup_{(\tilde\vv^{(1)},\ldots,\tilde\vv^{(t)}) \in \cV_{\vu}^t\cap\cV_{\vu'}^t}
        e^{D \lambda^2
            \sum_{i=1}^t
            \langle\tilde\vx^{(i)},\vu\rangle
            \langle\tilde\vx^{(i)},\vu'\rangle
            \left(
                \langle\vu,\vu'\rangle
                -
                \frac12
                \langle\tilde\vx^{(i)},\vu\rangle
                \langle\tilde\vx^{(i)},\vu'\rangle
                -
                \sum_{j=1}^{i-1}
                \langle\tilde\vx^{(j)},\vu\rangle
                \langle\tilde\vx^{(j)},\vu'\rangle
            \right)
        }\right]
    \end{align*}
\end{importedlemma}
Again, Lemma C.4 is not phrased exactly this way in \cite{simchowitz2017gap}.
This result follows from the proof of Lemma C.4 without substituting the final inequality on page 30.
In order to resolve the impact of orthonormality on this proof, we again appeal to conditioning:
\begin{lemma}
    \label{lem:conditioning-gives-chisquared-ratio}
    Consider \Cref{lem:simchowitz-generic-chisquared-testing}.
    Under the assumption that \(\cV^t\) is \(\kappa-\)conditioned, we know that
    \[
        \E_{\cZ_t\sim\bbQ} \left[\left(\frac{d\bar\bbP[\cZ_t;\cV_{\vu}^t]}{d\bbQ(\cZ_t)}\right)^2\right]
        \leq \E_{\vu,\vu'\sim\cP}\left[
            e^{
                \lambda^2 \kappa^2 \langle\tilde\vu,\vu'\rangle \sum_{i=1}^t \tau_i
                +
                \frac{\lambda^2 \kappa^4}{D} (\sum_{i=1}^t \tau_i)^2
        }\right]
    \]
\end{lemma}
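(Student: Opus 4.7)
The plan is to start from the bound in \Cref{lem:simchowitz-generic-chisquared-testing}, algebraically simplify the exponent, and then use the conditioning of $\cV^t$ together with the truncation constraints defining $\cV_{\vu}^t \cap \cV_{\vu'}^t$ to uniformly upper bound it by the target expression.

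First I would simplify the exponent. Writing $A_i \defeq \langle \tilde\vx^{(i)},\vu\rangle\langle \tilde\vx^{(i)},\vu'\rangle$ and $S \defeq \sum_{i=1}^t A_i$, the identity $\sum_i A_i^2 + 2\sum_{j<i} A_i A_j = S^2$ collapses the three terms in the exponent of \Cref{lem:simchowitz-generic-chisquared-testing} into
\[
D\lambda^2 \sum_{i=1}^t A_i\Bigl(\langle\vu,\vu'\rangle - \tfrac12 A_i - \sum_{j<i} A_j\Bigr) \;=\; D\lambda^2 \langle\vu,\vu'\rangle\, S \;-\; \tfrac{D\lambda^2}{2} S^2.
\]
Since the quadratic correction is non-positive, the exponent is at most $D\lambda^2 \langle\vu,\vu'\rangle S \le D\lambda^2 |\langle\vu,\vu'\rangle|\cdot |S|$.

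Next I would control $|S|$ via the conditioning assumption. Let $\mV \defeq [\tilde\vv^{(1)} ~\cdots~ \tilde\vv^{(t)}]$ and $\tilde\mX \defeq [\tilde\vx^{(1)} ~\cdots~ \tilde\vx^{(t)}]$, which by hypothesis is an orthonormal basis for the column span of $\mV$. Mimicking the SVD argument in \Cref{lem:conditioning-to-ortho-inner-prod}, write $\tilde\mX = \mV \mR^{-1}$ for an invertible $\mR$ that shares the singular values of $\mV$; since the columns of $\mV$ are unit vectors we have $\sigma_{\max}(\mV) \geq 1$, so $\|\mR^{-1}\|_2 = 1/\sigma_{\min}(\mV) \leq \kappa$. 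Therefore
\[
\|\tilde\mX^\intercal \vu\|_2^2 \;=\; \|\mR^{-\intercal} \mV^\intercal \vu\|_2^2 \;\leq\; \kappa^2 \|\mV^\intercal \vu\|_2^2,
\]
and analogously for $\vu'$. Because $(\tilde\vv^{(1)},\ldots,\tilde\vv^{(t)}) \in \cV_{\vu}^t \cap \cV_{\vu'}^t$, the truncation constraints give $\|\mV^\intercal \vu\|_2^2 = \sum_i \langle \tilde\vv^{(i)},\vu\rangle^2 \leq \sum_i \tau_i / D$, and similarly for $\vu'$. Cauchy--Schwarz then yields
\[
|S| \;=\; |\langle \tilde\mX^\intercal \vu,\, \tilde\mX^\intercal \vu'\rangle| \;\leq\; \|\tilde\mX^\intercal \vu\|_2 \|\tilde\mX^\intercal \vu'\|_2 \;\leq\; \frac{\kappa^2}{D}\sum_{i=1}^t \tau_i.
\]

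Combining the two steps, the exponent is bounded by $\lambda^2 \kappa^2 |\langle\vu,\vu'\rangle|\sum_i\tau_i$ uniformly over $(\tilde\vv^{(j)})\in \cV_{\vu}^t \cap \cV_{\vu'}^t$, so the supremum disappears. Adding the non-negative slack $\tfrac{\lambda^2 \kappa^4}{D}(\sum_i\tau_i)^2$ (only to match the form used downstream in \Cref{thm:abstract-simchowitz-requirements}) and taking expectation over $\vu,\vu'\sim\cP$ gives the claim. The main subtlety is conceptual rather than computational: since $\langle\vu,\vu'\rangle$ is signed but $|S|\le \kappa^2 T/D$ controls only the magnitude of $S$, one must drop the $-\tfrac{D\lambda^2}{2}S^2$ term and replace $\langle\vu,\vu'\rangle$ by $|\langle\vu,\vu'\rangle|$; this is why the lemma's inner product should be read with absolute values, consistent with how the bound is invoked in \Cref{thm:abstract-simchowitz-requirements} and \Cref{thm:kron-matvec-testing-lower}.
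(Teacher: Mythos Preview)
Your proof is correct and takes a genuinely different route from the paper's. The paper bounds the exponent of \Cref{lem:simchowitz-generic-chisquared-testing} term by term via the triangle inequality: it replaces each $\langle\tilde\vx^{(i)},\vu\rangle\langle\tilde\vx^{(i)},\vu'\rangle$ by its absolute value, uses the per-coordinate conditioning bound $|\langle\tilde\vx^{(i)},\vu\rangle|\le \kappa\sqrt{\tau_i/D}$ from \Cref{lem:conditioning-to-ortho-inner-prod}, and thereby obtains both the $\lambda^2\kappa^2|\langle\vu,\vu'\rangle|\sum_i\tau_i$ and the $\tfrac{\lambda^2\kappa^4}{D}(\sum_i\tau_i)^2$ contributions directly from the three pieces inside the parentheses. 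You instead first collapse the exponent exactly to $D\lambda^2\langle\vu,\vu'\rangle S-\tfrac{D\lambda^2}{2}S^2$ via the telescoping identity, discard the non-positive quadratic, and then bound $|S|$ in one shot by Cauchy--Schwarz together with the aggregate conditioning estimate $\|\tilde\mX^\intercal\vu\|_2\le\kappa\|\mV^\intercal\vu\|_2$. This yields only the first term, and you add the second one as harmless slack to match the stated lemma. Your argument is therefore slightly sharper (it shows the $\kappa^4$ term is not actually needed for this step) and arguably cleaner, while the paper's approach is more mechanical and makes the origin of the $\kappa^4$ term explicit. Both are valid; your algebra and your use of the conditioning bound on the full vector $\tilde\mX^\intercal\vu$ rather than coordinate-wise are correct.
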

\begin{proof}
    As in the proof of \Cref{lem:conditioning-gives-p}, we know that \(|\langle \tilde\vx^{(i)},\vu\rangle|\leq\kappa\frac{\sqrt{\tau_i}}{\sqrt{D}}\) and \(|\langle \tilde\vx^{(i)},\vu'\rangle|\leq\kappa\frac{\sqrt{\tau_i}}{\sqrt{D}}\).
    So, directly bounding the terms in \Cref{lem:simchowitz-generic-chisquared-testing},
    \begin{align*}
        &D\sum_{i=1}^t
        \langle\tilde\vx^{(i)},\vu\rangle
        \langle\tilde\vx^{(i)},\vu'\rangle
        \left(
            \langle\vu,\vu'\rangle
            -
            \frac12
            \langle\tilde\vx^{(i)},\vu\rangle
            \langle\tilde\vx^{(i)},\vu'\rangle
            -
            \sum_{j=1}^{i-1}
            \langle\tilde\vx^{(j)},\vu\rangle
            \langle\tilde\vx^{(j)},\vu'\rangle
        \right) \\
        &\leq
        D\sum_{i=1}^t
        \frac{\kappa^2 \tau_i}{D}
        \left(
            |\langle\vu,\vu'\rangle|
            +
            \frac{\kappa^2 \tau_i}{2D}
            +
            \sum_{j=1}^{i-1}
            \frac{\kappa^2 \tau_j}{D}
        \right) \\
        &\leq
        D\sum_{i=1}^t
        \frac{\kappa^2 \tau_i}{D}
        \left(
            |\langle\vu,\vu'\rangle|
            +
            \sum_{j=1}^{i}
            \frac{\kappa^2 \tau_j}{D}
        \right) \\
        &= \kappa^2 |\langle\vu,\vu'\rangle| \sum_{i=1}^t \tau_i 
        + \frac{\kappa^4}{D} \left(\sum_{i=1}^t \tau_i\right)^2
    \end{align*}
Which completes the proof by substituting this back into \Cref{lem:simchowitz-generic-chisquared-testing}:
\[
    \E_{\cZ_t\sim\bbQ} \left[ \left( \frac{d\bar\bbP[\cZ_t;\cV_{\vu}^t]}{d\bbQ(\cZ_t)} \right)^2 \right]
    \leq 
    \E_{\vu,\vu'\sim\cP}\left[
        e^{\lambda^2 \kappa^2 |\langle\vu,\vu'\rangle| \sum_{i=1}^t \tau_i 
        + \frac{\lambda^2 \kappa^4}{D} \left(\sum_{i=1}^t \tau_i\right)^2}\right]
\]
\end{proof}

We can now prove the overall lower bound \Cref{thm:abstract-simchowitz-requirements}.
\begin{proof}[Proof of \Cref{thm:abstract-simchowitz-requirements}]
We apply \Cref{lem:corrected-truncated-dtv} to the distribution \(\bar\bbP\) truncated to \(\cV_{\vu}^t\).
We get that
\[
    p
    = \bar\bbP[\cV_{\vu}^t]
    = \Pr_{\cZ_t\sim\bar\bbP}[\forall i \in [t] ~:~ \langle \vv^{(i)}, \vu\rangle^2 \leq {\tsfrac{\tau_i}{D}}]
\]
so that
\[
    1-p
    = \Pr_{\cZ_t\sim\bar\bbP}[\exists i \in [t] ~:~ \langle \vv^{(i)}, \vu\rangle^2 > {\tsfrac{\tau_i}{D}}].
\]
By \Cref{lem:conditioning-gives-p}, we therefore know that
\[
    1-p
    \leq f(\tau_1) + 2\sum_{i=1}^t
        e^{\frac{\lambda^2\kappa^2}{2}\kappa\sum_{j=1}^{i-1}\tau_j}
        \sqrt{f(\tau_i)}
\]
which we are told is at most \(z\).
We similarly know by \Cref{lem:conditioning-gives-chisquared-ratio} that
\[
    \E_{\cZ_t\sim\bbQ} \left[\left(\frac{d\bar\bbP[\cZ_t;\cV_{\vu}^t]}{d\bbQ(\cZ_t)}\right)^2\right]
        \leq \E_{\vu,\vu'\sim\cP}\left[
            e^{
                \lambda^2 \kappa \langle\tilde\vu,\vu'\rangle \sum_{i=1}^t \tau_i
                +
                \frac{\lambda^2 \kappa^2}{D} (\sum_{i=1}^t \tau_i)^2
        }\right]
\]
we are told is at most \(1+z\).
So, by \Cref{lem:corrected-truncated-dtv}, we complete the proof.
\end{proof}

\subsection{Unrolling Lemma}

Partially unrelated to the above, we will also need to mildly generalize a technical result from \cite{simchowitz2017gap} that helps handle adaptivity when resolving the adaptive lower bound against L2 estimation.
The following is very similar to Lemma C.2 from \cite{simchowitz2017gap}:
\begin{lemma}[Unrolling Lemma]
    \label{lem:unrolling-lemma}
    Let \(\bbP_a\), \(\bbP_b\), and \(\bbC\) be distributions over a random variable \(\cZ_t = (z_1, \cdots, z_t)\) for some arbitrary sample space \(z_i \in \Omega\).
    Let \(\cZ_i = (z_1, \cdots, z_i)\) for all \(i\in[t]\).
    Let \(\{A^i\}_{i\in[t]}\) be a sequence of events such that \(A^i\) is deterministic in \(\cZ_{i}\) and such that \(A^i \subseteq A^{i-1}\).
    Let \(g_i(\cZ_{i-1})\) be the expected likelihood ratio between our three distributions at timestep \(i\) given \(\cZ_{i-1}\):
    \[
        g_i(\cZ_{i-1}) = \E\left[
            \frac{d\bbP_a(z_i \mid \cZ_{i-1}) d\bbP_b(z_i \mid \cZ_{i-1})}{(d\bbQ(z_i \mid \cZ_{i-1}))^2} \big| \cZ_{i-1}
        \right].
    \]
    Then,
    \[
        \E\left[
            \frac{d\bbP_a(\cZ_t) d\bbP_b(\cZ_t)}{(d\bbQ(\cZ_t))^2}
            \mathbbm1_{[A^{t-1}]}
        \right]
        \leq
        \sup_{\cZ_t\in A^{t-1}}
        \prod_{i=1}^t
        g_i(\cZ_{i-1}).
    \]
\end{lemma}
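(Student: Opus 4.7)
The proof proceeds by peeling off one coordinate at a time, using the chain rule for likelihood ratios and the tower property of conditional expectation. First, I factor
\[
    \frac{d\bbP_a(\cZ_t)\,d\bbP_b(\cZ_t)}{(d\bbQ(\cZ_t))^2}
    \;=\; \prod_{i=1}^{t} L_i(\cZ_i),
    \quad
    L_i(\cZ_i) \defeq \frac{d\bbP_a(z_i\mid\cZ_{i-1})\,d\bbP_b(z_i\mid\cZ_{i-1})}{(d\bbQ(z_i\mid\cZ_{i-1}))^2},
\]
so that the left-hand side becomes $\E_{\cZ_t\sim\bbQ}\!\left[\prod_{i=1}^t L_i \cdot \mathbbm1_{[A^{t-1}]}\right]$. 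Note that by definition $g_i(\cZ_{i-1}) = \E_{\bbQ}[L_i(\cZ_i)\mid\cZ_{i-1}]$.

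\textbf{Inductive statement.} For $0\le k\le t$, define
\[
    G_{k+1}(\cZ_k) \defeq \sup_{\cZ_t\in A^{t-1},\; \cZ_t \supseteq \cZ_k}\;\prod_{j=k+1}^{t} g_j(\cZ_{j-1}),
\]
with $G_{t+1} \equiv 1$. I claim, by downward induction on $k$, that
\[
    E \;\leq\; \E_{\cZ_k\sim\bbQ}\!\left[\prod_{i=1}^{k} L_i \cdot \mathbbm1_{[A^{k-1}]} \cdot G_{k+1}(\cZ_k)\right],
\]
adopting the convention $\mathbbm1_{[A^{-1}]} \equiv 1$. The base case $k=t$ is an equality by construction. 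For the step from $k$ down to $k-1$, I apply the tower property to split the outer expectation as $\E_{\cZ_{k-1}}\E_{z_k\mid\cZ_{k-1}}$. The factors $\prod_{i=1}^{k-1} L_i$ and $\mathbbm1_{[A^{k-1}]}$ are measurable in $\cZ_{k-1}$ (using that $A^{k-1}$ is deterministic in $\cZ_{k-1}$), so they pull outside. The remaining inner integrand is $L_k\cdot G_{k+1}(\cZ_k)$, in which $G_{k+1}$ still depends on $z_k$; I upper bound it by enlarging the supremum set to all $\cZ_t\supseteq\cZ_{k-1}$, which removes the $z_k$ dependence and yields a factor $G_{k+1}^{\max}(\cZ_{k-1})$ independent of $z_k$. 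Then $\E_{\bbQ}[L_k\mid\cZ_{k-1}]=g_k(\cZ_{k-1})$, and $g_k(\cZ_{k-1})\cdot G_{k+1}^{\max}(\cZ_{k-1}) = G_k(\cZ_{k-1})$ because $g_k(\cZ_{k-1})$ is a constant with respect to the sup. Finally, the nesting $A^{k-1}\subseteq A^{k-2}$ gives $\mathbbm1_{[A^{k-1}]}\le\mathbbm1_{[A^{k-2}]}$, completing the step.

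\textbf{Conclusion.} Taking $k=0$ leaves $E\le \E_{\cZ_0}[G_1(\cZ_0)] = G_1 = \sup_{\cZ_t\in A^{t-1}}\prod_{i=1}^t g_i(\cZ_{i-1})$, which is exactly the claimed bound.

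\textbf{Main obstacle.} The only subtle point is handling the supremum correctly so it compounds into a single outer sup rather than a product of per-step suprema (which would give a strictly weaker bound). This is resolved by the definition of $G_{k+1}(\cZ_k)$ as a sup over \emph{all} extensions $\cZ_t\supseteq \cZ_k$ lying in $A^{t-1}$, together with the observation that $g_k(\cZ_{k-1})$ can be factored out of the sup because it depends only on $\cZ_{k-1}$, not on the indices $j\ge k+1$ being optimized over. Everything else is a routine application of the chain rule, tower property, and monotonicity of the events $A^i$.
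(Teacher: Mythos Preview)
Your proof is correct and follows essentially the same approach as the paper's: both arguments factor the likelihood ratio via the chain rule, introduce a ``tail supremum'' function $G$ over extensions of the current prefix lying in the truncation event, and run a downward induction using the tower property together with the nesting $A^{k-1}\subseteq A^{k-2}$. Your indexing (with $G_{k+1}(\cZ_k)$ and explicit product from $k+1$ to $t$) is if anything cleaner than the paper's, but the substance is identical.
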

\begin{proof}
    We start by defining the tail set \(B^i(\cZ_i)\) as the set of all \(\tilde z_{i+1},\cdots, \tilde z_t\) such that \((z_1,\cdots,z_i,\tilde z_{i+1},\cdots,\tilde z_t) \in A^t\).
    We will also define
    \[
        G_i(\cZ_{i}) \defeq \sup_{\tilde\cZ_t\in B^i(\cZ_i)} \prod_{j=i+2}^t g_{j}(\tilde\cZ_{j-1}).
    \]
    Notice that \(G_0(\cZ_0) = \sup_{\cZ_t\in A^t} \prod_{i=2}^t g_i(\cZ_{i-1})\), and take \(G_{t-1}(\cZ_{t-1}) \defeq 1\).
    Further, notice that for any \(\cZ_{i-1}\) where the event \(A^{i-1}\) holds,
    \begin{align*}
        G_{i-1}(\cZ_{i-1}) g_i(\cZ_{i-1})
        \leq \sup_{\tilde\cZ_t\in B^{i-2}(\cZ_{i-2})} G_{i-1}(\tilde\cZ_{i-1}) g_i(\tilde\cZ_{i-1})
        = G_{i-2}(\cZ_{i-2}).
    \end{align*}
    Then, for any \(i\in[t]\), we use tower rule to expand
    \begin{align*}
        &\E\left[
            \frac{d\bbP_a(\cZ_i) d\bbP_b(\cZ_i)}{(d\bbQ(\cZ_i))^2}
            \mathbbm1_{[A^{i-1}]}
            G_{i-1}(\cZ_{i-1})
        \right] \\
        &= \E\left[ \E\left[
            \frac{d\bbP_a(\cZ_i) d\bbP_b(\cZ_i)}{(d\bbQ(\cZ_i))^2}
            \mathbbm1_{[A^{i-1}]}
            G_{i-1}(\cZ_{i-1})
        \mid \cZ_{i-1} \right]\right] \\
        &= \E\left[ \E\left[
            \frac{d\bbP_a(\cZ_i \mid \cZ_{i-1}) d\bbP_b(\cZ_i \mid \cZ_{i-1})}{(d\bbQ(\cZ_i \mid \cZ_{i-1}))^2}
            \frac{d\bbP_a(\cZ_{i-1}) d\bbP_b(\cZ_{i-1})}{(d\bbQ(\cZ_{i-1}))^2}
            \mathbbm1_{[A^{i-1}]}
            G_{i-1}(\cZ_{i-1})
        \mid \cZ_{i-1} \right]\right] \\
        &= \E\left[
            \frac{d\bbP_a(\cZ_{i-1}) d\bbP_b(\cZ_{i-1})}{(d\bbQ(\cZ_{i-1}))^2}
            \mathbbm1_{[A^{i-1}]}
            G_{i-1}(\cZ_{i-1})
            \E\left[
            \frac{d\bbP_a(\cZ_i \mid \cZ_{i-1}) d\bbP_b(\cZ_i \mid \cZ_{i-1})}{(d\bbQ(\cZ_i \mid \cZ_{i-1}))^2}
        \mid \cZ_{i-1} \right]\right] \\
        &= \E\left[
            \frac{d\bbP_a(\cZ_{i-1}) d\bbP_b(\cZ_{i-1})}{(d\bbQ(\cZ_{i-1}))^2}
            \mathbbm1_{[A^{i-1}]}
            G_{i-1}(\cZ_{i-1})
            g_i(\cZ_{i-1}) \right] \\
        &\leq \E\left[
            \frac{d\bbP_a(\cZ_{i-1}) d\bbP_b(\cZ_{i-1})}{(d\bbQ(\cZ_{i-1}))^2}
            \mathbbm1_{[A^{i-1}]}
            G_{i-2}(\cZ_{i-2})\right] \\
        &\leq \E\left[
            \frac{d\bbP_a(\cZ_{i-1}) d\bbP_b(\cZ_{i-1})}{(d\bbQ(\cZ_{i-1}))^2}
            \mathbbm1_{[A^{i-2}]}
            G_{i-2}(\cZ_{i-2})\right]
    \end{align*}
So, by induction, we find that
\[
    \E\left[
        \frac{d\bbP_a(\cZ_t) d\bbP_b(\cZ_t)}{(d\bbQ(\cZ_t))^2}
        \mathbbm1_{[A^{t-1}]}
    \right]
    = \E\left[
        \frac{d\bbP_a(\cZ_t) d\bbP_b(\cZ_t)}{(d\bbQ(\cZ_t))^2}
        \mathbbm1_{[A^{t-1}]}
        G_{t-1}(\cZ_{t-1})
    \right]
    \leq \E\left[
        \frac{d\bbP_a(\cZ_1) d\bbP_b(\cZ_1)}{(d\bbQ(\cZ_1))^2}
        \mathbbm1_{[A^{0}]}
        G_{0}(\cZ_{0})
    \right].
\]
We take \(A^{0}\) to be the whole space, so \(\mathbbm1_{[A^{0}]}=1\).
Also, \(G_{0}(\cZ_{0}) = \sup_{\cZ_t\in A^t} \prod_{i=2}^t g_i(\cZ_{i-1})\).
So, we find
\[
    \E\left[
        \frac{d\bbP_a(\cZ_t) d\bbP_b(\cZ_t)}{(d\bbQ(\cZ_t))^2}
        \mathbbm1_{[A^{t-1}]}
    \right]
    \leq 
    \left(\sup_{\cZ_t\in A^t} \prod_{i=2}^t g_i(\cZ_{i-1})\right)
    \E\left[\frac{d\bbP_a(\cZ_1) d\bbP_b(\cZ_1)}{(d\bbQ(\cZ_1))^2}\right]
    = \left(\sup_{\cZ_t\in A^t} \prod_{i=1}^t g_i(\cZ_{i-1})\right),
\]
completing the proof.
\end{proof}

\end{document}